\numberwithin{equation}{section}
\newtheorem{definition}{Definition}[section]
\newtheorem{theorem}[definition]{Theorem}
\newtheorem{proposition}[definition]{Proposition}
\newtheorem{lemma}[definition]{Lemma}
\newtheorem{corollary}[definition]{Corollary}
\newtheorem{remark}[definition]{Remark}
\newcommand{\X}[1]{(X_{#1})}
\newcommand{\qsp}[2]{\,\ensuremath{\raise.5ex\hbox{$#1$}\big\slash\raise-.5ex\hbox{$#2$}}} 
\newcommand{\pard}[2]{\frac{\delta#1}{\delta#2}}
\newcommand{\intl}{\int\limits}
\newcommand{\tom}{\widetilde{\omega}}
\newcommand{\bom}{\boldsymbol{\omega}}
\newcommand{\bem}{\mathbf{e}}
\newcommand{\tem}{\underline{\mathbf{e}}}
\newcommand{\binv}{[\underline{\mathbf{e}}^{-1}]}
\newcommand{\bA}{\mathbf{A}}
\newcommand{\bt}{\mathbf{t}}
\newcommand{\bg}{\mathbf{\Gamma}}
\newcommand{\Wedge}[1]{{\textstyle \bigwedge^{#1}}}
\title[Classical Palatini--Cartan--Holst]{The reduced phase space of Palatini--Cartan--Holst theory}
\author{A. S. Cattaneo}
\address{Institut f\"ur Mathematik, Winterthurerstrasse 190, 8057 Z\"urich, Switzerland}
\email{cattaneo@math.uzh.ch}
\author{M. Schiavina}
\address{Department of Mathematics, University of California, Berkeley, 970 Evans Hall, Berkeley 94720, U.S.A.}
\email{michele.schiavina@berkeley.edu}
\date{\today}
\thanks{This research was (partly) supported by the NCCR SwissMAP, funded by the Swiss National Science Foundation and by the COST Action MP1405 QSPACE, supported by COST (European Cooperation in Science and Technology).
A. S. C. acknowledges partial support of SNF grant No. 200020\_172498$\slash$1. M. S. is supported by SNF grant No. P2ZHP2\_164999. }
\begin{document}

\begin{abstract}
General relativity in four dimensions can be reformulated as a gauge theory, referred to as Palatini--Cartan--Holst theory. This paper describes its reduced phase space using a geometric method due to Kijowski and Tulczyjew and its relation to that of the Einstein--Hilbert approach. 
\end{abstract}

\maketitle

\tableofcontents

\section{Introduction}
General Relativity (GR) is defined in terms of the metric tensor and of the Einstein--Hilbert (EH) action functional. A (classically) alternative way of formulating it, which has the advantage of being a gauge theory, follows from the observation that the dynamical metric may be expressed in terms of a fixed reference metric via a dynamical (co)frame field \cite{Cartan}.  We will refer to this version of GR as Palatini--Cartan--Holst (PCH) theory as detailed below, in a discussion about nomenclature.

The reduced phase space of a theory is the space of possible initial conditions endowed with its natural symplectic structure\footnote{Traditionally, the reduced phase space is defined as the space of solutions endowed with its natural symplectic structure. If the theory is formulated on a manifold of the form $\Sigma\times[a,b]$ and $\Sigma$ is a Cauchy surface, this is the same as the space of possible initial conditions on $\Sigma$. We use a more general definition where $\Sigma$ is not necessarily Cauchy. In particular by initial conditions we mean conditions for which there is a, possibly non unique, local, but not necessarily global, solution.}. For example, in the usual case of mechanics on a target manifold $M$, it is $T^*M$ with its canonical symplectic structure. In the case of electromagnetism in four dimensions, one starts with a phase space in which the conjugate variables are the vector potential and the electric field on the initial $3$-surface with symplectic form induced from their pairing. The reduced phase space is then given by the solution to the Gauss law (vanishing of the divergence of the electric field) modulo gauge transformations. In the case of GR, in space--time dimension greater than two, one starts with a phase space presented as the cotangent bundle of the space of metrics on the initial space-like hypersurface. The reduced phase space is then obtained as the solutions to the 
so-called energy and momentum constraints modulo diffeomorphisms, both tangential to the space hypersurface and transversal to it.

One well-known method to obtain the reduced phase space, which works well in the above examples, is due to Dirac \cite{Dirac}. The literature is also  full of attempts to apply this method to  PCH theory.

In this paper, we will study the reduced phase space of PCH theory using instead a geometric method introduced by Kijowski and Tulczyjew \cite{KT} (which also has the advantage of being compatible with the BV-BFV formalism introduced in \cite{CMR1,CMRCorfu}). We will show that, under the assumption that the induced boundary metric is non degenerate, the reduced phase space can be nicely described and corresponds indeed to that of the EH formulation, which has two local degrees of freedom (interpreted as the two possible polarizations of the graviton). Note that this assumption is just an open condition on the space of bulk co-frame fields. We do not compute the reduced phase space without this assumption, but a result proved below suggests that in the case of a light-like boundary the reduction should have no local degrees of freedom.

In a nutshell\footnote{In this Introduction, for simplicity we do not present the extension by Holst depending also on the inner dual of the curvature, which is discussed in details in the rest of the paper.} our result is as follows. We start with the action functional
\[
S[e,\omega]=\int\limits_M \mathrm{Tr}\left[ e\wedge e \wedge F_\omega + \frac{\Lambda}{4} e^4\right],
\]
where $M$ is a four-dimensional manifold with boundary (which admits Lorentzian structures) endowed with a rank-four vector bundle isomorphic to $TM$ with a reference fibre metric, $e$ a tetrad, $\omega$ an orthogonal connection and $\Lambda$ the cosmological constant. We assume that also the boundary restriction of the metric induced by $e$ is nondegenerate. Then our result is that the reduced phase space is obtained by coisotropic reduction in the symplectic space consisting of the  space of boundary tetrads, denoted by $\mathbf{e}$, and of boundary connections modulo the action of $e\wedge\cdot$ (Theorem \ref{Theo:PCHClassical}). Denote by $\bom=[\omega]_e$ the respective equivalence class, the symplectic structure $\omega^\partial=\delta\alpha^\partial$ reads
\[
\alpha^{\partial}=\frac12\int\limits_{\partial M}\mathrm{Tr}\left[\mathbf{e}\wedge\mathbf{e}\wedge\delta\bom \right].
\]

We show that this reduction is equivalent to the space of boundary tetrads and connections $(e,\omega)$ satisfying the structural constraint $p d_{\omega} e = 0$, where $p$ is the projection (relying on an irrelevant choice of complement) to the space of forms $\beta$ satisfying $e \wedge \beta = 0$, and using this description we are able to prove that the constraints defining the coisotropic submanifold are
\[
\mathbf{L}_\alpha=\intl_{\partial M} \mathrm{Tr}[\alpha\, \mathbf{e}\wedge d_{\bom}\mathbf{e}],\qquad 
\mathbf{J}_\mu = \intl_{\partial M}\mathrm{Tr}\left[ \mu \left(\bem\wedge F_{\bom} + \Lambda\bem^3\right)\right],
\]
where $\alpha$ and $\mu$ are Lagrange multipliers (Theorem \ref{constraintstheorem} and Corollary \ref{constraintscorollary}). The reader should not be misled by the apparent simplicity of the constraints that seem to look like the restriction to the boundary of the Euler--Lagrange equations: the crucial point here is that the reduced connection appears. One helpful way of looking at this is to consider the equation $d_{\omega}e =0 $ coming from the bulk Euler--Lagrange equations and split it into the structural constraint $pd_{\omega} e=0$, and the residual constraint $\mathbf{L}_\alpha$. The nontrivial part of the proof consists in showing this result (which moreover holds only under the assumption that the boundary metric is nondegenerate). 

The constraints $\mathbf{L}_\alpha$ are an equivariant momentum map for the action of internal gauge transformation. The corresponding Marsden--Weinstein \cite{MW} reduction yields the space of boundary fields for EH theory and the constraints $\mathbf{J}_\mu$ descend there to the usual energy and momentum constraints (Theorem \ref{PCHtoEH}). A different partial reduction using only half of the  $\mathbf{L}_\alpha$ constraints leads to Ashtekar's formulation in terms of tetrads and connections for the boundary orthogonal group.

To summarize, note that EH and PCH theories are equivalent \emph{on shell}, that is, on the critical locus of the action, where the equations of motion are solved, and up to the respective symmetries. Here we show that their boundary structures are also equivalent. The next step will be that of understanding their possible equivalence at the BV-BFV level, a project that we will start in
\cite{CS2} based on results from \cite{CS1}.

Finally, in Section \ref{clhalfshell}, we study a variant of the PCH action functional that enforces the compatibility between the connection and the co-frame field, fixing the former to be the Levi--Civita connection for $g=e^*\eta$ by means of a Lagrange multiplier. The classical phase space of the resulting theory will be a cotangent bundle, and it will be symplectomorphic to the space of boundary fields of PCH theory, yet the boundary structure will turn out to be different (and arguably ill-behaved), unless vanishing boundary conditions on the Lagrange multiplier are forced from the very beginning. As a matter of fact, the projection of the Euler--Lagrange equations to the boundary will be isotropic but not Lagrangian, and inequivalent to that of PCH. 
This shows that equivalent theories on closed manifolds without boundary may differ when boundaries are included, and will call for a refinement of the notion of classical equivalence.

\begin{remark}
In this note we focus on the physical case of Lorenzian signature. The Euclidean case follows automatically, with no restrictions on the boundary values of the co-frame fields, as the boundary metric is automatically nondegenerate.
\end{remark}

\begin{remark}
The co-frame field is required to be nondegenerate to establish the equivalence between the Euler--Lagrange loci modulo symmetries in the Palatini--Cartan and in the Einstein--Hilbert formulations. Unlike the three-dimensional case, where this nondegeneracy condition can be removed leading to a generalization of GR \cite{CSS}, we will see that in the four-dimensional case this condition is also necessary to make the boundary space of fields well-behaved and therefore must be maintained.
\end{remark}

\subsection*{Nomenclature}
In this work we refer to a field theory that carries the names of Cartan, Palatini and Holst. This theory describes General Relativity as a model for the gravitational interaction, a priori different from the original Einstein--Hilbert metric formulation.

The (controversial) origin of this name can be traced back to the (arguably) historically incorrect way that Palatini was attributed the idea of considering the connection as an independent field, as thoroughly discussed in \cite{FFR, BH}. In \cite{Palatini} it was observed that the variation of the Ricci tensor can be written in terms of the variation of the Christoffel symbols. However, the paper was interpreted as suggesting independence from the metric field, without ever suggesting it explicitly. 

Instead, the key observation came from Cartan and Einstein \cite{Cartan,Einstein}, and was later explained through the powerful language of moving frames and the tetrad formalism by several authors, most notably by Kibble and Sciama \cite{Kib, Sci} (see \cite{Wise} for a modern account on this topic).  This is usually referred to as Einstein--Cartan--Sciama--Kibble (ECSK) theory.

In principle, ECSK theory is more general than Einstein--Hilbert in that it admits nonvanishing torsion and coupling with spin matter but, as we are not concerned with coupling to matter in the present paper, we will not appreciate that feature. This has the advantage of deriving the corrected Bianchi identities in the presence of spin matter from the Euler--Lagrange equations.

However, we understand the ECSK theory as a \emph{metric} variational problem where the torsion field is considered to be a dynamical correction to the Levi--Civita connection. Although it can be phrased in terms of tetrad fields and principal connections, we wish to emphasise that formulations that are equivalent on-shell and up to possible boundary terms, need not be equivalent when such requirements are relaxed.

Moreover, it is common practice in the literature to refer to the formula that links the variation of the Ricci tensor to the variation of the Christoffel symbols as the \emph{Palatini identity}, and the formulation of GR that we will analyse here is also commonly referred to as \emph{Palatini gravity}. 

In this paper we adopt this convention in view of this (perhaps erroneous) tradition, but we credit Cartan at the same time. In this spirit we use the name Palatini--Cartan, which then should be understood as meaning \emph{tetrad formalism and independent principal $\mathfrak{so}(3,1)$ connection\footnote{Or $\mathfrak{so}(4)$ in the Euclidean version.}}.

Finally, the addition of the name of Holst will be justified in section \ref{Holstname}, where we will consider a generalisation of the Palatini--Cartan action, due to the fact that $\mathfrak{so}(3,1)$ admits two invariant inner products. As a matter of fact, the action functional corresponding to EH in the Palatini--Cartan formalism in four space-time dimensions that takes into account the Barbero--Immirzi parameter [Bar, Imm], is due Holst \cite{Holst} (see equation (3.1)), and originally observed in \cite{HMS} for metric gravity with torsion (Einstein--Cartan theory).

\subsection*{Acknowledgements} We thank Giovanni Canepa for several constructive discussions, and Friedrich Hehl for valuable comments about the controversy in the nomenclature. We thank G. Canepa and the anonymous referee for having found flaws in previous versions of the article.

\section{Geometric theory of boundary data}\label{Sect:Constraintstheory}

A classical field theory is the assignment of a space of fields $F_M$ and a local action functional $S_M$ to a manifold with boundary $(M,\partial M)$. Usually $F_M$ is taken to be some mapping space, or space of sections of a vector bundle or a sheaf. Integration by parts in the computation of the variation $\delta S_M$ defines a one-form $\widetilde{\alpha}_{\partial M}$ on the space of restrictions of fields (and jets) to the boundary $\widetilde{F}_{\partial M}$: i.e. $\delta S_M=\mathsf{el} + \widetilde{\pi}_{\partial M}^*\widetilde{\alpha}_{\partial M}$ (the vanishing locus of $\mathsf{el}$ is the Euler--Lagrange locus), and there is a surjective submersion $F_M \xrightarrow{\widetilde{\pi}_M}\widetilde{F}_{\partial M}$.

In this setting, we can also consider the space of fields associated with the collar $\partial M\times[0,\epsilon]$. Such space $F_{\partial M\times[0,\epsilon]}$ also maps to the space of pre-boundary fields $\widetilde{F}_{\partial M}$. Assuming that the kernel of the two form $\widetilde{\omega}_{\partial M}\coloneqq \delta\widetilde{\alpha}$ is regular (a subbundle of $T\widetilde{F}_{\partial M}$) and that the quotient by $\mathrm{ker}(\widetilde{\omega}_{\partial M})$ is smooth, we can construct the \emph{true} space of boundary fields:
\begin{equation}
F^\partial_{\partial M} \coloneqq \qsp{\widetilde{F}_{\partial M}}{\mathrm{ker}(\widetilde{\omega}_{\partial M})}.
\end{equation}

The critical locus of the action functional in the bulk is denoted by $EL_{M}$, and if we denote by $\pi_M\colon F_M\longrightarrow F^\partial_{\partial M}$ the canonical projection, we can construct $L_M\coloneqq \pi_M(EL_M)$. In order for the classical theory to be well defined one usually requires the projection $L_{\partial M\times[0,\epsilon]}$ of the critical locus (associated to a suitable collar) to be a Lagrangian submanifold of the space of boundary fields $F^\partial_{\partial M}$. This is to allow the existence of solutions to the Cauchy problem after a choice of a suitable boundary condition: another Lagrangian submanifold $L\subset F_{\partial M}$ that should be transversal to $L_M$. 

If we further define $C_{\partial M}\subset F_{\partial M}^\partial$ to be the space of pairs of boundary fields that can be completed to a solution in $L_{\partial M\times[0,\epsilon]}$, we can conclude that $C_{\partial M}$ must be a coisotropic submanifold \cite{CMRCorfu}.

In this paper we will analyse the submanifold $C_{\partial M}$ associated with Palatini--Cartan--Holst theory of gravity as given by the vanishing locus of some local functionals on the space of fields. One could then run the Batalin--Fradkin--Vilkovisky (BFV \cite{BFV1,BFV2}) machinery to replace the reduction of said coisotropic submanifold with the associated BFV-complex \cite{Schaetz09, Schaetz10}. Moreover, one could perform a Batalin-Vilkovisky (BV \cite{BV81}) analysis the PCH theory in the bulk and study whether the BV-BFV axioms are satisfied \cite{CMR1}. This will be done by the authors in \cite{CS2}.

\section{General Relativity in the Palatini--Cartan--Holst formalism}\label{Sect:tetrad}
The Einstein--Hilbert theory of gravity does not strictly look like a gauge theory, as it is not manifestly a theory of connections, unlike electromagnetism or chromodynamics or the standard model of particle physics. Nevertheless, there is a classically-equivalent formulation of GR as a theory of principal connections, an action functional that produces a set of equations of motion that can be reduced to Einstein's equations, and yet it is different from a structural point of view. 

The setting is as follows, consider the principal fibre bundle of (co-) frames on $M$, with the natural action of $\mathsf{SO}(3,1)$ on it\footnote{We assume that $M$ is orientable and can carry a Lorentzian structure.}. The dynamical fields are the co-frame field $e\colon TM\stackrel{\sim}{\longrightarrow} \mathcal{V}$ (also called vierbein, in four dimensions), with $\mathcal{V}$ a vector bundle on $M$ with a reference smooth choice of Minkowski metric $\eta$  and of orientation on each fibre $V$, and a connection $\omega$ on the principal $\mathsf{SO}(3,1)$ bundle $P$ associated to $\mathcal{V}$, i.e. locally on an open $U\subset M$, $\omega|_U\colon U\longrightarrow \mathfrak{so}(3,1)$. Note that we require $e$ to be an isomorphism. 

\begin{remark}

Generalisations of this construction have been considered (for instance) by Floreanini and Percacci \cite{FloPer}, where the internal group is $\mathsf{GL}(4)$ and no compatibility is required between a principal connection $\omega$ and either the co-frame $e$ (torsionless condition) or an internal metric $\kappa$ (metricity condition), which is to be interpreted as a dynamical generalisation of the Minkowski metric. A metric field $g$ and a connection $A$ on the tangent bundle are then obtained by pulling back the internal metric and connection respectively. This construction reduces to the usual Palatini--Cartan approach by requiring either the metricity or torsionless conditions and recovering the other one through the field equations.
\end{remark}

One can consider $\bigwedge^2V$-valued local connections, using the isomorphism
\begin{equation}\label{e:SPCH}
\Wedge{2}V\stackrel{\sim}{\longrightarrow}\mathfrak{so}(3,1)
\end{equation}
induced by $\eta$, which maps the basis $e_i\wedge e_j$ to a basis of matrices $t^i_j$ of the Lie algebra by raising/lowering indices. In this setting, the theory is fully described by the \emph{Palatini--Cartan action functional}:
\begin{equation}\label{Pal}
S_{\text{PC}}=\int\limits_M \mathrm{Tr}\left[ e\wedge e \wedge F_\omega + \frac{\Lambda}{4} e^4\right],
\end{equation}
where by $\mathrm{Tr}\colon \bigwedge^4 V\longrightarrow \mathbb{R}$ we denote the volume form in $\bigwedge^4 V$ normalised such that $\mathrm{Tr}(u_i\wedge u_j \wedge u_k\wedge u_l) = \epsilon_{ijkl}$, with $\{u_i\}_{i=1}^4$ an $\eta$-orthonormal basis in $V$, and $\Lambda$ is the cosmological constant\footnote{Note that the choice of orientation of $V$ and the orientability of $M$ are not really necessary, as a top form with values in $\Wedge{4}V$ is a density and can be integrated on $M$ without any further choice.}.  

To fix the notation we will assume throughout that $d_\omega\phi=d\phi + [\omega,\phi]$ where $\phi$ is any $\mathfrak{g}$-module valued field (i.e. we use the bracket notation to denote \emph{any} Lie-algebra action), and the curvature form reads $F_\omega=d\omega + \frac12[\omega,\omega]$.

The Euler--Lagrange equations for the associated variational problem yield, at the same time, Einstein's equation and the compatibility condition between $\omega$ and $e$. The latter condition imposes that covariant derivatives be taken w.r.t. the Levi--Civita connection. More explicitly, the Euler--Lagrange equations for \eqref{Pal} read
\begin{subequations}\label{classconstraints}\begin{align}\label{eqhalfshell}
d_\omega(e\wedge e)  & = 0 \\\label{Einsteq}
e\wedge F_\omega + \Lambda e^3  & =0. 
\end{align}\end{subequations}
Observe that Equation \eqref{eqhalfshell} is equivalent to $d_\omega e=0$ in the bulk, while equation $e\wedge F_\omega=0$ is rewritten as $\sum_i[F_\omega]_{\mu \nu}^{\mu\rho}=0$ with $\mu,\nu,\rho =1\dots 4$ indices of the basis $\{e_\mu\}$. 
\begin{remark}
Notice that we assume that $e$ is an isomorphism, and in this case Equations \eqref{classconstraints} describe the same geometro-dynamics of the Einstein-Hilbert variational problem, up to gauge equivalence. Indeed, Eq. \eqref{Einsteq} is Einstein's field equation for the metric $g=e^*\eta$ when $\omega=\omega(e)$, the unique solution of \eqref{eqhalfshell}, is understood to be the Levi--Civita connection. As a matter of fact, since we require $\omega$ to be $\eta$-compatible, the torsionless condition $d_\omega e=0$ will imply the metricity condition $d_{e^*\omega} g=0$, which is uniquely solved by the Levi--Civita metric connection.
\end{remark}

\begin{remark}Observe that the map $e\wedge\cdot$ in $\Omega^\bullet(M,\Wedge{\bullet} \mathcal{V})$ is not necessarily an isomorphism, even if $e$ is nondegenerate. As a matter of fact $e\wedge F_\omega=0$ is not equivalent to the flatness condition $F_\omega=0$.
\end{remark}

Strictly speaking, EH and PCH theories are equivalent only when condition \eqref{eqhalfshell} is used to rewrite the Palatini--Holst action in terms of the curvature of Levi--Civita connection. However, the way one encodes the \emph{half-shell} constraint \eqref{eqhalfshell} is somehow arbitrary. As a matter of fact, adding an explicit Lagrange multiplier will not modify the equations of motion (Theorem \ref{Theo:halfshellcl}), but will have a non-trivial effect on the boundary. We will consider this option in Section \ref{clhalfshell}.

The \emph{minimality} of the theory has been discussed, for instance, in \cite{Per}, where it is shown how one can easily consider the most general theory of gravity of this kind to be a \emph{topological}\footnote{The term topological being referred to the fact that it does not affect the dynamics.} modification of the Palatini--Cartan action. This modification goes under the name of Holst action \cite{Holst}, and it is still possible to add a finite number of boundary corrections to it. We report here the shape of such general action for completeness:

\begin{align}\label{Holstgeneral}
S_{\text{tot}}=&\int\limits_{M}\mathrm{Tr}\left[\alpha_1(e\wedge e\wedge F_\omega) + \alpha_2\star(e\wedge e)\wedge F_\omega \right] + \alpha_6(\Lambda) \mathrm{Tr}(e^4)\\\notag
 +& \int\limits_M (\alpha_3 - i\alpha_4) dL_{CS}(\omega^-) + (\alpha_3+i\alpha_4)dL_{CS}(\omega^+) + \alpha_5 d(d_\omega \star e\wedge e).
\end{align}
A few comments are in order. The \emph{trace} is again induced by the volume form in $V$ and we used the internal Hodge $\star$ for Minkowski metric $\eta$. The $\alpha_1,\alpha_2$ terms, with respect to a basis $\{u_i\}_{i=1}^4$ explicitly read: 
\begin{equation}\label{genvol}
\left(\alpha_1 \epsilon_{ijkl} e^i\wedge e^j\wedge F^{kl} + \alpha_2 e^i\wedge e^j\wedge F^{kl} \eta_{ik}\eta_{jl}\right) \in \Omega^{\text{top}}(M),
\end{equation}
with $\eta_{ij}$ the Minkowski metric, diagonal $\eta=\mathrm{diag}\{1,1,1,-1\}$ in the basis $\{u_i\}_{i=1}^4$. Eq. \eqref{genvol} will be interpreted in Lemma \ref{Lem:twistpairing} as a volume form in the top exterior power $\Wedge{4} V$.

The coefficient $\alpha_6(\Lambda)$ is proportional to the cosmological constant, whereas the components $\omega^{\pm}$ are respectively the (anti-)self-dual parts of the connection $\omega$ (with respect to the Hodge dual) and the functionals $L_{CS}$ are Chern-Simons forms. It can be seen \cite{Per}, that the total derivative terms in \eqref{Holstgeneral} unfold to yield topological terms proportional to the Pontrjagin, Euler and Nieh--Yan classes. 

Notice that the terms from $\alpha_3$ to $\alpha_5$ are relevant neither for the dynamical theory nor for the boundary structure. As a matter of fact they arise as exact corrections to the boundary 1-form, and therefore they will not change the symplectic structure.

\subsection{Palatini--Cartan--Holst theory}\label{Holstname}
The $\alpha_2$ term in \eqref{Holstgeneral} will have a non trivial effect in both bulk and boundary theories, and we shall retain it in what follows. The other topological boundary terms will be discarded in this analysis. In doing this we will rename our parameters as it is customary in the literature, namely by introducing the so-called Barbero--Immirzi \cite{Bar,Imm} parameter $\gamma\in\mathbb{R}\backslash\{0\}$ and considering the (real) Holst action
\begin{equation}\label{Holstaction}
S_{\text{Holst}}=\int\limits_{M}\mathrm{Tr}\left(e\wedge e\wedge F_\omega + \frac{1}{\gamma}\star(e\wedge e)\wedge F_\omega  + \alpha_6(\Lambda) e^4\right).
\end{equation}
This theory reduces to the Palatini--Cartan action in the limit $\gamma\rightarrow \infty$, but it still describes the same (Einstein) equations, up to a rescaling factor $\gamma$. However, this apparently harmless shift turns out to be a source of \emph{ambiguity} in the quantisation scheme \cite{Imm,Rovth}.

First introduced by Barbero \cite{Bar} to generalise the construction of Ashtekar canonical quantum gravity \cite{Asht} in terms of a real $SU(2)$ connection, this parameters was later studied by Immirzi \cite{Imm}. Ashtekar's formulation dealt with complex self-dual connections instead, which are recovered by complexifying the action and setting $\gamma=\pm i$. This complexification is avoided with the introduction of a $\gamma$-dependent canonical transformation of the phase space, mapping the Palatini fields to some $\gamma$-rescaled fields. This parameter dependence has been observed to be \emph{non-quantisable} \cite{Rovth}, in the sense that it cannot be unitarily implemented, which means that the quantisation of the theory without the $\gamma$ parameter is not  unitarily equivalent to the scaled one. However, we will be interested in the semiclassical structure only, and the generalisation introduced by the Barbero--Immirzi parameter will be taken into account only for completeness.

The introduction of the $\gamma$ parameter changes the pairing structure between $e\wedge e$ and $F_\omega$. To understand this statement we need the following:

\begin{lemma}\label{starcyclic}
Let $A,B\in \Wedge{2}V\simeq\mathfrak{so}(\eta)$, with $\eta$ a nondegenerate metric on a four dimensional vector space $V$. Then
$$
\star[A,B] = [\star A, B] = [A,\star B]. 
$$
\end{lemma}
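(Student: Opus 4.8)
The plan is to reduce the identity to a statement about the adjoint action of $\mathfrak{so}(\eta)$ on $\Wedge{2}V$ and the behaviour of the Hodge star under this action. First I would fix an $\eta$-orthonormal basis $\{u_i\}_{i=1}^4$ of $V$, so that $\Wedge{2}V$ has basis $\{u_i\wedge u_j\}_{i<j}$ and the isomorphism $\Wedge{2}V\simeq\mathfrak{so}(\eta)$ sends $u_i\wedge u_j$ to the elementary infinitesimal rotation in the $(i,j)$-plane (with the appropriate $\eta$-signs). Under this identification the Lie bracket $[\cdot,\cdot]$ on $\Wedge{2}V$ is the usual commutator of such matrices, and the Hodge star $\star\colon\Wedge{2}V\to\Wedge{2}V$ sends $u_i\wedge u_j$ to $\pm u_k\wedge u_l$ (the complementary pair), again up to signs coming from $\eta$ and the orientation.

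The key structural observation I would exploit is that $\star$ on $\Wedge{2}V$ commutes with the $\mathsf{SO}(\eta)$-action because it is built canonically from $\eta$ and the volume form, both of which are $\mathsf{SO}(\eta)$-invariant. Differentiating the relation $\star\circ\mathrm{Ad}_g = \mathrm{Ad}_g\circ\star$ at the identity gives, for every $X\in\mathfrak{so}(\eta)$ and every $B\in\Wedge{2}V$,
\[
\star\,\mathrm{ad}_X(B) = \mathrm{ad}_X(\star B),
\]
i.e. $\star[X,B]=[X,\star B]$. Identifying $X$ with the element $A\in\Wedge{2}V$ it corresponds to, this is exactly $\star[A,B]=[A,\star B]$, the second equality in the lemma. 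The first equality, $\star[A,B]=[\star A,B]$, then follows by antisymmetry of the bracket together with the fact that $\star$ applied twice is $\pm\mathrm{id}$: write $[\star A,B]=-[B,\star A]=-\star[B,A]=\star[A,B]$ using the already established identity with the roles of the two arguments swapped (valid since both lie in $\Wedge{2}V\simeq\mathfrak{so}(\eta)$). Alternatively one could verify both equalities by a direct but short computation on basis elements $u_i\wedge u_j$, checking the finitely many index configurations; the signs from $\eta$ conspire correctly precisely because $\star$ is $\mathsf{SO}(\eta)$-equivariant.

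I expect the only mild obstacle to be bookkeeping: one must make sure the three objects in play — the Lie bracket on $\Wedge{2}V$, the isomorphism $\Wedge{2}V\simeq\mathfrak{so}(\eta)$ induced by $\eta$, and the Hodge star — are all defined with mutually consistent sign and normalization conventions, since $\eta$ has Lorentzian signature and $\star^2=\pm1$ depending on degree and signature. Once the equivariance of $\star$ under $\mathsf{SO}(\eta)$ is in hand, however, the infinitesimal version is immediate and no case analysis is really needed; the cleanest write-up is the two-line differentiation argument above, with the basis computation relegated to a remark for the skeptical reader.
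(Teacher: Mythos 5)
Your proposal is correct, but it proves the lemma by a genuinely different route than the paper. The paper's proof is a direct index computation: it shows that the trilinear form $\mathrm{Tr}\left[C\star[A,B]\right]$ equals $-2\,\mathrm{Tr}_{\mathfrak{so}(\eta)}[\tilde B\tilde C\tilde A]$ and is therefore cyclic in $A,B,C$, and then concludes by nondegeneracy of the pairing induced by the volume form on $\Wedge{2}V$. You instead observe that $\star$ on $\Wedge{2}V$ is built from $\eta$ and the volume form, hence commutes with the $\mathsf{SO}(\eta)$-action, and that differentiating $\star\circ\mathrm{Ad}_g=\mathrm{Ad}_g\circ\star$ at the identity yields $\star[A,B]=[A,\star B]$ once one identifies the infinitesimal action of $\mathfrak{so}(\eta)$ on $\Wedge{2}V$ with $\mathrm{ad}$ under the isomorphism \eqref{e:SPCH}; the remaining equality follows from antisymmetry of the bracket alone (your appeal to $\star^2=\pm\mathrm{id}$ is not actually used in the chain $[\star A,B]=-[B,\star A]=-\star[B,A]=\star[A,B]$, and can be dropped). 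Both arguments are sound. Your equivariance argument is conceptually cleaner, makes the sign bookkeeping automatic, and generalizes immediately to any invariant operator on $\Wedge{2}V$; the one step you should make explicit in a write-up is that the isomorphism $\Wedge{2}V\simeq\mathfrak{so}(\eta)$ intertwines the exterior-square action with the adjoint action, which is what turns the derived equivariance into a statement about the bracket. The paper's computation, on the other hand, produces as a by-product the cyclic invariance of $\mathrm{Tr}\left[C\star[A,B]\right]$, a fact of independent use in the subsequent manipulations with $\hat T_\gamma$, which your argument does not directly deliver.
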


\begin{proof}
Consider $A,B,C\in \Wedge{2}V$. Denote by $\mathrm{Tr}$ the invariant volume form\footnote{$\mathrm{Tr}$ is given by $\epsilon_{ijkl}$ in the $\eta$-orthonormal basis} in $\Wedge{4}V$, and by $\tilde{A},\tilde{B},\tilde{C}$ the respective matrices in $\mathfrak{so}(\eta)$. We compute\footnote{We consider $\det{\eta}=\pm1$ for simplicity.}:
\begin{multline}
\mathrm{Tr}\left[ C\star[A,B]\right]=\frac12 \epsilon_{ijkl} C^{ij} \epsilon^{kl}_{rs} A^{rm} B^{ns} \eta_{mn}  \\
=2 C^{ij}A^{rm}B^{ns}\eta_{i\langle r}\eta_{s\rangle j}\eta_{mn} = -2 \mathrm{Tr}_{\mathfrak{so}(\eta)}[\tilde{B}\cdot \tilde{C}\cdot \tilde{A}] = \mathcal{T},
\end{multline}
(angular brackets denote antisymmetrisation of enclosed indices) showing that $\mathrm{Tr}\left[ C\star[A,B]\right]$ is cyclic in $A,B,C$. Then, since $\mathrm{Tr}[A\star B] = \mathrm{Tr}[\star A B]$ we have
$$\mathcal{T} = \mathrm{Tr}\left[ A\star[B,C]\right]=\mathrm{Tr}[C [\star A, B]]=\mathrm{Tr}\left[ C\star[A,B]\right],$$
and
$$\mathcal{T} = \mathrm{Tr}\left[ B\star[C,A]\right]=\mathrm{Tr}[C [A, \star B]]=\mathrm{Tr}\left[ C\star[A,B]\right].$$
The identities are valid for all $C\in \Wedge{2}V$, and the volume form $\mathrm{Tr}$ induces a nondegenerate pairing on $\Wedge{2}V$, proving the claim.
\end{proof}

\begin{lemma}\label{Lem:twistpairing}
Consider the complexification of Minkowski vector space $(V_{\mathbb{C}},\eta)$ and the maps (we drop the subscript $\mathbb{C}$)
\begin{align}
T_\gamma\colon&\begin{array}[t]{ccc} 
	\bigwedge^2 {V} & \longrightarrow & \bigwedge^2 {V} \\
	\alpha &\longmapsto & \alpha + \frac{1}{\gamma} \star \alpha
\end{array}\\
\hat{T}_\gamma\colon&\begin{array}[t]{ccc}
	\Wedge{2}V\otimes \Wedge{2}{V} & \longrightarrow & \mathbb{C}\\
	\alpha\otimes\beta &\longmapsto&\mathrm{Tr}[{T}_\gamma(\alpha)\wedge\beta ]
\end{array}\\
\widetilde{T}_\gamma\colon&\begin{array}[t]{ccc}
	\bigwedge^2{{V}} & \longrightarrow & \bigwedge^2{V}^*\\
	\alpha & \longmapsto & \hat{T}_\gamma(\alpha\otimes \cdot)
\end{array}\end{align}
for $\gamma\in \mathbb{C}\backslash\{0\}$. Then, $T_\gamma$ and $\widetilde{T}_\gamma$ are isomorphisms for any value $\gamma\not=\pm i$, and $\hat{T}_\gamma$ defines a non-degenerate symmetric inner product in $\bigwedge^2{V}$. Moreover, ${T}_\gamma$ is symmetric with respect to the inner product induced by the volume form in $\Wedge{4}V$, i.e. 
$$\mathrm{Tr}[{T}_\gamma(\alpha)\wedge\beta ] = \mathrm{Tr}[\alpha\wedge {T}_\gamma(\beta)],$$
and 
\begin{equation}
T_\gamma[\alpha,\beta] = [T_\gamma[\alpha],\beta] = [\alpha,T_\gamma[\beta]].
\end{equation}
\end{lemma}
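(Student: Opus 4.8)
The plan is to reduce every assertion to two structural facts about the Hodge star $\star$ on two-forms of a four-dimensional space. First, in Lorentzian signature one has $\star^2=-\mathrm{Id}$ on $\bigwedge^2 V$ (the exponent $(-1)^{k(n-k)}=(-1)^{4}=1$, times the sign of $\det\eta$, which is $-1$), so the $\mathbb{C}$-linear operator $\star$ is diagonalisable with eigenvalues $\pm i$ and eigenspaces the self-dual and anti-self-dual two-forms $\bigwedge^2_{\pm}V$, each of complex dimension three. Second, $\star$ is symmetric for the wedge pairing, $\mathrm{Tr}[\star\alpha\wedge\beta]=\mathrm{Tr}[\alpha\wedge\star\beta]$, a pairing which is itself symmetric (two-forms commute under $\wedge$) and non-degenerate on $\bigwedge^2 V$, since $\bigwedge^2 V\times\bigwedge^2 V\to\bigwedge^4 V\cong\mathbb{C}$ is a perfect pairing.

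For the invertibility of $T_\gamma$ I would note that on $\bigwedge^2_{\pm}V$ the operator $T_\gamma=\mathrm{Id}+\frac1\gamma\star$ acts as the scalar $1\pm\frac{i}{\gamma}$, which vanishes exactly for $\gamma=\mp i$; hence $T_\gamma$ is an isomorphism whenever $\gamma\neq\pm i$, with explicit inverse $\frac{\gamma^2}{\gamma^2+1}\bigl(\mathrm{Id}-\frac1\gamma\star\bigr)$, as follows from $\bigl(\mathrm{Id}+\frac1\gamma\star\bigr)\bigl(\mathrm{Id}-\frac1\gamma\star\bigr)=\mathrm{Id}-\frac1{\gamma^2}\star^2=\frac{\gamma^2+1}{\gamma^2}\mathrm{Id}$.

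The symmetry of $T_\gamma$ then follows by expanding $\mathrm{Tr}[T_\gamma(\alpha)\wedge\beta]=\mathrm{Tr}[\alpha\wedge\beta]+\frac1\gamma\mathrm{Tr}[\star\alpha\wedge\beta]=\mathrm{Tr}[\alpha\wedge\beta]+\frac1\gamma\mathrm{Tr}[\alpha\wedge\star\beta]=\mathrm{Tr}[\alpha\wedge T_\gamma(\beta)]$, and this, together with graded-commutativity of $\wedge$ on two-forms, gives $\hat T_\gamma(\alpha\otimes\beta)=\hat T_\gamma(\beta\otimes\alpha)$, so that $\hat T_\gamma$ is a symmetric bilinear form. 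Its non-degeneracy is precisely the bijectivity of $\widetilde T_\gamma$, and $\widetilde T_\gamma$ is the composite of $T_\gamma$ with the isomorphism $\bigwedge^2 V\xrightarrow{\sim}\bigwedge^2 V^*$ induced by the (non-degenerate) wedge pairing; both factors are isomorphisms for $\gamma\neq\pm i$, hence so is $\widetilde T_\gamma$. Finally, the bracket identity is immediate from Lemma \ref{starcyclic}: $T_\gamma[\alpha,\beta]=[\alpha,\beta]+\frac1\gamma\star[\alpha,\beta]=[\alpha,\beta]+\frac1\gamma[\star\alpha,\beta]=[T_\gamma(\alpha),\beta]$, and symmetrically $[\alpha,T_\gamma(\beta)]$.

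The only genuinely delicate point is the bookkeeping of the signature-dependent sign $\star^2=-\mathrm{Id}$, which is what pins the exceptional values to $\pm i$ (in the Euclidean case one would instead get $\star^2=+\mathrm{Id}$ and real exceptional values $\pm 1$, which is precisely why the complexification is needed here); everything else is linear algebra on the six-dimensional space $\bigwedge^2 V=\bigwedge^2_+ V\oplus\bigwedge^2_- V$, and I anticipate no obstacle beyond keeping the conventions for $\star$, $\mathrm{Tr}$ and $[\,\cdot\,,\cdot\,]$ consistent with Lemma \ref{starcyclic}.
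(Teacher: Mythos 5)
Your proof is correct, and it takes a genuinely different route from the paper's. The paper proceeds by brute force: it evaluates $\hat T_\gamma$ on the basis $\{u_i\wedge u_j\}$, writes out the representative $6\times 6$ matrix of $\widetilde T_\gamma$, and computes its determinant $-(1+\gamma^{-2})^3$ explicitly (also identifying $T_\gamma$ with a basis-change matrix $F_{\gamma/2}$ of determinant $(1+\gamma^2/4)^3$... with $\alpha=\gamma/2$ there); symmetry is then checked by an index computation with $\epsilon$-symbols. You instead diagonalise $\star$ on the complexification using $\star^2=-\mathrm{Id}$ (consistent with the paper's own $\star^2=s\,\mathrm{Id}$, $s=\mathrm{sign}\det\eta=-1$, used in Proposition \ref{Tgammamorph}) and read off that $T_\gamma$ acts as $1\pm i/\gamma$ on the three-dimensional self-dual and anti-self-dual eigenspaces, which both locates the exceptional values $\gamma=\pm i$ conceptually and yields the closed-form inverse $\frac{\gamma^2}{\gamma^2+1}(\mathrm{Id}-\frac1\gamma\star)$; your reduction of the invertibility of $\widetilde T_\gamma$ to that of $T_\gamma$ composed with the perfect wedge pairing $\bigwedge^2V\times\bigwedge^2V\to\bigwedge^4V$ is clean and replaces the determinant computation entirely. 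Your derivation of the symmetry from $\mathrm{Tr}[\star\alpha\wedge\beta]=\mathrm{Tr}[\alpha\wedge\star\beta]$ matches the shorter of the paper's two arguments, and the bracket identity is obtained from Lemma \ref{starcyclic} exactly as in the paper. What your approach buys is transparency about \emph{why} $\pm i$ are exceptional (an eigenvalue of $T_\gamma$ degenerates, which is precisely the self-dual Ashtekar limit discussed after Proposition \ref{Tgammamorph}) and why complexification is the natural setting; what the paper's explicit matrix buys is a hands-on verification independent of the eigenspace decomposition and the auxiliary family $F_\alpha$ that it reuses. No gaps.
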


\begin{proof}
Consider the linear map $\hat{T}_\gamma\colon \bigwedge^4 V \longrightarrow \mathbb{R}$ and evaluate it on a basis $\{u_i\wedge u_j\}$, where $\{u_i\}$ is an orthonormal basis for $(V,\eta)$:
\begin{equation}\label{hatTgamexp}
\hat{T}_\gamma[u_i\wedge u_j \otimes u_k\wedge u_l]=\left[\epsilon_{ijkl} + \frac2\gamma\eta_{i\langle k}\eta_{l\rangle j}\right],
\end{equation}
as it can be easily checked using $\star (u_i\wedge u_j) = \frac12 \epsilon_{ijkl}\eta^{k\langle m}\eta^{n\rangle l}u_m\wedge u_n$ and the normalisation $\mathrm{Tr}(u_i\wedge u_j \wedge u_k\wedge u_l)=\epsilon_{ijkl}$ (angular brackets denote antisymmetrisation of enclosed indices). If we relabel the basis indices in $\bigwedge^2{V}$ as $(12,13,14,23,24,34)\rightarrow (\mathbf{1},\mathbf{2},\mathbf{3},\mathbf{4},\mathbf{5},\mathbf{6})$. It is simple to gather that the representative matrix of $\widetilde{T}_\gamma$ with respect to the canonical bases in $\bigwedge^2 V$ and $\bigwedge^2 V^*$, relabelled as just mentioned, is given by
$$[\widetilde{T}_\gamma]=\left(
\begin{array}{cccccc}
 \gamma^{-1} & 0 & 0 & 0 & 0 & 1 \\
 0 & \gamma^{-1} & 0 & 0 & -1 & 0 \\
 0 & 0 & -\gamma^{-1} & 1 & 0 & 0 \\
 0 & 0 & 1 & \gamma^{-1} & 0 & 0 \\
 0 & -1 & 0 & 0 & -\gamma^{-1} & 0 \\
 1 & 0 & 0 & 0 & 0 & -\gamma^{-1} \\
\end{array}
\right)$$
and its determinant is $\mathrm{det}[\widetilde{T}_\gamma] = -(1+\gamma^{-2})^3$. Now, the combination 
\[
f^\alpha_{ij}:=u_i\wedge u_j + \alpha\ \eta_{im}\eta_{jn}\epsilon^{mnkl}u_k\wedge u_l,
\]
for $\alpha\in\mathbb{R}$, is a basis of $\bigwedge^2 V$ for all $\alpha\not=\pm i$. In fact, the linear map $F_\alpha$ mapping $\{u_i\wedge u_j\}$ to $\{f^\alpha_{ij}\}$ reads
$$[F_\alpha]=\left(
\begin{array}{cccccc}
 1 & 0 & 0 & 0 & 0 & \alpha \\
 0 & 1 & 0 & 0 & -\alpha & 0 \\
 0 & 0 & 1 & -\alpha & 0 & 0 \\
 0 & 0 & \alpha & 1 & 0 & 0 \\
 0 & \alpha & 0 & 0 & 1 & 0 \\
 -\alpha & 0 & 0 & 0 & 0 & 1 \\
\end{array}
\right)$$
and $\mathrm{det}(F_\alpha)=(1+\alpha^2)^3$. In particular, for $\alpha=\frac\gamma2$ we have ${T}_\gamma\equiv F_{\frac{\gamma}{2}}$.

To prove the symmetry of ${T}_\gamma$ we can compute
\begin{multline*}
\mathrm{Tr}[{T}_\gamma(\alpha)\wedge\beta]=\alpha^{ij}\beta^{mn}\mathrm{Tr}\left[\left(u_iu_j + \frac{1}{2\gamma}\epsilon_{ij}^{kl}u_ku_l\right)\wedge u_mu_n\right]=\\
=\alpha^{ij}\beta^{mn}\left(\epsilon_{ijmn} +\frac{1}{2\gamma}\epsilon_{ij}^{kl}\epsilon_{klmn}\right)=\alpha^{ij}\beta^{mn}\left(\epsilon_{ijmn} +\frac{1}{2\gamma}\epsilon_{ijpq}\epsilon^{pq}_{mn}\right)=\\
=\alpha^{ij}\beta^{mn}\mathrm{Tr}\left[u_iu_j\wedge\left(u_mu_n + \frac{1}{2\gamma}\epsilon_{mn}^{pq}u_pu_q\right)\right] = \mathrm{Tr}[\alpha\wedge{T}_\gamma(\beta)],
\end{multline*}
or equivalently use the fact that $\mathrm{Tr}[{T}_\gamma(\alpha)\wedge\beta]=\hat{T}_\gamma(\alpha\otimes\beta)$ and that $\hat{T}_\gamma$ is a manifestly symmetric bilinear map (c.f. \eqref{hatTgamexp}) on $\bigwedge^2 V$. Finally, using Lemma \ref{starcyclic} we can easily check that $T_\gamma[\alpha,\beta] = [T_\gamma[\alpha],\beta] = [\alpha,T_\gamma[\beta]]$.
\end{proof}

\begin{proposition}\label{Tgammamorph}
The map $T_\gamma \colon \Wedge{2} V \longrightarrow \Wedge{2} V$ is a Lie algebra morphism of $\mathfrak{so}(\eta)$ if and only if $\gamma^2=s$, with $s$ the sign of the determinant of $\eta$. Then:
\begin{equation}
T_{\pm\sqrt{s}}[A,B] = \frac12[T_{\pm\sqrt{s}}[A],T_{\pm\sqrt{s}}[B]].
\end{equation}
\end{proposition}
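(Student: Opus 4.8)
The plan is to reduce everything to a brief computation with the internal Hodge operator $\star$ on $\Wedge{2}V$, using two inputs already available: the ad-invariance $[\star A, B] = [A, \star B] = \star[A,B]$ for $A,B\in\Wedge{2}V$ from Lemma~\ref{starcyclic}, and the elementary identity $\star^2 = s\,\mathrm{id}$ on $2$-forms in dimension four, where $s = \mathrm{sgn}(\det\eta)$ (the usual relation $\star^2 = s(-1)^{p(n-p)}\mathrm{id}$ specialised to $n=4$, $p=2$).

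The core step is to expand, by bilinearity of the bracket,
\[
[T_\gamma(A), T_\gamma(B)] = [A,B] + \tfrac1\gamma[\star A, B] + \tfrac1\gamma[A, \star B] + \tfrac1{\gamma^2}[\star A, \star B],
\]
and apply Lemma~\ref{starcyclic} term by term: the two middle terms collapse to $\tfrac2\gamma\star[A,B]$ and the last to $\tfrac1{\gamma^2}\star^2[A,B] = \tfrac{s}{\gamma^2}[A,B]$, so that $[T_\gamma(A), T_\gamma(B)] = \bigl(1 + \tfrac{s}{\gamma^2}\bigr)[A,B] + \tfrac2\gamma\star[A,B]$. Since $T_\gamma[A,B] = [A,B] + \tfrac1\gamma\star[A,B]$, we get $2\,T_\gamma[A,B] = 2[A,B] + \tfrac2\gamma\star[A,B]$; the $\star[A,B]$-parts of the two expressions coincide identically, so $[T_\gamma(A), T_\gamma(B)] = 2\,T_\gamma[A,B]$ holds for all $A, B$ precisely when $\bigl(2 - 1 - \tfrac{s}{\gamma^2}\bigr)[A,B] = 0$ for all $A,B\in\Wedge{2}V$. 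As $\mathfrak{so}(\eta)$ is non-abelian the brackets $[A,B]$ do not all vanish, so this is equivalent to $1 - \tfrac{s}{\gamma^2} = 0$, i.e.\ $\gamma^2 = s$; substituting $\gamma = \pm\sqrt s$ gives the asserted identity $T_{\pm\sqrt s}[A,B] = \tfrac12[T_{\pm\sqrt s}(A), T_{\pm\sqrt s}(B)]$.

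I do not foresee a serious obstacle; the one point that deserves attention is the factor $\tfrac12$, which one might mistake for a sign of an error. It is in fact structural: after complexification $\star$ has eigenvalues $\pm\sqrt s$, and by Lemma~\ref{starcyclic} the corresponding eigenspaces — the (anti-)self-dual $2$-forms — are ideals, so the projectors onto them are honest Lie algebra morphisms; for $\gamma = \pm\sqrt s$ the map $T_\gamma = \mathrm{id}\pm\tfrac1{\sqrt s}\star$ is exactly twice such a projector (in Lorentzian signature $s = -1$, forcing $\gamma = \pm i$, at which value $T_\gamma$ is no longer invertible, consistently with Lemma~\ref{Lem:twistpairing}), and if $P$ is a Lie algebra morphism then $2P$ satisfies $2P[A,B] = \tfrac12[(2P)A, (2P)B]$. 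In writing this up, the only thing to watch is that the bracket on $\Wedge{2}V$ induced by \eqref{e:SPCH} be used on both sides of every identity rather than the matrix commutator, since the two differ by a constant.
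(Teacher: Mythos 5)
Your proof is correct and follows essentially the same route as the paper: expand $[T_\gamma A,T_\gamma B]$ by bilinearity, collapse the cross terms and the $\star^2$ term via Lemma \ref{starcyclic}, and match coefficients against $T_\gamma[A,B]$ (the paper leaves the proportionality constant $c$ free and solves $c=2$, while you target $c=2$ directly; the non-abelianness remark correctly secures the ``only if'' direction). The aside identifying $T_{\pm\sqrt{s}}$ as twice a projector onto the (anti-)self-dual ideals is a nice structural explanation of the factor $\tfrac12$, but the argument is otherwise the paper's.
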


\begin{proof}
With the help of Lemma \ref{starcyclic} we compute
$$[T_\gamma[A], T_\gamma[B]] = [A,B] + \frac{2}{\gamma} \star [A,B] + \frac{1}{\gamma^2} [A,\star^2 B] = (1 + \frac{s}{\gamma^2}) [A,B]  + \frac2\gamma \star [A,B].$$
We can ask for $[T_\gamma[A], T_\gamma[B]] = c T_\gamma [A,B]$, imposing the conditions
\begin{align*}
\frac{c}{\gamma}&=\frac{2}{\gamma},\\
1 + \frac{s}{\gamma^2} &= c,
\end{align*}
which immediately show that this is possible only when $\gamma^2 = s$, and $c=2$ in that case. 
\end{proof}

\begin{remark}
In the present paper we will assume $\gamma\not=\pm i$, in order to have $T_\gamma$ nondegenerate (cf. Lemma \ref{Lem:twistpairing}). As mentioned, one can make sense of PCH theory in the complexification of $\mathfrak{so}(3,1)$, relating to the Ashtekar formulation of Palatini gravity \cite{Asht}, and fixing $\gamma=\pm i$ allows self-dual connections (in the complexified Lie algebra). Thus, $T_{\pm i}$ loses invertibility but gains the property of being a Lie algebra morphism.  From now on, however, we will consider $\gamma\in\mathbb{R}$. 
\end{remark}

We can summarise the previous constructions by giving the following definitions we will use throughout. Let $P\longrightarrow M$ the principal $SO(3,1)$ bundle associated to $\mathcal{V}$, $\mathcal{A}_P$ the space of principal connections, and denote by $\Omega_{nd}^1(M,\mathcal{V})$ the space of (nondegenerate) tetrads. 

\begin{definition}\label{Def:PCH}
The assignment of the pair $(\mathcal{F}_{PCH}, S_{PCH})_M$ to every $4$ dimensional manifold $M$ with
\begin{equation}
\mathcal{F}_{PCH}=\Omega_{nd}^1(M, \mathcal{V}) \times \mathcal{A}_P \ni (e,\omega),
\end{equation}
and
\begin{equation}\label{Action:PCH}
S_{PCH}=\intl_M \hat{T}_\gamma\left[ \frac12 e\wedge e\wedge F_\omega +\frac{\Lambda}{4} e^4 \right].
\end{equation}
will be called \emph{Palatini--Cartan--Holst theory}.
\end{definition}

\begin{definition}\label{Def:HSPCH}
The assignment of the pair $(\mathcal{F}_{HS}, S_{HS})_M$ to every $4$ dimensional manifold $M$ with
\begin{equation}
\mathcal{F}_{HS}=\Omega_{nd}^1(M, \mathcal{V}) \times \mathcal{A}_P \times \Omega^2(M,\Wedge{3}\mathcal{V}) \ni (e,\omega, t),
\end{equation}
and
\begin{equation}\label{Action:HSPCH}
S_{PCH}=\intl_M \hat{T}_\gamma\left[ \frac12 e\wedge e\wedge F_\omega + \frac{\Lambda}{4} e^4 \right] + \mathrm{Tr} \left[t\wedge d_\omega e \right].
\end{equation}
will be called \emph{Half-Shell Palatini--Cartan--Holst theory}.
\end{definition}

\begin{remark}
We understand the integrand $\hat{T}_\gamma\left[ \frac12 e\wedge e\wedge F_\omega +\frac{\Lambda}{4} e^4 \right]$ as the pairing in $\Wedge{2}V$ between (the vector values of) $e\wedge e$ and $T_\gamma[F_\omega]$. This is equivalent to $\mathrm{Tr}\left[e\wedge e\wedge T_\gamma[F_\omega] + \frac\Lambda4 e^4\right]$ since 
$$ee\wedge\star [ee]=\eta_{i\langle k}\eta_{l\rangle j}e^i\wedge e^j\wedge e^k\wedge e^l=-\eta_{i\langle k}\eta_{l\rangle j}e^k\wedge e^j\wedge e^i\wedge e^l=0.$$
\end{remark}

\section{Classical boundary structure}\label{Sect:Classbound}
In this section we will analyse the structure that is induced on the boundary $\partial M$ by the bulk Palatini--Cartan--Holst theory (Sections \ref{ClassPalHol} and \ref{PCHCA}) and by the bulk Half-Shell Palatini--Cartan--Holst theory, a modification of it where the compatibility constraint $d_\omega e$ is enforced by means of a Lagrange multiplier (Section \ref{clhalfshell}).

\subsection{PCH boundary structure}\label{ClassPalHol}
Recalling Definition \ref{Def:PCH}, the space of physical fields for the PCH theory of gravity is given by $\mathcal{F}_{PCH}=\Omega^1_{nd}(M, \mathcal{V})\times \mathcal{A}_P$. A connection is locally described by a (local) one-form $\omega$ (on a chart) with values in $\mathfrak{so}(3,1)\simeq\bigwedge^2 V$. The boundary inclusion $\iota\colon \partial M \longrightarrow M$ induces the bundles $P^\partial:=\iota^*P$ and $\mathcal{V}^\partial=\iota^*\mathcal{V}$ over $\partial M$. We denote by $\Omega^1_{nd}(\partial M,\mathcal{V}^\partial)$ the space of $V$-valued one-forms on the boundary that span a three-dimensional subspace of $V$ at each point, with $\mathcal{V}^\partial := \iota^*\mathcal{V}$.

\begin{remark}
In the literature (e.g. \cite{Asht,Rovth}), globally hyperbolic structure of space-time is usually assumed for Palatini--Holst gravity. We will instead begin by considering any $3+1$-dimensional manifold with boundary, without specifying the kind of boundaries we allow. This means we will not put any extra restriction on the fields (compare with \cite{CS1}, where the Einstein--Hilbert action for GR in the ADM formalism  is analysed). In the analysis of the reduced phase space we will however assume that the co-frame field is such that the metric it induces on the boundary is nondegenerate (note that this is an open condition on the space of bulk fields).
\end{remark}

Consider the following
\begin{lemma}
\label{Lemma:kerWe}
The map 
$$\mathsf{W}_e^{(p,k)}\colon \Omega^p\left(\partial M,\bigwedge^{k}\mathcal{V}^\partial\right)\longrightarrow  \Omega^{p+1}\left(\partial M,\bigwedge^{k+1}\mathcal{V}^\partial\right),$$
defined by $\mathsf{W}^{(p,k)}_e(X)=X\wedge e$, where $e$ is the restriction of the tetrad to the boundary $\iota\colon \partial M \rightarrow M$, is injective for $p=k=1$ and it is surjective when $(p,k)=(1,2)$ or $(p,k)=(2,1)$. 
\end{lemma}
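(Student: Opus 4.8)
The plan is to reduce everything to fibrewise linear algebra and then to a couple of short explicit computations. First, each $\mathsf{W}_e^{(p,k)}$ is $\smoo(\partial M)$-linear — it is just wedging with the fixed form $e$ — so it is the map on sections induced by a vector bundle morphism $\bigwedge^{p}T^{*}\partial M\otimes\bigwedge^{k}\mathcal{V}^{\partial}\to\bigwedge^{p+1}T^{*}\partial M\otimes\bigwedge^{k+1}\mathcal{V}^{\partial}$ over $\partial M$. If this morphism is fibrewise injective at every point then the induced map on sections is injective; if it is fibrewise surjective at every point then, being of constant rank, it has a kernel subbundle, a choice of complement provides a bundle-level right inverse, and the induced map on sections is surjective. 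Hence it suffices to argue fibrewise.

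Fix $x\in\partial M$ and write $W\coloneqq T_{x}\partial M$ ($\dim W=3$), $V\coloneqq\mathcal{V}^{\partial}_{x}$ ($\dim V=4$). By the nondegeneracy hypothesis $e_{x}\in W^{*}\otimes V\cong\mathrm{Hom}(W,V)$ has rank $3$: pick a basis $(f_{1},f_{2},f_{3})$ of $W$, set $u_{i}\coloneqq e_{x}(f_{i})$ (linearly independent), complete to a basis $(u_{1},u_{2},u_{3},u_{4})$ of $V$, and let $(\theta^{1},\theta^{2},\theta^{3})$ be the basis of $W^{*}$ dual to $(f_{i})$. Then $e_{x}=\sum_{i=1}^{3}\theta^{i}\otimes u_{i}$, and for decomposable $X=\alpha\otimes a$ with $\alpha\in\bigwedge^{p}W^{*}$, $a\in\bigwedge^{k}V$ one has $X\wedge e_{x}=\sum_{i=1}^{3}(\alpha\wedge\theta^{i})\otimes(a\wedge u_{i})$. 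Everything from here is a finite-dimensional computation in $\bigwedge^{\bullet}W^{*}\otimes\bigwedge^{\bullet}V$ in this adapted basis; note in particular that the fibre metric on $\mathcal{V}$ plays no role.

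Next I would halve the work with a duality. There is a nondegenerate pairing $\bigl(\bigwedge^{q}W^{*}\otimes\bigwedge^{l}V\bigr)\times\bigl(\bigwedge^{3-q}W^{*}\otimes\bigwedge^{4-l}V\bigr)\to\mathbb{R}$ given by wedging and reading off the top component (trivialising $\bigwedge^{3}W^{*}$ by $\theta^{1}\wedge\theta^{2}\wedge\theta^{3}$ and $\bigwedge^{4}V$ by $\mathrm{Tr}$). By associativity and graded-commutativity of the wedge product, $\langle X\wedge e_{x},Z\rangle=\pm\langle X,Z\wedge e_{x}\rangle$, so $\mathsf{W}_e^{(2-p,\,3-k)}$ is, up to sign, the transpose of $\mathsf{W}_e^{(p,k)}$, and therefore $\mathsf{W}_e^{(p,k)}$ is injective iff $\mathsf{W}_e^{(2-p,\,3-k)}$ is surjective. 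With $(p,k)=(1,1)$ this makes surjectivity of $\mathsf{W}_e^{(1,2)}$ equivalent to injectivity of $\mathsf{W}_e^{(1,1)}$, and with $(p,k)=(0,2)$ it makes surjectivity of $\mathsf{W}_e^{(2,1)}$ equivalent to injectivity of $\mathsf{W}_e^{(0,2)}$. (One can also check $\mathsf{W}_e^{(2,1)}$ directly: writing $X=\theta^{2}\wedge\theta^{3}\otimes y_{1}+\theta^{3}\wedge\theta^{1}\otimes y_{2}+\theta^{1}\wedge\theta^{2}\otimes y_{3}$ gives $X\wedge e_{x}=\theta^{1}\wedge\theta^{2}\wedge\theta^{3}\otimes(y_{1}\wedge u_{1}+y_{2}\wedge u_{2}+y_{3}\wedge u_{3})$, and $V\wedge u_{1}$, $V\wedge u_{2}$, $V\wedge u_{3}$ span $\bigwedge^{2}V$ since $u_{1},u_{2},u_{3}$ are independent.)

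It then remains to prove two injectivity statements, each a short index computation. For $\mathsf{W}_e^{(0,2)}$: if $Z\in\bigwedge^{2}V$ has $Z\wedge u_{1}=Z\wedge u_{2}=Z\wedge u_{3}=0$, expanding $Z$ in the basis $\{u_{A}\wedge u_{B}\}$ and imposing the three conditions successively kills every coefficient. For $\mathsf{W}_e^{(1,1)}$: writing $X=\sum_{i=1}^{3}\theta^{i}\otimes x_{i}$ with $x_{i}\in V$, the formula above gives $X\wedge e_{x}=\sum_{1\le i<j\le 3}(\theta^{i}\wedge\theta^{j})\otimes(x_{i}\wedge u_{j}-x_{j}\wedge u_{i})$, so $X\wedge e_{x}=0$ forces $x_{i}\wedge u_{j}=x_{j}\wedge u_{i}$ in $\bigwedge^{2}V$ for all $i<j$; expanding $x_{i}=\sum_{A}x_{i}^{A}u_{A}$ and comparing coefficients, the $(1,2)$ relation confines $x_{1},x_{2}$ to $\mathrm{span}(u_{1},u_{2})$, after which the $(1,3)$ and $(2,3)$ relations force all remaining components to vanish, so $x_{1}=x_{2}=x_{3}=0$. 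Combined with the fibrewise reduction and the duality, this proves the lemma. There is no conceptual obstacle; the only real work is the index bookkeeping in these two computations, plus the easily-overlooked constant-rank remark needed to pass from fibrewise surjectivity to surjectivity on sections.
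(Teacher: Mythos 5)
Your proof is correct, and it takes a genuinely different route to the surjectivity claims than the paper does. The paper works entirely in the basis $\{e_\mu\}$ and proves surjectivity of $\mathsf{W}_e^{(1,2)}$ (and of $\mathsf{W}_e^{(2,1)}$) head-on, by writing the inhomogeneous equation $t = X\wedge e$ in components and exhibiting an explicit solution $X$ in terms of $t$; the only injectivity computation it performs is the $(p,k)=(1,1)$ case. You instead observe that the top-degree wedge pairing makes $\mathsf{W}_e^{(2-p,3-k)}$ the transpose of $\mathsf{W}_e^{(p,k)}$ up to sign, so that each surjectivity statement reduces to an injectivity statement for the dual map ($\mathsf{W}_e^{(1,1)}$ for the first, $\mathsf{W}_e^{(0,2)}$ for the second), both of which are short kernel computations in an adapted basis. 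This duality is in fact used by the paper, but only afterwards, in the remark on the dual splitting $\mathcal{W}_{(i,j)}^0=\mathrm{Im}(\mathsf{W}_e^{(2-i,3-j)})$ --- not inside the proof of the lemma itself. What your route buys is less index bookkeeping and a cleaner logical structure (two kernel computations plus one abstract transposition argument); you are also more careful than the paper about the passage from fibrewise statements to statements about sections, where the constant-rank observation is genuinely needed for surjectivity. What the paper's route buys is explicit inversion formulas for $\mathsf{W}_e$ on a complement of its kernel, which are implicitly reused later (e.g.\ in the maps $\phi_e^{-1}$ and $\mathsf{W}_e^{-1}$ appearing in the constraint algebra); your argument only guarantees the existence of such an inverse. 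Both arguments correctly use only the rank-three condition on $e$ and make no use of the fibre metric.
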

\begin{proof}
We use the fact that $\{e_\mu\}$ is a basis of $V$ and expand the component $X_a\in \bigwedge^2 V$ in that basis. 
When $(p,k)=(1,2)$ the kernel is determined by the equation
\begin{equation}
X\wedge e = X_a^{\mu\nu}e_\mu e_\nu e_b dx^a\wedge dx^b=0.
\end{equation}
This yields a system of equations as follows
\begin{eqnarray*}
(1,2) & X_1^{nc}e_n e_c e_2 - X_2^{nc} e_n e_c e_1 =0 & X_1^{13}e_1 e_3 e_2 - X_2^{23}e_2 e_3 e_1=0\\
(1,3)& X_1^{nc}e_n e_c e_3 - X_3^{nc} e_n e_c e_1 =0 & X_1^{12}e_1 e_2 e_3 - X_3^{32}e_3 e_2 e_1=0\\
(2,3)& X_2^{nc}e_n e_c e_3 - X_3^{nc} e_n e_c e_2 =0 & X_2^{21}e_2 e_1 e_3 - X_3^{31}e_3 e_1 e_2=0\\
\end{eqnarray*}
from which we easily obtain $\sum_{a=1}^3X_a^{ab}=X_a^{nb}=0$, since $A+B=A+C=B+C=0$ implies $A=B=C=0$ (check the resulting equation for $e_ne_1e_2$ in the first line). However, we can look at the inhomogeneous equation 
$$t=X\wedge e,$$
and expand $t$ in the basis of the $e_\mu$'s: $t=t_{ab}^{\mu\nu\rho}e_\mu e_\nu e_\rho dx^adx^b$, and again for $X\wedge e$. With a similar argument one obtains that the equation is solved for $X$ as
\begin{align}
X^{ab}_a & = - t^{\hat{b}df}_{df}\epsilon_{\hat{b}df} \\ 
2 X_{\hat{a}}^{\hat{a}n} & =-t^{n\hat{a}b}_{\hat{a}b} + t^{ncd}_{cd}\epsilon_{\hat{a}cd},
\end{align}
where hatted indices are not summed over and, because we know that combinations such that $\sum_a X_a^{ab}=0$ are in the kernel of $e\wedge\cdot$, this proves surjectivity of the map.

Again, with an essentially identical argument we can compute the kernel of $\mathsf{W}_e^{(2,1)}$ to be $\sum_aX_{ab}^a=X^n_{ab}=0$, and show that is is surjective on $\Omega^3(\partial M,\Wedge{2}\mathcal{V}^\partial)$.

On the other hand, when $(p,k)=(1,1)$ we have
\begin{equation}
X_{\langle a}^\mu e_\mu e_{b\rangle} = 0,
\end{equation}
which yields
\begin{eqnarray*}
(1,2) & X_1^ne_ne_2 + X_1^3 e_3 e_2 - X_2^n e_n e_1 - X_2^3e_3e_1 +(X_1^1+X_2^2)e_1e_2 =0\\
(1,3) & X_1^ne_ne_3 + X_1^2 e_2 e_3 - X_3^n e_n e_1 - X_3^2e_2e_1 +(X_1^1+X_3^3)e_1e_3 =0\\
(2,3) & X_2^ne_ne_3 + X_2^1 e_1 e_3 - X_3^n e_n e_2 - X_3^1e_2e_3 +(X_2^2+X_3^3)e_2e_3 =0
\end{eqnarray*}
and one can infer that $X=0$.
\end{proof}

\begin{remark}[Definition of splitting]\label{dualsplitting}
We can consider the choice of a complement of the kernel $\mathcal{W}_{(i,j)}=\mathrm{ker}\mathsf{W}_e^{(i,j)}$, i.e. $\Omega_{(i,j)}\coloneqq\Omega^i(\partial M,\Wedge{j}\mathcal{V}^\partial)=\mathcal{W}_{(i,j)}\oplus\mathcal{C}_{(i,j)}$ for $(0\leq i \leq 3, 0\leq j \leq 4)$. 
Note that $C_{(i,j)}$ is actually a complement for all $e$ in an open neighbourhood of a given $e_0$ inside $\Omega_{nd}(\partial M,\mathcal{V}^\partial)$. The dual space is 
$$\Omega_{(i,j)}^*\equiv\left(\Omega^i(\partial M, \Wedge{j}\mathcal{V}^\partial)\right)^*\simeq \Omega^{3-i}(\partial M, \Wedge{4-j}\mathcal{V}^\partial),$$
and we can consider the annihilator $\mathcal{W}_{(i,j)}^0=\mathrm{Im}(\mathsf{W}^{(2-i,3-j)})$. Then the dual splitting reads
\begin{equation}
\Omega_{(i,j)}=\mathcal{W}_{(i,j)}\oplus \mathcal{C}_{(i,j)};\ \ \ \Omega_{(i,j)}^*=\mathcal{W}_{(i,j)}^*\oplus \mathcal{W}_{(i,j)}^0;
\end{equation}
and again $\mathcal{C}_{(i,j)}^*=\mathcal{W}_{(i,j)}^0=\mathrm{Im}(\mathsf{W}_e^{(2-i,3-j)})$. In particular, for $(i,j)=(1,2)$, we have that an object in the dual of the complement of $\mathrm{ker}\mathsf{W}_e^{(1,2)}$ is in the image of $\mathsf{W}_e^{(1,1)}$. 
\end{remark}

\begin{remark}[Definition of projections]\label{projremark}
Once we fix a complement $\mathcal{C}_{(i,j)}$ of the kernel $\mathcal{W}_{(i,j)}$ we can define projections $p_{(i,j)}\colon\Omega_{(i,j)}\longrightarrow \mathcal{W}_{(i,j)}$ and $p_{(i,j)}'\colon \Omega_{(i,j)} \longrightarrow \mathcal{C}_{(i,j)}$. In what follows, we will drop the subscripts and simply denote the projections $p$ and $p'$.  Dually, we will generically call $p^\dag$ the projection to the image of $W_e$. Observe that all $p$, $p'$ and $p^\dag$ will depend on $e$.
\end{remark}

\begin{remark}
Notice that we can use the inverse function theorem in the Banach manifold of (fixed regularity type) sections to show that both $\mathcal{W}_{(i,j)}$ and its complement $\mathcal{C}_{(i,j)}$ (which can be seen as the annihilator of the kernel $\mathcal{W}_{(2-i,3-j)}$) are subbundles, owing to the fact that they are constant rank.
\end{remark}

\begin{theorem}\label{Theo:PCHClassical}
The classical space of boundary fields for Palatini--Cartan--Holst theory is the symplectic manifold given by the fibre bundle
\begin{equation}
\mathcal{F}_{PCH}^{\partial}\longrightarrow \Omega^1_{nd}(\partial M,\mathcal{V}^\partial),
\end{equation} 
with fibre over $e\in\Omega^1_{nd}(\partial M,\mathcal{V}^\partial)$ given by the reduction $\mathcal{A}_{\iota^*P}^{red}\coloneqq\qsp{\mathcal{A}_{\iota^*P}}{\sim}$, with respect to the equivalence relation $\omega \sim \omega' \iff \omega- \omega' \in \mathrm{Ker}(\mathsf{W}^{(1,2)}_e)$, where $\mathcal{V}^\partial=\iota^*\mathcal{V}$ and $\iota\colon \partial M \longrightarrow M$, and symplectic form depending on the Barbero--Immirzi parameter $\gamma$:
\begin{equation}\label{classicalboundaryform}
\varpi^{\partial}_{PCH}=-\int\limits_{\partial M}\hat{T}_\gamma\left[\mathbf{e}\wedge\delta\mathbf{e}\wedge\delta\bom \right].
\end{equation}
The surjective submersion $\pi_{PCH}\colon \mathcal{F}_{PCH}\longrightarrow \mathcal{F}_{PCH}^{\partial}$ has the explicit expression:
\begin{equation}\label{classprojection}
\pi_{PCH}\colon\begin{cases}
\mathbf{e}=e\\
\bom = [\omega]_e, 
\end{cases}
\end{equation}
with $\bem\in\Omega_{nd}^1(\partial M,\mathcal{V}^\partial)$ and $[\omega]_e\in \mathcal{A}_{\iota^*P}^{red}$. 

Furthermore, there is a symplectomorphism $\mathcal{F}^\partial_{PCH}\longrightarrow T^* \Omega^1_{nd}(\partial M,\mathcal{V}^\partial)$ by means of the identification 
$$-\bem\wedge T_\gamma[\bom] =:\mathbf{t}_\gamma \in \Omega^2(\partial M,\Wedge{3}\mathcal{V}^\partial)\simeq\mathcal{A}_{\iota^*P}^{red},$$
and the symplectic form reads
\begin{equation}
\varpi^{\partial}_{PCH}=\intl_{\partial M}\mathrm{Tr}\left[\delta\bem\wedge\delta\mathbf{t}_\gamma\right].
\end{equation}

\end{theorem}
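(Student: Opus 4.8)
<br>

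The plan is to compute the boundary one-form $\widetilde\alpha_{\partial M}$ from the variation of $S_{PCH}$, identify the kernel of $\widetilde\omega_{\partial M}=\delta\widetilde\alpha_{\partial M}$ using Lemma \ref{Lemma:kerWe}, and then show that the quotient by this kernel is exactly the bundle with fibre $\mathcal{A}_{\iota^*P}^{red}$ described in the statement. The $T^*$ identification is then a matter of using the dual splitting of Remark \ref{dualsplitting} to write the reduced connection class as a genuine cotangent fibre over $\Omega^1_{nd}(\partial M,\mathcal{V}^\partial)$.

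First I would vary the action. Writing $\delta S_{PCH}$ and integrating by parts the term containing $\delta\omega$ (using $\delta F_\omega = d_\omega\delta\omega$) produces a bulk Euler--Lagrange part plus a boundary contribution; the only term that survives the integration by parts and lands on $\partial M$ is the one pairing $\delta\omega$ against $e\wedge e$ (the $e^4$ term and the $\delta e$ variation of $e\wedge e\wedge F_\omega$ contribute no boundary term). Concretely, using that $\hat T_\gamma$ is symmetric and intertwines with the bracket (Lemma \ref{Lem:twistpairing}), one gets
\begin{equation*}
\widetilde\alpha_{\partial M}=\frac12\intl_{\partial M}\hat T_\gamma\left[e\wedge e\wedge\delta\omega\right]=\intl_{\partial M}\mathrm{Tr}\left[\tfrac12 e\wedge e\wedge T_\gamma[\delta\omega]\right],
\end{equation*}
where now $e$, $\omega$ denote the pullbacks $\iota^*e$, $\iota^*\omega$ on $\partial M$; these are the pre-boundary fields $\widetilde F_{\partial M}=\Omega^1(\partial M,\mathcal{V}^\partial)\times\mathcal{A}_{\iota^*P}$ (with $e$ eventually restricted to the nondegenerate locus). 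Then $\widetilde\omega_{\partial M}=\delta\widetilde\alpha_{\partial M}=\intl_{\partial M}\hat T_\gamma\left[e\wedge\delta e\wedge\delta\omega\right]$ up to sign, matching \eqref{classicalboundaryform}.

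Next I would compute $\mathrm{ker}(\widetilde\omega_{\partial M})$. A tangent vector is a pair $(\dot e,\dot\omega)$; contracting $\widetilde\omega_{\partial M}$ with it and demanding the result vanish for all $(\delta e,\delta\omega)$ splits into two conditions. The $\delta\omega$-component forces $\hat T_\gamma[e\wedge\dot e\wedge\,\cdot\,]=0$, i.e. $e\wedge\dot e=0$ after using that $T_\gamma$ is invertible for $\gamma\neq\pm i$ and that $\mathrm{Tr}$ pairs nondegenerately; by injectivity of $\mathsf{W}_e^{(1,1)}$ (Lemma \ref{Lemma:kerWe}) this gives $\dot e=0$. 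The $\delta e$-component then reads $\hat T_\gamma[e\wedge\dot\omega\wedge\,\cdot\,]=0$ on $\Omega^1(\partial M,\mathcal{V}^\partial)$, which — again inverting $T_\gamma$ — means $e\wedge\dot\omega$ annihilates the image of $\mathsf{W}_e^{(1,1)}$, equivalently (by the dual splitting, Remark \ref{dualsplitting}) $e\wedge\dot\omega=0\in\Omega^3(\partial M,\Wedge{3}\mathcal{V}^\partial)$, i.e. $\dot\omega\in\mathrm{ker}\mathsf{W}_e^{(1,2)}$. So the kernel is exactly $\{0\}\oplus\mathrm{ker}\mathsf{W}_e^{(1,2)}$, and $\mathcal{F}^\partial_{PCH}=\widetilde F_{\partial M}/\mathrm{ker}(\widetilde\omega_{\partial M})$ is the bundle with base $\Omega^1_{nd}(\partial M,\mathcal{V}^\partial)$ and fibre $\mathcal{A}_{\iota^*P}/{\sim}$, with $\pi_{PCH}$ as in \eqref{classprojection}; that the reduced two-form is symplectic (nondegenerate) follows because we have quotiented precisely by its kernel, which is regular of constant rank by the subbundle remark preceding the theorem. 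For the cotangent-bundle identification, I would note that by Lemma \ref{Lemma:kerWe} the map $\mathsf{W}_e^{(1,2)}$ is surjective onto $\Omega^2(\partial M,\Wedge{3}\mathcal{V}^\partial)$ with kernel $\mathcal{W}_{(1,2)}$, hence descends to an isomorphism $\mathcal{A}_{\iota^*P}^{red}\cong\Omega^2(\partial M,\Wedge{3}\mathcal{V}^\partial)$ given by $[\omega]_e\mapsto -\,e\wedge T_\gamma[\omega]=:\mathbf t_\gamma$ (well-defined since $T_\gamma$ preserves $\mathcal{W}_{(1,2)}$, as it acts fibrewise on $\Wedge{2}V$ and commutes with $\wedge e$ in the relevant sense — this needs a short check that $T_\gamma(\mathrm{ker}\,\mathsf W_e^{(1,2)})=\mathrm{ker}\,\mathsf W_e^{(1,2)}$, which holds because $e\wedge T_\gamma[\beta]=0\iff e\wedge\beta=0$ via the symmetry and nondegeneracy of $\hat T_\gamma$ together with surjectivity of $\mathsf W_e^{(1,1)}$). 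Since $\Omega^2(\partial M,\Wedge{3}\mathcal{V}^\partial)$ is naturally the fibre of $T^*\Omega^1_{nd}(\partial M,\mathcal{V}^\partial)$ (the pairing being $\intl_{\partial M}\mathrm{Tr}[\delta\bem\wedge\mathbf t_\gamma]$), rewriting $\widetilde\alpha_{\partial M}$ in terms of $\mathbf t_\gamma$ gives the canonical Liouville form and hence $\varpi^\partial_{PCH}=\intl_{\partial M}\mathrm{Tr}[\delta\bem\wedge\delta\mathbf t_\gamma]$.

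The main obstacle I anticipate is the careful bookkeeping around the maps $\mathsf{W}_e^{(1,1)}$, $\mathsf{W}_e^{(1,2)}$ and their duals: one must make sure that "$e\wedge\dot\omega$ annihilates $\mathrm{Im}\,\mathsf W_e^{(1,1)}$'' really is equivalent to "$\dot\omega\in\mathrm{ker}\,\mathsf W_e^{(1,2)}$'' — this is exactly the content of the dual-splitting Remark \ref{dualsplitting} (the annihilator of $\mathrm{ker}\,\mathsf W_e^{(1,2)}$ is $\mathrm{Im}\,\mathsf W_e^{(1,1)}$), but it relies on the nondegeneracy of the $\mathrm{Tr}$-pairing between complementary form degrees/wedge powers and must be invoked cleanly. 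A secondary point is confirming that the whole construction is $e$-dependent in a way that still yields honest subbundles over the nondegenerate locus (handled by the inverse-function-theorem remark and the constancy of ranks from Lemma \ref{Lemma:kerWe}), and that the twist $T_\gamma$ genuinely commutes past the reduction so that $\mathbf t_\gamma$ is well defined on equivalence classes. None of these is deep, but the proof's clarity hinges on stating them in the right order.
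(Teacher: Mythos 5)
Your overall route coincides with the paper's: vary the action to get $\widetilde{\alpha}=\pm\frac12\intl_{\partial M}\hat{T}_\gamma[e\wedge e\wedge\delta\omega]$, read off the kernel of $\delta\widetilde{\alpha}$ using Lemma \ref{Lemma:kerWe} (injectivity of $\mathsf{W}_e^{(1,1)}$ kills the $e$-direction, the $\omega$-direction is controlled by $\mathsf{W}_e^{(1,2)}$), quotient, and then use surjectivity of $\mathsf{W}_e^{(1,2)}$ (plus a reference connection) to exhibit the fibre as $\Omega^2(\partial M,\Wedge{3}\mathcal{V}^\partial)$ and hence the cotangent-bundle structure. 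That part is fine.

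The genuine flaw is the ``short check'' you defer: the claim $T_\gamma\bigl(\ker\mathsf{W}_e^{(1,2)}\bigr)=\ker\mathsf{W}_e^{(1,2)}$ is \emph{false}, and the justification you offer cannot work because $\mathsf{W}_e^{(1,1)}$ is not surjective (Lemma \ref{Lemma:kerWe} only asserts injectivity for $(p,k)=(1,1)$, and a local dimension count, $3\cdot 4=12$ against $3\cdot 6=18$, rules surjectivity out). Concretely, elements of $\ker\mathsf{W}_e^{(1,2)}$ are valued in $\Wedge{2}W$ with $W=\mathrm{Span}\{e_1,e_2,e_3\}$, while the internal Hodge star maps $\Wedge{2}W$ into its complement (bivectors containing the transverse direction): for $\beta=e_2\wedge e_3\,dx^1$ one has $\beta\wedge e=0$ but $e\wedge\star\beta\neq 0$, so $e\wedge T_\gamma[\beta]=\tfrac1\gamma e\wedge\star\beta\neq0$. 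The correct statement is that the kernel of $\widetilde{\varpi}_{PCH}$ in the $\omega$-direction is $T_\gamma^{-1}\bigl(\ker\mathsf{W}_e^{(1,2)}\bigr)$; this is the condition the nondegeneracy of $\hat{T}_\gamma$ actually delivers, namely $e\wedge T_\gamma[\dot\omega]=0$, not $e\wedge\dot\omega=0$. The fix is not to commute $T_\gamma$ past the reduction but to observe that $T_\gamma$ (being invertible for $\gamma\neq\pm i$) induces a canonical identification of $\mathcal{A}_{\iota^*P}/T_\gamma^{-1}(\ker\mathsf{W}_e^{(1,2)})$ with $\mathcal{A}_{\iota^*P}/\ker\mathsf{W}_e^{(1,2)}$, and that $\mathbf{t}_\gamma=-\bem\wedge T_\gamma[\bom]$ is well defined precisely on the former quotient (it is \emph{not} well defined on the quotient by $\ker\mathsf{W}_e^{(1,2)}$ taken literally). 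This is the content of the remark following the theorem in the paper, where the splitting is performed on $T_\gamma[\omega]$ rather than on $\omega$; your write-up needs the same adjustment, after which the rest of your argument goes through.
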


\begin{proof}
The variation of the Palatini--Cartan--Holst action \eqref{Action:PCH} splits into a bulk term, which we will not consider in what follows, and a boundary term. The latter is interpreted as a one-form on the space of pre-boundary fields $\widetilde{\mathcal{F}}_{PCH}$ of restriction of fields and normal jets to the boundary, and it reads 
\begin{equation}\label{holst1form}
\widetilde{\alpha}_{PCH}=-\frac12\int\limits_{\partial M}  \hat{T}_\gamma\left[e\wedge e \wedge \delta \omega \right].
\end{equation}
The pre-boundary two-form $\widetilde{\varpi}_{PCH}\coloneqq\delta\widetilde{\alpha}_{PCH}$ then reads
\begin{equation}\label{holst2form}
\widetilde{\varpi}_{PCH}=-\int\limits_{\partial M}  \hat{T}_\gamma\left[\delta e\wedge e \wedge \delta \omega\right].
\end{equation}
The restrictions of fields to the boundary are denoted with the same symbols, but we understand $\omega$ as an $\mathfrak{so}(3,1)$-connection on the induced principal bundle $P_{\partial M}=\iota^*P_M$ on the boundary $\iota\colon \partial M \hookrightarrow M$, while $e$ is a $V$-valued one-form on the boundary (i.e. $e\in\Omega_{nd}^1(\partial M,\iota^*\mathcal{V})$). In the basis $\{u_i\}_{i=1\dots 4}$ of $V$ we have $e=e_a^iu_idx^a$ whereas $\omega=\omega_a^{ij}u_i\wedge u_jdx^a$ where we fix that the indices $a,b,c$ run over the boundary directions $1,2,3$. Notice, however, that the vectors $e_a=e_a^iu_i$ are a basis of a three dimensional subspace $W\subset V$, and we can complete it to a basis of $V$ by introducing a vector $e_n$, linearly independent from the $e_a$'s.

Using Lemma \ref{Lem:twistpairing} 
we can read the equations defining the kernel of $\widetilde{\varpi}_{PCH}$ from \eqref{holst2form}
\begin{subequations}\begin{align}\label{XeHolst}
(X_e)\wedge e &= 0\\\label{XomegaHolst}
T_\gamma[\X{\omega}]\wedge e&=0.
\end{align}\end{subequations}
In virtue of Lemma \ref{Lemma:kerWe}, this means that $\mathrm{ker}(\widetilde{\varpi}_{PCH})= \mathrm{ker}(\mathsf{W}^{(1,1)}_e)\times \mathrm{ker}(\mathsf{W}^{(1,2)}_e)= \mathrm{ker}(\mathsf{W}^{(1,2)}_e)$, since we can read (and solve) \eqref{XeHolst} as
\begin{equation}
\mathsf{W}^{(1,1)}_e(X_e) = 0 \Longleftrightarrow (X_e)\equiv 0,
\end{equation}
while \eqref{XomegaHolst} holds whenever $T_\gamma[\X{\omega}]\in \mathrm{ker}(\mathsf{W}^{(1,2)}_e)$.

Observe that $\omega\in\mathcal{A}_{P_{\partial M}}$ is a connection on the boundary, and $\mathsf{W}^{(1,2)}_e$ is a map on the tangent space $T_\omega\mathcal{A}_{P_{\partial M}}$. A vector field $\mathbb{X}_\omega = T_\gamma[\X{\omega}]\pard{}{\omega}$ in the kernel of $\widetilde{\varpi}_{PCH}$ acts on connections by adding an element of $\mathrm{ker}(\mathsf{W}^{(1,2)}_e)$, so that the result of flowing $\omega$ along $\mathbb{X}_\omega$ will be a connection $\omega'$ such that $\omega'-\omega \in \mathrm{ker}(\mathsf{W}^{(1,2)}_e)$. This defines an equivalence relation $\sim$ on connections on the boundary, and the reduction map $\pi_\sim\colon \mathcal{A}_{\iota^*P} \longrightarrow \mathcal{A}_{\iota^*P}^{red}\coloneqq\qsp{\mathcal{A}_{\iota^*P}}{\sim}\simeq\qsp{\mathcal{A}_{\iota^*P}}{\mathrm{ker}(\mathsf{W}^{(1,2)}_e)}$ sends $\omega$ to its equivalence class $[\omega]_e$. If we identify the coordinate in  $\mathcal{A}_{\iota^*P}$ with an equivalence class of connections, we get the explicit expression for the symplectic reduction map:
\begin{equation}
\pi\colon\begin{cases}
\mathbf{e}=e\\
\bom =[\omega]_e,
\end{cases}
\end{equation}
and pre-composing $\pi$ with the restriction map $\widetilde{\pi} \colon\mathcal{F}_{PCH} \longrightarrow \widetilde{\mathcal{F}}_{PCH}$ we get the surjective submersion to the space of boundary fields $\pi_{PCH}\colon \mathcal{F}_{PCH}\longrightarrow \mathcal{F}^{\partial}_{PCH}$. It is easy to check that $\widetilde{\alpha}_{cl}$ is horizontal w.r.t. $\mathrm{ker}(\mathsf{W}^{(1,2)}_e)$, and that the one-form
\begin{equation}
\alpha^{\partial}_{PCH}=-\frac12\int\limits_{\partial M}\hat{T}_\gamma\left[\bem\wedge\bem\wedge\delta\bom \right]
\end{equation}
is such that $\widetilde{\alpha}_{PCH}=\pi^*\alpha^{\partial}_{PCH}$.

Using Lemma \ref{Lemma:kerWe} again we can identify $\mathcal{A}_{\iota^*P}^{red}\simeq \Omega^2(\partial M,\Wedge{3}\mathcal{V}^\partial)$, after the choice of a reference connection. This choice induces a global Darboux chart and $\mathcal{F}^{\partial}_{PCH}$ is a cotangent bundle with Liouville form
\begin{equation}
\alpha^{\partial}_{PCH}=-\int\limits_{\partial M}\mathrm{Tr}\left[\mathbf{e}\wedge\delta(\mathbf{e}\wedge \hat{T}_\gamma[\bom]) \right]=\int\limits_{\partial M}\mathrm{Tr}\left[\mathbf{e}\wedge\delta\mathbf{t}_\gamma\right],
\end{equation}
under the identification $\mathbf{t}_\gamma:=-\bem\wedge T_\gamma[\bom]$.
\end{proof}

\begin{remark}
The choice of a complement of $\mathrm{ker}(\mathsf{W}^{(1,2)}_e)$ (cf. Remark \ref{dualsplitting}), induces the splitting $\omega=\tom + \mathrm{v}$ where $T_\gamma[\tom]$ is the component of the connection $T_\gamma[\omega]$ in the complement $\mathcal{C}^{(1,2)}$, and $T_\gamma[\mathrm{v}]\in\mathrm{ker}(\mathsf{W}^{(1,2)}_e)$. 
We write $\tom=\omega - \mathrm{v}$ so that we can read Eq. \eqref{XomegaHolst} as $(X_{\tom})=0$ and project to the symplectic reduction. If we denote by $(\mathbf{e},\bom)$ the coordinates on the space of boundary fields and by $\mathcal{A}_{\iota^*P}^{red}\coloneqq \qsp{\mathcal{A}_{\iota^*P}}{\mathrm{Ker}(\mathsf{W}^{(1,2)}_e)}$ the reduced space of connections on the boundary, the projection map reads
\begin{equation}
\pi\colon\begin{cases}
\mathbf{e}=e\\
\bom = \omega-\mathrm{v} = \tom.
\end{cases}
\end{equation}
\end{remark}

\begin{remark}
Note that the nondegeneracy of $e$ is necessary to ensure that the pre-boundary two-form be presymplectic (i.e., that its kernel be a subbundle). This is unlike the 3-dimensional case, where this is not necessary and Palatini--Cartan theory may be extended to degenerate co-frame fields. Furthermore, observe that in an open neighbourhood of $\bem$ one can choose $\bom$ in the complement of $\mathsf{W}_e^{(1,2)}$, and the variations are independent.
\end{remark}

\begin{lemma}\label{lemmakerintersect}
Consider the map $[\cdot,e]\colon \Omega^1(\partial M,\Wedge{2}\mathcal{V}^\partial)\longrightarrow \Omega^2(\partial M, \mathcal{V}^\partial)$, and the restriction of the metric $g^\partial\coloneqq g\big|_{\partial M} \equiv e^*\eta\big|_{\partial M}$. Let $\mathcal{K}\coloneqq \mathrm{ker}([\cdot,e])\cap\mathrm{ker}(\mathsf{W}_e^{(1,2)})$. We have that, for $\eta$ Euclidean or Minkowski,
\begin{equation}
\mathrm{dim}\left(\mathcal{K}\right)=2\mathrm{dim}(\mathrm{ker}g^\partial),
\end{equation}
Moreover, if $g^\partial$ is nondegenerate, the map $[\cdot, e]$ is surjective on $\Omega^2(\partial M, \mathcal{V}^\partial)$, and $\mathrm{ker}([\cdot , e])\subset\mathcal{C}^{(1,2)}$.
\end{lemma}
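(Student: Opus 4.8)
The plan is to reduce the whole statement to linear algebra: every map in sight is a fibrewise-linear bundle map over $\partial M$, so the identity $\dim\mathcal{K}=2\dim(\mathrm{ker}\,g^\partial)$ — as well as surjectivity of $[\cdot,e]$ — may be checked at a fixed point $x\in\partial M$. Write $V=\mathcal{V}^\partial_x$ (four-dimensional, with the nondegenerate $\eta$), $U=T_x\partial M$ (three-dimensional), $W=e(U)\subset V$ (three-dimensional, since $e$ is nondegenerate), and identify $\Omega^p(\partial M,\Wedge{k}\mathcal{V}^\partial)_x\simeq\Wedge{p}U^*\otimes\Wedge{k}V$. Then, for $X\in U^*\otimes\Wedge{2}V$, one has $[X,e](u,v)=X(u)\cdot e(v)-X(v)\cdot e(u)\in V$, with $\cdot$ the action of $\Wedge{2}V\simeq\mathfrak{so}(\eta)$ on $V$, and $\mathsf{W}_e^{(1,2)}(X)(u,v)=X(u)\wedge e(v)-X(v)\wedge e(u)\in\Wedge{3}V$.

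I would first record an explicit description of $\mathrm{ker}\,\mathsf{W}_e^{(1,2)}$. Fix a basis $u_1,u_2,u_3$ of $U$, set $e_a=e(u_a)$ (a basis of $W$) and complete by some $e_n$ to a basis of $V$. The component computation in the proof of Lemma \ref{Lemma:kerWe} amounts to the statement that $\mathsf{W}_e^{(1,2)}(X)=0$ iff each $X_a:=X(u_a)$ lies in $\Wedge{2}W$ and the matrix $y_{ab}$ determined by $X_a\wedge e_b=y_{ab}\,e_1\wedge e_2\wedge e_3$ is symmetric; so $\mathrm{ker}\,\mathsf{W}_e^{(1,2)}$ is the six-dimensional space of $\Wedge{2}W$-valued one-forms on $U$ with symmetric associated bilinear form $y$. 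The core step is now to impose $[X,e]=0$ on such an $X$, and for this I would split into cases. If $g^\partial$ is nondegenerate — which always holds when $\eta$ is Euclidean — take $e_n$ orthogonal to $W$, so that $\eta$ is block-diagonal in $(e_1,\dots,e_n)$; expanding the equations $[X_a,e_b]=[X_b,e_a]$ in this basis for $X\in\mathrm{ker}\,\mathsf{W}_e^{(1,2)}$ forces all $y_{ab}=0$, i.e. $\mathcal{K}=0$. If $g^\partial$ is degenerate — possible only for $\eta$ Minkowski, and then, since a totally isotropic subspace of a Lorentzian $4$-space has dimension $\le 1$, $\mathrm{ker}\,g^\partial=W\cap W^\perp$ is exactly one-dimensional and null — take $e_1$ spanning $W\cap W^\perp$, $e_2,e_3$ an orthogonal basis of a $g^\partial$-nondegenerate complement of $\langle e_1\rangle$ in $W$, and $e_n$ null with $\eta(e_1,e_n)=1$ and $\eta(e_n,e_2)=\eta(e_n,e_3)=\eta(e_n,e_n)=0$; the same expansion now kills all $y_{ab}$ except the pair $y_{22}=-y_{33}$ and $y_{23}=y_{32}$, which stay free, so $\dim\mathcal{K}=2$. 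As $\dim(\mathrm{ker}\,g^\partial)$ equals $0$ in the first case and $1$ in the second, this gives $\dim\mathcal{K}=2\dim(\mathrm{ker}\,g^\partial)$ throughout.

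For the remaining assertions assume $g^\partial$ nondegenerate and again pick $e_n\perp W$, so $V=W\oplus\langle e_n\rangle$ and $\Wedge{2}V=\Wedge{2}W\oplus(W\wedge e_n)$. Because $\Wedge{2}W$ acts on $V$ with values in $W$ and $W\wedge e_n$ acts on $W$ with values in $\langle e_n\rangle$, the map $[\cdot,e]$ sends the $\Wedge{2}W$-valued part of $U^*\otimes\Wedge{2}V$ into $\Wedge{2}U^*\otimes W$ and the $(W\wedge e_n)$-valued part into $\Wedge{2}U^*\otimes\langle e_n\rangle$. On the first part $[\cdot,e]$ is injective — this is the computation of the previous paragraph, now run without imposing symmetry of $y$ — and therefore, both parts being nine-dimensional, an isomorphism onto $\Wedge{2}U^*\otimes W$; on the second it is visibly onto $\Wedge{2}U^*\otimes\langle e_n\rangle$. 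Since these two images span $\Wedge{2}U^*\otimes V$, the map $[\cdot,e]$ is surjective. Finally $\mathrm{ker}\,[\cdot,e]$ is contained in the $(W\wedge e_n)$-valued part, which meets $\mathrm{ker}\,\mathsf{W}_e^{(1,2)}$ (a space of $\Wedge{2}W$-valued forms) only at $0$; hence one may take the complement $\mathcal{C}^{(1,2)}$ of $\mathrm{ker}\,\mathsf{W}_e^{(1,2)}$ to contain $\mathrm{ker}\,[\cdot,e]$, which is the claim $\mathrm{ker}\,[\cdot,e]\subset\mathcal{C}^{(1,2)}$.

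The step I expect to be the real obstacle is the bookkeeping in the degenerate case: tracking signs and index positions while expanding the $\mathfrak{so}(\eta)$-action $[X_a,e_b]$ in the null-adapted basis, where $\eta$ is far from diagonal. A subsidiary point that needs care is extracting the clean "symmetric-$y$, $\Wedge{2}W$-valued" description of $\mathrm{ker}\,\mathsf{W}_e^{(1,2)}$ from the raw component equations of Lemma \ref{Lemma:kerWe}, since that reformulation is what makes the two case computations short.
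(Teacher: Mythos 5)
Your proposal is correct and follows essentially the same route as the paper: both reduce the statement to pointwise linear algebra, characterize $\ker\mathsf{W}_e^{(1,2)}$ as the $\Wedge{2}W$-valued one-forms whose associated matrix ($y_{ab}$ in your notation, $v_a^{bc}$ in the paper's) is symmetric, and then compute $\mathcal{K}$ by expanding $[X_a,e_b]=[X_b,e_a]$ in a basis adapted to $g^\partial$, case by case on its signature — your identification of the surviving parameters $y_{22}=-y_{33}$, $y_{23}=y_{32}$ in the null case matches the paper's two-dimensional answer. The only (minor) variation is the surjectivity argument, where you use the block decomposition $\Wedge{2}V=\Wedge{2}W\oplus(W\wedge e_n)$ and the resulting triangular structure of $[\cdot,e]$ in place of the paper's dimension count; your reading of $\ker[\cdot,e]\subset\mathcal{C}^{(1,2)}$ as holding for a suitable choice of complement (the essential content being $\ker[\cdot,e]\cap\mathcal{W}_{(1,2)}=0$) is, if anything, more careful than the paper's phrasing.
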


\begin{proof}
We restrict the map $[\cdot,e]$ to $\mathrm{ker}(\mathsf{W}_e^{(1,2)})$, and its kernel is then defined explicitly by
$$[v,e]^b_{ad}=v^{bc}_ag^\partial_{cd} - v^{bf}_dg^\partial_{fa}=0,$$
where we express $v$ in the basis $\{e_\mu\}$, and from Lemma \ref{Lemma:kerWe} we know that $v_a^{\mu\nu}e_\mu e_\nu\in\mathrm{ker}(\mathsf{W}_e^{(1,2)})$ iff $\sum_av_a^{ab}=0$ and $v=v_a^{bc}|\epsilon_{abc}|$. Now, we can choose coordinates in which $g^\partial_{ab}$ is diagonal, i.e. $g^\partial=\mathrm{diag}(\alpha_1,\alpha_2,\alpha_3)$, and the eigenvalues are only $0$ and $\pm 1$. If $\eta$ is a fibrewise Lorentzian metric on $M$, the only possible values are $(1,1,1), (1,1,-1),(1,1,0)$. If we allow for more general $\eta$'s we can also have cases $(1,-1,-1),(1,-1,0),(1,0,0)$. The result holds as well for $(0,0,0)$, trivially.

The equation in these coordinates is easily rewritten as
$$v_a^{bd}\alpha_d = v_d^{ba}\alpha_a,$$
for all $a,b,d$.
\paragraph{Case $g^\partial=\mathrm{diag}(1,1,\pm 1)$}
The equations reduce to
\begin{subequations}\begin{align}
v_1^{32} = v_2^{31};\\
\pm v_1^{23}=v_3^{21};\\
\pm v_2^{13}=v_3^{12};
\end{align}\end{subequations}
and
\begin{subequations}\begin{align}
v_1^{12}=v_2^{21}=0;\\
\pm v_1^{13}=v_3^{31}=0;\\
\pm v_2^{23}=v_3^{32}=0;
\end{align}\end{subequations}
that together with $\sum_a v_a^{ab}=0$ imply $v\equiv 0$.

With a similar argument, it is easy to show that, in the case $g^\partial$ nondegenerate, the map $[\cdot,e]$ is surjective on $\Omega^2(\partial M,\mathcal{V}^\partial)$: the kernel has 6 local dimensions and is obviously contained in $\mathcal{C}_{(1,2)}$, the complement of $\mathcal{W}_{(1,2)}$, since $[\cdot,e]\vert_{\mathcal{W}_{(1,2)}}$ is injective.

\paragraph{Case $g^\partial=\mathrm{diag}(1,\pm1,0)$}
This time we get the set of equations
\begin{subequations}\begin{align}
& v_1^{32} = \pm v_2^{31};\ \ \ v_3^{21}=0;\\
&v_3^{31}=v_2^{21}=v_3^{32}=v_1^{12}=0.
\end{align}\end{subequations}
From $\sum_a v_a^{ab}=0$ we get the additional constraint $v_1^{13}=-v_2^{23}$. $\mathcal{K}$ is then parametrised by $v_2^{b3}$ and is therefore 2-dimensional.

\paragraph{Case $g^\partial=\mathrm{diag}(1,0,0)$}
In this case the only equations that aren't automatically verified are 
$$v_2^{b1}=v_3^{b2}=0,$$
which imply $v_2^{21}=v_3^{31}=0$. Consequently, components $v_2^{b3},v_3^{b2}$ are free and parametrise the kernel, which is then 4-dimensional.
\end{proof}

\begin{corollary}\label{corollary}
If $g$ is Lorentzian and $g^\partial$ is nondegenerate, then we have a short exact sequence
\begin{equation}
0\longrightarrow\mathrm{ker}(\mathsf{W}_e^{(1,2)})\stackrel{[\cdot,e]}{\longrightarrow}\Omega^2(\partial M,\mathcal{V}^\partial)\stackrel{\mathsf{W}_e^{(2,1)}}{\longrightarrow} \Omega^3(\partial M,\Wedge{2}\mathcal{V}^\partial)\longrightarrow 0.
\end{equation}
\end{corollary}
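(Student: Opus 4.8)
The plan is to verify the three exactness conditions one at a time. Exactness at $\Omega^3(\partial M,\Wedge{2}\mathcal{V}^\partial)$ and at $\mathrm{ker}(\mathsf{W}_e^{(1,2)})$ are read off directly from the two preceding lemmas, while exactness at $\Omega^2(\partial M,\mathcal{V}^\partial)$ is reduced to a short computation in an $\eta$-adapted frame followed by a fibrewise dimension count; the latter is legitimate because all three arrows are bundle morphisms of locally constant rank (as noted above, after Lemma \ref{Lemma:kerWe}), so exactness of the complex of sections is equivalent to fibrewise exactness. Concretely, $\mathsf{W}_e^{(2,1)}$ is onto $\Omega^3(\partial M,\Wedge{2}\mathcal{V}^\partial)$ by the case $(p,k)=(2,1)$ of Lemma \ref{Lemma:kerWe}, which is exactness on the right; and since $g$ is Lorentzian and $g^\partial$ is nondegenerate, Lemma \ref{lemmakerintersect} gives $\mathrm{ker}([\cdot,e])\cap\mathrm{ker}(\mathsf{W}_e^{(1,2)})=0$, i.e. $[\cdot,e]$ is injective on $\mathrm{ker}(\mathsf{W}_e^{(1,2)})$, which is exactness on the left.

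For exactness in the middle I would first establish the inclusion $\mathrm{im}\bigl([\cdot,e]|_{\mathrm{ker}\,\mathsf{W}_e^{(1,2)}}\bigr)\subseteq\mathrm{ker}(\mathsf{W}_e^{(2,1)})$, i.e. that $[v,e]\wedge e=0$ as soon as $v\wedge e=0$. Work in a frame $\{e_1,e_2,e_3,e_n\}$ of $\mathcal{V}^\partial$ in which $e_1,e_2,e_3$ are the images under $e$ of a coordinate frame on $\partial M$ and $e_n$ spans the $\eta$-orthogonal complement of their span; this complement is transversal precisely because $g^\partial$ is nondegenerate, so in this frame $\eta$ is block-diagonal with $\eta_{ab}=g^\partial_{ab}$ and $\eta_{an}=0$. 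As in the proof of Lemma \ref{Lemma:kerWe}, $v\wedge e=0$ is equivalent to $v_a^{\mu n}=0$ together with $\sum_a v_a^{ab}=0$. Using the $\mathfrak{so}(\eta)$-action one finds, up to an irrelevant overall constant, $[v,e]^\mu_{ab}=v_a^{\mu c}g^\partial_{cb}-v_b^{\mu c}g^\partial_{ca}$, and comparing with the description of $\mathrm{ker}(\mathsf{W}_e^{(2,1)})$ from the same proof, namely $X^n_{ab}=0$ and $\sum_a X^a_{ab}=0$, we get $[v,e]^n_{ab}=0$ because the normal leg of $v$ vanishes, and $\sum_a[v,e]^a_{ab}=(\sum_a v_a^{ac})\,g^\partial_{cb}-v_b^{ac}g^\partial_{ca}=0$, the first term by the constraint on $v$ and the second because the antisymmetric $v_b^{ac}$ is paired with the symmetric $g^\partial_{ca}$. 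Hence $[v,e]\in\mathrm{ker}(\mathsf{W}_e^{(2,1)})$.

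To upgrade this inclusion to an equality I would compare fibre dimensions: from the proof of Lemma \ref{Lemma:kerWe}, $\mathrm{ker}(\mathsf{W}_e^{(1,2)})$ has fibre dimension $6$, hence by the injectivity just used so does $\mathrm{im}([\cdot,e])$; on the other hand $\Omega^2(\partial M,\mathcal{V}^\partial)$ and $\Omega^3(\partial M,\Wedge{2}\mathcal{V}^\partial)$ have fibre dimensions $12$ and $6$, and $\mathsf{W}_e^{(2,1)}$ is onto, so $\mathrm{ker}(\mathsf{W}_e^{(2,1)})$ also has fibre dimension $6$; a $6$-dimensional subspace of a $6$-dimensional space is the whole space, so $\mathrm{im}([\cdot,e])=\mathrm{ker}(\mathsf{W}_e^{(2,1)})$ and the sequence is exact. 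The only thing to be careful about — rather than a genuine obstacle — is keeping the tangential/normal index conventions of Lemmas \ref{Lemma:kerWe} and \ref{lemmakerintersect} aligned, and noting that the nondegeneracy of $g^\partial$ is used twice: once through Lemma \ref{lemmakerintersect} to make $[\cdot,e]$ injective on $\mathrm{ker}(\mathsf{W}_e^{(1,2)})$, and once to supply the orthogonal splitting $\mathcal{V}^\partial=\langle e_1,e_2,e_3\rangle\oplus\langle e_n\rangle$ on which the frame computation rests.
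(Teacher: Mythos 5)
Your proposal is correct and follows essentially the same route as the paper: surjectivity of $\mathsf{W}_e^{(2,1)}$ from Lemma \ref{Lemma:kerWe}, injectivity of $[\cdot,e]$ on $\mathrm{ker}(\mathsf{W}_e^{(1,2)})$ from Lemma \ref{lemmakerintersect}, the inclusion $\mathrm{im}([\cdot,e])\subseteq\mathrm{ker}(\mathsf{W}_e^{(2,1)})$ by the direct computation $e\wedge[v,e]=0$, and equality by the fibrewise dimension count $6=12-6$. Your adapted-frame verification is precisely the ``simple computation'' the paper alludes to (it also offers an index-free identity via an auxiliary map $\mathcal{E}$ as an alternative), so there is nothing to add.
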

\begin{proof}
Using Lemma \ref{Lemma:kerWe} we know that $\mathsf{W}_e^{(2,1)}$ is surjective and from Lemma \ref{lemmakerintersect} $[\cdot,e]$ is injective on $\mathrm{ker}(\mathsf{W}_e^{(1,2)})$. It is a matter of a simple computation to check explicitly that $e\wedge[v,e]=0$ for all $v\in\mathrm{ker}\mathsf{W}_e^{(1,2)}$. Alternatively one can define the map $\mathcal{E}\colon\Wedge{3}V\times V \longrightarrow \Wedge{2} V$ such that, for $w\in\Wedge{3}V$ and $a\in V$ 
$$\mathcal{E}(w,a)=\epsilon_{ijk}w^{ijk}u_iu_ja^l\eta(u_k,u_l)=\epsilon_{ijk}w^{ijk}u_iu_ja^l\eta_{kl}.$$ 
Then $e\wedge[v,e] = \mathcal{E}(e\wedge v,e) - v \wedge e^i\wedge e^j \eta_{ij}=0$. This implies that $\mathrm{im}([\cdot,e])\subset \mathrm{ker}(\mathsf{W}_e^{(2,1)})$, but since $\mathrm{dim}\mathcal{K}=0$ the reverse is also guaranteed and the sequence is exact.
\end{proof}

Recalling the definitions of the projections $p$ and $p'$ (cf. Remark \ref{projremark}):

\begin{corollary}\label{cor2}
If $g^\partial$ is nondegenerate, given a pair $(\omega,e)\in\mathcal{F}_{PCH}$, there exists a unique $\widetilde{v}\in\Omega^{1}(\partial M,\Wedge{2}\mathcal{V})$ that solves the system
\begin{equation}
\begin{cases}
e\wedge \widetilde{v} = 0 \\
[\widetilde{v},e] = p_{(2,1)}(d_\omega e),
\end{cases}
\end{equation}
where $p_{(2,1)}\colon \Omega^2(\partial M,\mathcal{V}^\partial)\longrightarrow \mathcal{W}_{(2,1)}$.
\end{corollary}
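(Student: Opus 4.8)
The plan is to reduce the statement to a linear-algebraic assertion about the map $[\cdot,e]$ restricted to $\mathrm{ker}(\mathsf{W}^{(1,2)}_e)$, and then invoke the two preceding results. First I would observe that by Lemma~\ref{lemmakerintersect} (in the Lorentzian, nondegenerate-$g^\partial$ case) the map
\[
[\cdot,e]\colon \mathrm{ker}(\mathsf{W}^{(1,2)}_e)\longrightarrow \Omega^2(\partial M,\mathcal{V}^\partial)
\]
is injective. Moreover Corollary~\ref{corollary} identifies its image with $\mathrm{ker}(\mathsf{W}^{(2,1)}_e)$, which by the direct-sum decomposition $\Omega_{(2,1)}=\mathcal{W}_{(2,1)}\oplus\mathcal{C}_{(2,1)}$ and the identification $\mathcal{W}_{(2,1)}=\mathrm{Im}(\mathsf{W}^{(2,1)}_e)^0$-complement (Remark~\ref{dualsplitting}) is complementary to $\mathcal{W}_{(2,1)}$. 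Hence $[\cdot,e]$ restricts to a linear \emph{isomorphism}
\[
[\cdot,e]\colon \mathrm{ker}(\mathsf{W}^{(1,2)}_e)\xrightarrow{\ \sim\ }\mathcal{W}_{(2,1)}.
\]

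With this isomorphism in hand the corollary is immediate: given $(\omega,e)\in\mathcal{F}_{PCH}$, the element $p_{(2,1)}(d_\omega e)\in\mathcal{W}_{(2,1)}$ has, by the isomorphism just established, a unique preimage $\widetilde v\in\mathrm{ker}(\mathsf{W}^{(1,2)}_e)$, i.e.\ a unique $\widetilde v\in\Omega^1(\partial M,\Wedge{2}\mathcal{V})$ with $e\wedge\widetilde v=0$ (the first equation, which is exactly the statement $\widetilde v\in\mathrm{ker}\mathsf{W}^{(1,2)}_e$) and $[\widetilde v,e]=p_{(2,1)}(d_\omega e)$ (the second equation). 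Existence and uniqueness are then just surjectivity and injectivity of the restricted map.

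I would expect the only genuine subtlety to be bookkeeping: checking that the short exact sequence of Corollary~\ref{corollary} really does give $\mathrm{Im}([\cdot,e])\big|_{\mathrm{ker}\mathsf{W}^{(1,2)}_e}=\mathrm{ker}(\mathsf{W}^{(2,1)}_e)$ and that this coincides with the chosen complement $\mathcal{W}_{(2,1)}$ into which $p_{(2,1)}$ projects — but this is forced by the very definition of the splitting in Remark~\ref{dualsplitting}, since $\mathcal{W}_{(2,1)}=\mathrm{ker}(\mathsf{W}^{(2,1)}_e)$ is by construction the kernel of $\mathsf{W}^{(2,1)}_e$ and $\mathcal{C}_{(2,1)}$ its complement. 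Dimension count from Lemma~\ref{lemmakerintersect} (the kernel $\mathcal{K}$ has dimension $0$ when $g^\partial$ is nondegenerate) closes the argument. The main obstacle, if any, is purely that one must keep straight which space each projection lands in; there is no analytic difficulty, since everything is finite-rank and constant-rank, so the pointwise linear-algebra conclusion globalizes over $\partial M$.
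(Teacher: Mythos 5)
Your proof is correct and follows essentially the same route as the paper, which simply asserts that $\phi_e \equiv p_{(2,1)}\circ[\cdot,e]\vert_{\mathcal{W}_{(1,2)}}\colon \mathcal{W}_{(1,2)}\to\mathcal{W}_{(2,1)}$ is an isomorphism --- exactly the isomorphism you assemble from the injectivity in Lemma~\ref{lemmakerintersect} and the exactness in Corollary~\ref{corollary}. (One slip in your first paragraph: the image of $[\cdot,e]\vert_{\mathrm{ker}\,\mathsf{W}_e^{(1,2)}}$ \emph{equals} $\mathcal{W}_{(2,1)}=\mathrm{ker}\,\mathsf{W}_e^{(2,1)}$, the target of $p_{(2,1)}$, rather than being ``complementary to'' it; your closing paragraph already states this correctly.)
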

\begin{proof}
This is a consequence of the fact that the map 
\begin{equation}
\phi_e\equiv p_{(2,1)}\circ[\cdot, e]\vert_{\mathcal{W}_{(1,2)}} \colon \mathcal{W}_{(1,2)} \longrightarrow \mathcal{W}_{(2,1)}
\end{equation}
is an isomorphism.
\end{proof}

\begin{lemma}\label{lemmatilde}
Assume $g^\partial$ nondegenerate and consider the functions 
\begin{subequations}\begin{align}
\widetilde{v}\colon \widetilde{\mathcal{F}}_{PCH} \longrightarrow \mathrm{ker}(\mathsf{W}^{(1,2)}_e), &\ \ \ \ \widetilde{v}(\omega, e) = -\phi_e^{-1}(d_\omega e), \\
\widetilde{\omega}\colon \widetilde{\mathcal{F}}_{PCH} \longrightarrow \mathcal{A}_{\iota^*P}, &\ \ \ \ \widetilde{\omega}(\omega,e) = \omega + \widetilde{v}(\omega,e),
\end{align}\end{subequations}
where $\phi_e$ is as in Corollary \ref{cor2}. Then, $\tom$ is basic with respect to the reduction map $\pi_\sim\colon \mathcal{A}_{\iota^*P}\longrightarrow \mathcal{A}_{\iota^*P}^{red}$, and $[\tom]_e=[\omega]$. Moreover, 
$$pd_{\tom} e=0$$ 
and $\tom$ satisfies the equation
\begin{equation}\label{uniquerepequation}
d_{\tom} e = 0
\end{equation}
if and only if $e\wedge d_\omega e = 0$.
\end{lemma}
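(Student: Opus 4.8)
The plan is to verify the four assertions in order, exploiting the machinery built up in Corollaries \ref{corollary} and \ref{cor2}. First I would check that $\widetilde\omega$ is well defined and basic. By construction $\widetilde v(\omega,e)=-\phi_e^{-1}(d_\omega e)\in\mathrm{ker}(\mathsf{W}^{(1,2)}_e)$, so $\widetilde\omega-\omega=\widetilde v\in\mathrm{ker}(\mathsf{W}^{(1,2)}_e)$, which is precisely the relation $\sim$ defining the reduction $\pi_\sim$; hence $[\widetilde\omega]_e=[\omega]_e$. To see that $\widetilde\omega$ is \emph{basic}, i.e. constant along the fibres of $\pi_\sim$, I would note that if $\omega'=\omega+w$ with $w\in\mathrm{ker}(\mathsf{W}^{(1,2)}_e)$, then $d_{\omega'}e=d_\omega e+[w,e]$, and since $w\in\mathcal{W}_{(1,2)}$ Corollary \ref{corollary} gives $e\wedge[w,e]=0$, so $[w,e]$ lies in $\mathrm{ker}(\mathsf{W}^{(2,1)}_e)=\mathrm{im}([\cdot,e]|_{\mathcal{W}_{(1,2)}})$ and $p_{(2,1)}[w,e]=[w,e]$; therefore $\phi_e^{-1}(d_{\omega'}e)=\phi_e^{-1}(d_\omega e)+w$, so $\widetilde v(\omega',e)=\widetilde v(\omega,e)-w$ and $\widetilde\omega(\omega',e)=\omega'+\widetilde v(\omega',e)=\omega+\widetilde v(\omega,e)=\widetilde\omega(\omega,e)$.

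Next I would establish $p\,d_{\widetilde\omega}e=0$, where $p=p_{(2,1)}$ projects onto $\mathcal{W}_{(2,1)}$. Compute
\[
d_{\widetilde\omega}e=d_\omega e+[\widetilde v,e]=d_\omega e-[\phi_e^{-1}(d_\omega e),e].
\]
Applying $p_{(2,1)}$ and using that $\phi_e=p_{(2,1)}\circ[\cdot,e]|_{\mathcal{W}_{(1,2)}}$, the second term becomes $p_{(2,1)}[\phi_e^{-1}(d_\omega e),e]=\phi_e(\phi_e^{-1}(d_\omega e))=p_{(2,1)}(d_\omega e)$ — here one uses that $\phi_e^{-1}(d_\omega e)$ is by definition $\phi_e^{-1}$ applied to $p_{(2,1)}(d_\omega e)$, so the identification is exact. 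Hence $p_{(2,1)}(d_{\widetilde\omega}e)=p_{(2,1)}(d_\omega e)-p_{(2,1)}(d_\omega e)=0$, as claimed. (A small point to be careful about: the notation $\phi_e^{-1}(d_\omega e)$ really means $\phi_e^{-1}$ composed with $p_{(2,1)}$, since $\phi_e$ is only invertible on $\mathcal{W}_{(2,1)}$; I would make this explicit.)

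Finally, for the equivalence $d_{\widetilde\omega}e=0\iff e\wedge d_\omega e=0$: the previous step shows $d_{\widetilde\omega}e$ always lies in the complement $\mathcal{C}_{(2,1)}$, i.e. in $\mathrm{ker}\,p_{(2,1)}$. Now $d_{\widetilde\omega}e=0$ iff its $\mathcal{C}_{(2,1)}$-component also vanishes, equivalently $\mathsf{W}^{(2,1)}_e(d_{\widetilde\omega}e)=e\wedge d_{\widetilde\omega}e=0$, because $\mathsf{W}^{(2,1)}_e$ is injective on $\mathcal{C}_{(2,1)}$ (its kernel $\mathcal{W}_{(2,1)}$ meets $\mathcal{C}_{(2,1)}$ trivially). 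So it suffices to show $e\wedge d_{\widetilde\omega}e=e\wedge d_\omega e$. Since $d_{\widetilde\omega}e-d_\omega e=[\widetilde v,e]$ with $\widetilde v\in\mathcal{W}_{(1,2)}$, Corollary \ref{corollary} gives $e\wedge[\widetilde v,e]=0$, whence $e\wedge d_{\widetilde\omega}e=e\wedge d_\omega e$, and the equivalence follows. The main obstacle is keeping the bookkeeping of the projections straight — in particular making sure that ``$\phi_e^{-1}(d_\omega e)$'' is unambiguously interpreted as $\phi_e^{-1}\circ p_{(2,1)}(d_\omega e)$ — and invoking Corollary \ref{corollary} precisely at the two places where the cancellation $e\wedge[\mathcal{W}_{(1,2)},e]=0$ is needed; everything else is a direct computation.
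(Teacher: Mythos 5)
Your proposal is correct and follows essentially the same route as the paper: the identity $d_{\tom}e=d_\omega e+[\widetilde v,e]=p'd_\omega e$ via $\phi_e=p_{(2,1)}\circ[\cdot,e]|_{\mathcal{W}_{(1,2)}}$, and the observation $e\wedge[\mathcal{W}_{(1,2)},e]=0$ to reduce the final equivalence to injectivity of $\mathsf{W}^{(2,1)}_e$ on the complement. The only (harmless) difference is that you verify basicness by the finite computation $\widetilde\omega(\omega+w,e)=\widetilde\omega(\omega,e)$, whereas the paper checks the infinitesimal version $\mathsf{L}_{\mathbb{X}}\tom=0$ along vertical vector fields; your explicit remark that $\phi_e^{-1}(d_\omega e)$ must be read as $\phi_e^{-1}\circ p_{(2,1)}(d_\omega e)$ matches the paper's actual usage.
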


\begin{proof}
If $\mathbb{X}$ is a vertical vector field, i.e. $\iota_\mathbb{X}\widetilde{\varpi}_{PCH}=0$, one can easily verify that $\mathsf{L}_\mathbb{X}\widetilde{v}(\omega,e) = \mathsf{L}_\mathbb{X} \omega$. In fact, since $\mathsf{L}_{\mathbb{X}} e=0$ we have
$$\mathbb{X}(\phi^{-1}_e(pd_\omega e)) =\phi_e^{-1}(p[\mathbb{X}_\omega,e]) = \mathbb{X}_\omega = \mathsf{L}_{\mathbb{X}}\omega, $$
given that $\mathbb{X}_\omega\in\mathrm{ker}(\mathsf{W}_e^{(1,2)})$ (cf. Theorem \ref{Theo:PCHClassical}); thus $\mathsf{L}_\mathbb{X}\tom=0$, and $[\tom]_{e}=[\omega]_{e}$.

Moreover, by definition of $\tom$ we have $d_{\tom} e = d_\omega e + [\widetilde v,e]$ and, in virtue of Corollary \ref{cor2}, $[\widetilde{v}, e]=-p(d_\omega e)$, so that $d_{\tom}e= d_\omega e - pd_\omega e = p'd_\omega e$ and consequently $pd_{\tom}e=0$. Moreover, $d_{\tom}e$ vanishes if and only if $d_\omega e\in\mathrm{ker}(\mathsf{W}^{(2,1)}_e)$, that is when $e\wedge d_\omega e=0$.

\end{proof}

\begin{corollary}\label{corollarycheck}
Let $(\bem,\bom)\in\mathcal{F}^\partial_{PCH}$ and $g^\partial$ nondegenerate, then $\bem\wedge d_{\bom} \bem=0$ if and only if there exists, unique, a representative $\tom$ in the class $\bom\equiv[\omega]_e$ such that $d_{\tom}e=0$.
\end{corollary}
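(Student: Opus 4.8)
The statement of Corollary \ref{corollarycheck} is essentially a reformulation of Lemma \ref{lemmatilde} pushed down to the space of boundary fields $\mathcal{F}^\partial_{PCH}$, so the plan is to transfer the content of that lemma across the reduction map $\pi_{PCH}$ and check that the relevant quantities are well-defined on equivalence classes.

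The plan is as follows. First I would unwind the statement: a point $(\bem,\bom)\in\mathcal{F}^\partial_{PCH}$ is, by Theorem \ref{Theo:PCHClassical}, a nondegenerate boundary tetrad $\bem=e$ together with an equivalence class $\bom=[\omega]_e$ of boundary connections modulo $\mathrm{Ker}(\mathsf{W}^{(1,2)}_e)$; both the expression $\bem\wedge d_{\bom}\bem$ and the condition ``$d_{\tom}e=0$ for some representative $\tom\in\bom$'' need to be shown independent of the chosen representative $\omega$. For the first: if $\omega'=\omega+\mathrm{v}$ with $\mathrm{v}\in\mathrm{Ker}(\mathsf{W}^{(1,2)}_e)$, then $e\wedge d_{\omega'}e = e\wedge d_\omega e + e\wedge[\mathrm{v},e]$, and by the computation in the proof of Corollary \ref{corollary} (namely $e\wedge[\mathrm{v},e]=0$ for all $\mathrm{v}\in\mathrm{Ker}(\mathsf{W}^{(1,2)}_e)$) the extra term vanishes, so $\bem\wedge d_{\bom}\bem := e\wedge d_\omega e$ is well-defined.

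Next, pick any representative $\omega$ of $\bom$ and apply Lemma \ref{lemmatilde} to the pre-boundary field $(\omega,e)$: it produces $\tom=\widetilde\omega(\omega,e)=\omega+\widetilde v(\omega,e)$ with $\widetilde v(\omega,e)\in\mathrm{Ker}(\mathsf{W}^{(1,2)}_e)$, hence $[\tom]_e=[\omega]_e=\bom$, i.e. $\tom$ is a representative of the class, and moreover $\tom$ satisfies $pd_{\tom}e=0$ and $d_{\tom}e=0\iff e\wedge d_\omega e=0$. Since we have just shown $e\wedge d_\omega e = \bem\wedge d_{\bom}\bem$, this gives exactly the stated equivalence: $\bem\wedge d_{\bom}\bem=0$ implies $d_{\tom}e=0$ for this $\tom$, and conversely if $d_{\tom}e=0$ for some representative then a fortiori $e\wedge d_{\tom}e=0$, which by representative-independence equals $\bem\wedge d_{\bom}\bem$. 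For uniqueness of such $\tom$: if $\tom_1,\tom_2$ are two representatives of $\bom$ with $d_{\tom_i}e=0$, then $\tom_1-\tom_2\in\mathrm{Ker}(\mathsf{W}^{(1,2)}_e)$ and $[\tom_1-\tom_2,e]=d_{\tom_1}e-d_{\tom_2}e=0$, so $\tom_1-\tom_2\in\mathcal{K}=\mathrm{Ker}([\cdot,e])\cap\mathrm{Ker}(\mathsf{W}^{(1,2)}_e)$, which is trivial by Lemma \ref{lemmakerintersect} when $g^\partial$ is nondegenerate; hence $\tom_1=\tom_2$.

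I do not expect a serious obstacle here, since all the analytic work is already done in Lemmas \ref{Lemma:kerWe}, \ref{lemmakerintersect} and \ref{lemmatilde} and Corollary \ref{corollary}. The only mildly delicate point is bookkeeping: making sure the map $\widetilde v(\omega,e)$ and the class $[\omega]_e$ interact correctly, i.e. that $\tom$ as constructed from one representative genuinely lands in (and only depends on) the class $\bom$ rather than on the chosen $\omega$ — but this is precisely the first assertion of Lemma \ref{lemmatilde} ($\tom$ is basic with respect to $\pi_\sim$), so it too is available off the shelf. Thus the proof is a short assembly: representative-independence of $\bem\wedge d_{\bom}\bem$ via Corollary \ref{corollary}, existence via Lemma \ref{lemmatilde}, and uniqueness via the vanishing of $\mathcal{K}$ from Lemma \ref{lemmakerintersect}.
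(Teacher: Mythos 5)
Your proof is correct and follows essentially the same route as the paper, whose own proof simply states that the corollary follows immediately from Lemma \ref{lemmatilde}. Your write-up usefully makes explicit the three points the paper leaves implicit --- representative-independence of $\bem\wedge d_{\bom}\bem$ via $e\wedge[\mathrm{v},e]=0$, existence via Lemma \ref{lemmatilde}, and uniqueness via the vanishing of $\mathcal{K}$ from Lemma \ref{lemmakerintersect} --- and all three steps are sound.
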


\begin{proof}
Follows immediately from Lemma \ref{lemmatilde}. 
\end{proof}

\begin{definition}\label{varphidef}
Denote by $\mathsf{S}\subset \widetilde{\mathcal{F}}_{PCH}$ the image of the map $\varphi\colon \widetilde{\mathcal{F}}_{PCH} \longrightarrow \widetilde{\mathcal{F}}_{PCH}$, defined by $\varphi(\omega,e) = (\tom(\omega,e),e)$.
\end{definition}

\begin{lemma}
We have
\begin{equation}
\mathsf{S}= \{(\tom,e) \in \widetilde{\mathcal{F}}_{PCH}\ \big|\ p(d_{\tom} e)=0 \}.
\end{equation}
\end{lemma}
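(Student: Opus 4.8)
The plan is to prove the set equality by double inclusion, using the map $\varphi$ and the map $\phi_e$ from Corollary \ref{cor2} as the main tools. For the inclusion $\mathsf{S}\subseteq\{(\tom,e)\mid p(d_{\tom}e)=0\}$, I would simply recall that by Definition \ref{varphidef} every element of $\mathsf{S}$ is of the form $(\tom(\omega,e),e)$ for some $(\omega,e)\in\widetilde{\mathcal{F}}_{PCH}$, and Lemma \ref{lemmatilde} already states that $p\,d_{\tom}e=0$ for such $\tom$; so this direction is immediate.

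For the reverse inclusion, suppose $(\tom,e)\in\widetilde{\mathcal{F}}_{PCH}$ satisfies $p(d_{\tom}e)=0$. I want to exhibit a preimage under $\varphi$, and the natural candidate is $(\tom,e)$ itself: I would check that $\varphi(\tom,e)=(\tom,e)$, i.e. that $\varphi$ is the identity on this set (so $\mathsf{S}$ is actually the fixed-point set of $\varphi$, which also shows $\varphi$ is idempotent). By definition $\varphi(\tom,e)=(\tom+\widetilde v(\tom,e),\,e)$ with $\widetilde v(\tom,e)=-\phi_e^{-1}(p\,d_{\tom}e)$ in the notation of Lemma \ref{lemmatilde} (note $\phi_e$ acts on $\mathcal{W}_{(1,2)}=\mathrm{ker}(\mathsf{W}^{(1,2)}_e)$ and $\widetilde v$ lands there). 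Since $p\,d_{\tom}e=0$ by hypothesis and $\phi_e^{-1}$ is linear, $\widetilde v(\tom,e)=0$, hence $\varphi(\tom,e)=(\tom,e)$, so $(\tom,e)\in\mathrm{Im}(\varphi)=\mathsf{S}$. This closes the second inclusion.

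The only point requiring a little care — and the step I would flag as the main (minor) obstacle — is making sure the projection conventions match: $\widetilde v(\omega,e)$ in Lemma \ref{lemmatilde} is written as $-\phi_e^{-1}(d_\omega e)$ where $\phi_e=p_{(2,1)}\circ[\cdot,e]|_{\mathcal{W}_{(1,2)}}$, so implicitly $d_\omega e$ is fed through $p_{(2,1)}$ before inversion, and "$p(d_{\tom}e)=0$" in the statement refers to $p_{(2,1)}(d_{\tom}e)$. Once this is made explicit, the computation $d_{\tom}e=d_\omega e+[\widetilde v,e]$ from Lemma \ref{lemmatilde} combined with $p\,d_{\tom}e=0$ shows $\widetilde v(\tom,e)=-\phi_e^{-1}(p\,d_{\tom}e)=0$, so nothing new needs to be computed. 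I would therefore present the proof in two short paragraphs: one citing Lemma \ref{lemmatilde} for $\subseteq$, and one verifying $\varphi|_{\{p(d_{\tom}e)=0\}}=\mathrm{id}$ for $\supseteq$, remarking in passing that this identifies $\mathsf{S}$ with the fixed-point locus of the idempotent $\varphi$.
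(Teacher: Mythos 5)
Your proof is correct and follows essentially the same route as the paper: the forward inclusion is quoted from Lemma \ref{lemmatilde}, and the reverse inclusion amounts to observing that $\widetilde{v}(\tom,e)=-\phi_e^{-1}(p\,d_{\tom}e)=0$ when $p(d_{\tom}e)=0$, i.e.\ $\tom(\tom,e)=\tom$, which is exactly the paper's one-line argument spelled out. Your added remarks (that $\mathsf{S}$ is the fixed-point locus of the idempotent $\varphi$, and that $\phi_e^{-1}$ implicitly requires the projection $p_{(2,1)}$ before inversion) are accurate but not needed.
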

\begin{proof}
We have already shown in Lemma \ref{lemmatilde} that every pair $(\tom, e)$ where $\tom=\tom(\omega,e)$ satisfies $p(d_{\tom}e)=0$. On the contrary, If $\tom$ satisfies the equation, then $\tom(\tom, e) = \tom$, showing that $(\tom,e)$ is in the image of $\varphi$.
\end{proof}

\begin{proposition}\label{Prop:structuralcharacterisation}
The map 
\begin{equation*}
\Phi\colon 
\begin{array}{ccc} 
\mathcal{F}^\partial_{PCH} & \longrightarrow & \mathsf{S}\\
 ([\omega]_e, e) &\longmapsto &  (\tom, e)
\end{array}\end{equation*}
is a symplectomorphism.
\end{proposition}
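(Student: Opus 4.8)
The plan is to identify $\Phi^{-1}$ with the reduction map of Theorem~\ref{Theo:PCHClassical} restricted to the ``gauge-fixing section'' $\mathsf{S}$, and then to deduce the symplectomorphism property directly from the identity $\widetilde{\varpi}_{PCH}=\pi^*\varpi^\partial_{PCH}$ proved there. Throughout I keep the standing hypothesis that $g^\partial$ is nondegenerate, which is precisely what makes $\tom(\omega,e)$, $\varphi$ and $\mathsf{S}$ well defined via Corollary~\ref{cor2} and Lemma~\ref{lemmatilde}.

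First I would check that $\Phi$ is well defined and smooth. By Lemma~\ref{lemmatilde}, along any vertical vector field $\mathbb{X}$ (i.e.\ $\iota_{\mathbb{X}}\widetilde{\varpi}_{PCH}=0$) one has $\mathsf{L}_{\mathbb{X}}\tom(\omega,e)=0$ and $[\tom(\omega,e)]_e=[\omega]_e$, so $\varphi$ is constant on the fibres of $\pi_\sim$ and descends to a map on $\mathcal{F}^\partial_{PCH}$; its values lie in $\mathsf{S}$ because $p\,d_{\tom}e=0$. Smoothness follows since $\widetilde{v}(\omega,e)=-\phi_e^{-1}(d_\omega e)$ is assembled from the inverse of the family of isomorphisms $\phi_e$ of Corollary~\ref{cor2}, which has constant rank and depends smoothly on $e$; this is the same constant-rank/inverse-function-theorem input already used to prove that the $\mathcal{W}_{(i,j)}$ and $\mathcal{C}_{(i,j)}$ are subbundles.

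Next I would exhibit the inverse. Write $\pi\colon\widetilde{\mathcal{F}}_{PCH}\to\mathcal{F}^\partial_{PCH}$, $(\omega,e)\mapsto([\omega]_e,e)$, for the reduction map of Theorem~\ref{Theo:PCHClassical}, and set $\Psi:=\pi\circ\iota_{\mathsf{S}}$. Since $[\tom(\omega,e)]_e=[\omega]_e$ we get $\Psi\circ\Phi=\mathrm{id}$, while for $(\tom,e)\in\mathsf{S}$ the relation $p\,d_{\tom}e=0$ forces $\tom(\tom,e)=\tom$ (the fixed-point property already used to characterise $\mathsf{S}$), so $\Phi\circ\Psi=\mathrm{id}$. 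Hence $\Phi$ is a bijection with smooth inverse $\Phi^{-1}=\Psi=\pi|_{\mathsf{S}}$, i.e.\ a diffeomorphism; equivalently $\varphi=\iota_{\mathsf{S}}\circ\Phi\circ\pi$ is an idempotent with image $\mathsf{S}$, exhibiting $\mathsf{S}$ as a section of the characteristic foliation $\ker\widetilde{\varpi}_{PCH}$ and simultaneously confirming that it is a genuine submanifold.

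For the symplectic statement I would invoke $\widetilde{\alpha}_{PCH}=\pi^*\alpha^\partial_{PCH}$ from the proof of Theorem~\ref{Theo:PCHClassical}, hence $\widetilde{\varpi}_{PCH}=\pi^*\varpi^\partial_{PCH}$; pulling back along $\iota_{\mathsf{S}}$ and using $\pi\circ\iota_{\mathsf{S}}=\Phi^{-1}$ gives $\iota_{\mathsf{S}}^*\widetilde{\varpi}_{PCH}=(\Phi^{-1})^*\varpi^\partial_{PCH}$, and applying $\Phi^*$ yields $\Phi^*\bigl(\iota_{\mathsf{S}}^*\widetilde{\varpi}_{PCH}\bigr)=\varpi^\partial_{PCH}$. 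Since $\varpi^\partial_{PCH}$ is symplectic, so is $\iota_{\mathsf{S}}^*\widetilde{\varpi}_{PCH}$, and $\Phi$ is a symplectomorphism. I expect the only genuinely nonroutine point to be the functional-analytic package behind the first two steps --- smooth dependence of $\phi_e^{-1}$ on $e$ and the fact that $\mathsf{S}=\mathrm{im}\,\varphi$ is a bona fide Banach submanifold --- everything else being a formal consequence of Lemma~\ref{lemmatilde} and Theorem~\ref{Theo:PCHClassical}.
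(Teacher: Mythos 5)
Your proposal is correct, and its structure (well-definedness of $\widetilde{\omega}(\omega,e)$ via Lemma \ref{lemmatilde}, inverse given by $\pi|_{\mathsf{S}}$ using the fixed-point property $\widetilde{\omega}(\widetilde{\omega},e)=\widetilde{\omega}$ on $\mathsf{S}$) matches the paper's. The one place you genuinely diverge is the verification that $\Phi$ respects the symplectic forms: you deduce it abstractly from the basicness identity $\widetilde{\alpha}_{PCH}=\pi^*\alpha^\partial_{PCH}$ established in Theorem \ref{Theo:PCHClassical}, writing $\iota_{\mathsf{S}}^*\widetilde{\varpi}_{PCH}=(\pi\circ\iota_{\mathsf{S}})^*\varpi^\partial_{PCH}=(\Phi^{-1})^*\varpi^\partial_{PCH}$ and applying $\Phi^*$. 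The paper instead checks the same identity by a direct computation: it evaluates $\widetilde{\varpi}_{PCH}$ at $(\widetilde{\omega},e)=(\omega+\widetilde{v}(\omega,e),e)$ and shows the extra term $\int_{\partial M}\hat{T}_\gamma[e\,\delta e\,\delta\widetilde{v}]$ vanishes because $e\wedge\widetilde{v}=0$ implies $e\wedge\delta\widetilde{v}=-\delta e\wedge\widetilde{v}$ and hence the integrand reduces to $-\hat{T}_\gamma[\delta e\,\delta e\,\widetilde{v}]=0$. Your route is cleaner and makes transparent that the result is a formal consequence of $\mathsf{S}$ being a section of the characteristic foliation; the paper's computation has the merit of being self-contained and of exhibiting the restricted form explicitly in terms of a chosen representative $\omega$, which is what is subsequently used in Theorem \ref{constraintstheorem}. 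Both arguments leave the same functional-analytic points (smooth dependence of $\phi_e^{-1}$ on $e$, submanifold structure of $\mathsf{S}$) at the level of the constant-rank remarks already made for $\mathcal{W}_{(i,j)}$ and $\mathcal{C}_{(i,j)}$, so flagging them as the only nonroutine input is fair.
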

\begin{proof}
Choose a representative $\omega\in[\omega]_e$ and construct $\tom(\omega,e)$. This is independent of the representative, since changing $\omega$ by $\mathrm{v}\in\mathrm{Ker}(\mathsf(W)_e^{(1,2)})$ leaves $\tom(\omega,e)$ invariant, as shown in Lemma \ref{lemmatilde}. On the other hand, choosing $(\tom, e)$ in $\mathsf{S}$ uniquely determines the equivalence class $[\tom]_e$ as the fibre of $\varphi$ above $(\tom,e)$ coincides with the fibre of $\pi\colon \widetilde{F}_{PCH} \longrightarrow \mathcal{F}^\partial_{PCH}$.

Computing the symplectic form $\varpi^\partial_{PCH}$ on $(\tom,e)$ yields
$$\intl_{\partial M} \hat{T}_\gamma[e \delta e \delta (\omega + \widetilde{v}(\omega,e)] = \intl_{\partial M} \hat{T}_\gamma[e \delta e \delta \omega + e\delta e \delta \widetilde{v}]$$
for some representative $\omega \in [\omega]_e$. However, since $e\wedge \widetilde{v}=0$ we have $e\delta \check{v} = \delta (e\wedge \widetilde{v}) - \delta e \wedge \widetilde{v}=- \delta e \wedge \widetilde{v}$ and we compute
$$\intl_{\partial M} \hat{T}_\gamma[e\delta e \delta \widetilde{v}] = -\intl_{\partial M} \hat{T}_\gamma[\delta e \delta e \widetilde{v}] = 0.$$
\end{proof}

\subsection{Constraints in PCH theory}\label{PCHCA}
In this section we will consider the structure of the constraints defined by equations \eqref{classconstraints} in the bulk. Compare this with the general discussion in section \ref{Sect:Constraintstheory}. 

We consider the Euler--Lagrange equations in the space of bulk fields $\mathcal{F}_{PCH}$ and define the \emph{pre-constraints} to be given by their restriction to the boundary. We will show how one can construct local functionals that are basic with respect to the symplectic fibration $\pi\colon\widetilde{\mathcal{F}}_{PCH}\longrightarrow \mathcal{F}^{\partial}_{PCH}$.

\begin{remark}[On the definition of constraints]\label{ELproof}
In the bulk, we have that the Euler--Lagrange equation $e\wedge d_\omega e=0$ is equivalent to $d_\omega e=0$. We can regard the restricted equation $d_\omega e\big|_{\partial M}=0$ as defining the constraint, together with $e\wedge T_\gamma[F_\omega] \big|_{\partial M}$. In virtue of Lemma \ref{lemmatilde}, requiring $d_\omega e=0$ on the boundary is equivalent to the condition $ed_{\omega'}e$ for every other $\omega'\in[\omega]_e$, and $\tom(\omega',e)\equiv\omega$ is the unique representative satisfying \eqref{uniquerepequation}. Alternatively, we could pick $(e\wedge d_\omega e)\vert_{\partial M}=0$ as constraint, and know that there exists, unique, a representative $\tom$ in the class of $\omega$, such that $d_{\tom}e=0$.
\end{remark}

\begin{remark}\label{constraintequivalence}
Another way to observe this phenomenon is the following: consider the function
$$\intl_{\partial M}\mathrm{Tr}[\beta\wedge d_{\tom} e], $$
with $\beta\in\Omega^1(\partial M,\Wedge{3}\mathcal{V})$. However, we know that $\pi(d_{\tom}e) = 0$ and $d_{\tom}e\equiv \pi' d_{\tom}e$. Since $\mathcal{C}_{(i,j)}^*\simeq \mathrm{Im}(\mathsf{W}_e^{(2-i,3-j)})$ there exists $\alpha\in\Omega^0(\partial M, \Wedge{2}\mathcal{W})$ such that $\beta = e\wedge \alpha$ and
$$\intl_{\partial M}\mathrm{Tr}[\beta \wedge d_{\tom}e] = \intl_{\partial M} \mathrm{Tr}[\alpha \wedge e\wedge d_{\tom} e].$$
\end{remark}

\begin{remark}
We can split the constraint $d_{\tom} e = 0$ into its two projections, i.e. into the \emph{structural constraint} $p d_{\tom} e = 0$ and the \emph{residual constraint} $e\wedge d_{\tom} e = 0$. Proposition \ref{Prop:structuralcharacterisation} shows that, under the nondegeneracy assumption on $g^\partial$, imposing the structural constraint is equivalent to the reduction with respect to the kernel of the presymplectic form. We will prove in a moment that the residual constraint then defines a coisotropic submanifold (and that additionally imposing the other constraints $e T_\gamma[F_{\tom}]=0$ also defines a coisotropic submanifold).
\end{remark}

\begin{remark}\label{remarkkerneldegrees}
From Lemma \ref{lemmakerintersect}, in the degenerate case $\mathcal{K}$ has dimension 2, which means that to project $d_{\omega}e$ to the space of boundary fields we must impose two extra conditions to eliminate the residual degrees of freedom in $\mathcal{K}$. From now on we will assume that $g^\partial$ is nondegenerate, which is equivalent to requiring that the signature of the induced metric on the boundary is either space-like or time-like.
\end{remark}

\begin{remark}
Observe that from Lemma \ref{lemmatilde} we know that $\tom$ is basic, i.e. $\mathsf{L}_{\mathbb{X}}\tom=\mathsf{L}_{\mathbb{X}} e=0$; from this we immediately conclude that $\mathsf{L}_{\mathbb{X}} (e\wedge d_{\tom}e)=\mathsf{L}_{\mathbb{X}}(e\wedge F_{\tom})=0$ for $\mathbb{X}\in\Gamma(\mathrm{ker}(\widetilde{\varpi}_{PCH}))$, showing that the constraints are basic in $\widetilde{\mathcal{F}}_{PCH}$ with respect to pre-symplectic reduction.
\end{remark}

\begin{theorem}\label{constraintstheorem}
The functions on $\mathsf{S}$:
\begin{subequations}\begin{align}
\widetilde{L}_\alpha=&\intl_{\partial M} \hat{T}_\gamma[\alpha \wedge e \wedge d_{\tom}{e}], \\
\widetilde{J}_\mu =&\intl_{\partial M}\hat{T}_\gamma\left[\mu\wedge {e}\wedge F_{\tom}\right] + \mathrm{Tr}\left[\Lambda \mu\wedge e^3\right],\end{align}
\end{subequations}
with $\alpha\in\Omega^0(M,\bigwedge^2\mathcal{V}^\partial),\ \mu\in\Omega^0(M,\iota^*\mathcal{V})$ and $\tom$ as in Lemma \ref{lemmatilde}, define a coisotropic submanifold $\mathcal{C}_{PCH}\subset\mathsf{S}\simeq\mathcal{F}^\partial_{PCH}$ with respect to the symplectic form 
$$\varpi^\partial=-\intl_{\partial M} \hat{T}_\gamma\left[e\delta e \delta \tom\right]$$
with the following algebraic relations:
\begin{subequations}\label{algstructure}\begin{align}
\{\widetilde{L}_\alpha,\widetilde{L}_{\alpha'}\} & = \widetilde{L}_{[\alpha',\alpha]}\\
\{\widetilde{L}_\alpha,\widetilde{J}_\mu\} & = \widetilde{J}_{[\alpha,\mu]} + \widetilde{L}_{H^{\alpha,\mu}} \\
\{\widetilde{J}_\mu,\widetilde{J}_{\mu'}\} & = \widetilde{L}_{X^{\mu\mu'}}
\end{align}\end{subequations}
where $H^{\alpha,\mu}, X^{\mu\mu'}$ are local functions of the fields and the Lagrange multipliers whose explicit expressions, presented in Equations \eqref{Hmualpha} and \eqref{Ymumu}, are not relevant for the statement.
\end{theorem}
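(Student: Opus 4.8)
The plan is to verify the two assertions separately: first that the common zero locus of $\widetilde{L}_\alpha$ and $\widetilde{J}_\mu$ is a submanifold with the Poisson relations \eqref{algstructure}, and then that this submanifold is coisotropic. For the first part I would compute the Hamiltonian vector fields of $\widetilde{L}_\alpha$ and $\widetilde{J}_\mu$ with respect to $\varpi^\partial$, working on the model $\mathsf{S} = \{(\tom,e)\,|\,p(d_{\tom}e)=0\}$ provided by Proposition \ref{Prop:structuralcharacterisation}. The key technical input here is that on $\mathsf{S}$ variations $\delta e$ are free in $\Omega^1_{nd}(\partial M,\mathcal{V}^\partial)$ while $\delta\tom$ is constrained to the complement $\mathcal{C}_{(1,2)}$ of $\mathrm{Ker}(\mathsf{W}^{(1,2)}_e)$ — equivalently, $\delta\tom$ ranges over representatives of $T\mathcal{A}^{red}_{\iota^*P}$. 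So when I vary $\widetilde{L}_\alpha$ I must use $\delta(d_{\tom}e) = d_{\tom}\delta e + [\delta\tom, e]$ together with the constraint $p\,d_{\tom}e=0$ and its linearization, and the surjectivity/injectivity statements of Lemma \ref{Lemma:kerWe} and Corollary \ref{corollary} to read off the Hamiltonian vector fields component by component. I expect the bracket $\{\widetilde{L}_\alpha,\widetilde{L}_{\alpha'}\}$ to follow essentially from the fact that $\widetilde{L}_\alpha$ is (up to the structural reduction) the smeared Gauss constraint $\int_{\partial M}\hat{T}_\gamma[\alpha\wedge e\wedge d_{\tom}e]$, which is the momentum map for the residual gauge action of $\Omega^0(\partial M,\textstyle\bigwedge^2\mathcal{V}^\partial)$ on $\mathcal{F}^\partial_{PCH}$ — so that relation is just equivariance of a momentum map and produces exactly $\widetilde{L}_{[\alpha',\alpha]}$ with no anomalous term.

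For the mixed bracket $\{\widetilde{L}_\alpha,\widetilde{J}_\mu\}$ I would exploit that $\widetilde{L}_\alpha$ generates infinitesimal internal rotations: acting on the fields, $\delta_\alpha e = [\alpha,e]$ and $\delta_\alpha\tom = d_{\tom}\alpha$ (modulo $\mathrm{Ker}\,\mathsf{W}^{(1,2)}_e$, the ambiguity being harmless on $\mathcal{F}^\partial_{PCH}$). Feeding this into $\widetilde{J}_\mu$ and using the $\mathfrak{so}$-covariance $T_\gamma[\alpha, F_{\tom}]=[\alpha, T_\gamma[F_{\tom}]]$ from Lemma \ref{Lem:twistpairing}, the Leibniz rule, and the Bianchi identity $d_{\tom}F_{\tom}=0$, I would collect the terms: the piece in which $\alpha$ hits $e$ and $F_{\tom}$ covariantly reassembles into $\widetilde{J}_{[\alpha,\mu]}$, while the leftover — coming from the fact that $\tom$ is only the \emph{structural} representative, so its variation is not literally $d_{\tom}\alpha$ but that projected by $p'$ — is an $\widetilde{L}$-type term with a field-dependent multiplier $H^{\alpha,\mu}$, to be recorded as \eqref{Hmualpha} without further simplification. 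The bracket $\{\widetilde{J}_\mu,\widetilde{J}_{\mu'}\}$ is computed similarly: one variation of $e$ and $F_{\tom}$ against the other, antisymmetrized; the $F\wedge F$-type contributions cancel by symmetry of $\hat{T}_\gamma$ (Lemma \ref{Lem:twistpairing}) and the $\Lambda$-terms cancel likewise, leaving a term proportional to $e\wedge d_{\tom}e$ — again $\widetilde{L}$-type with multiplier $X^{\mu\mu'}$, equation \eqref{Ymumu}.

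Once the bracket table \eqref{algstructure} holds, coisotropy is almost immediate: the relations show that the ideal generated by $\{\widetilde{L}_\alpha,\widetilde{J}_\mu\}$ is closed under the Poisson bracket, i.e. all brackets of constraints vanish on $\mathcal{C}_{PCH}$, which is precisely the statement that $\mathcal{C}_{PCH}$ is coisotropic — provided $\mathcal{C}_{PCH}$ is a smooth submanifold of the expected codimension. For the latter I would check that the differentials $\delta\widetilde{L}_\alpha$, $\delta\widetilde{J}_\mu$ are pointwise independent on $\mathcal{C}_{PCH}$ as functionals of the smearing parameters, using once more nondegeneracy of $g^\partial$ and Corollary \ref{corollary} to guarantee that the map $(\alpha,\mu)\mapsto(\text{Hamiltonian v.f.})$ is injective; this is where the standing hypothesis that the boundary metric is nondegenerate is used in an essential way, since for degenerate $g^\partial$ the extra directions in $\mathcal{K}$ (Lemma \ref{lemmakerintersect}) would obstruct the constancy of rank. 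The main obstacle I anticipate is bookkeeping: correctly tracking how the structural-constraint projectors $p,p',p^\dag$ intertwine with $d_{\tom}$, $[\cdot,e]$ and $T_\gamma$ when computing $\delta\tom$, since $\tom$ depends on $e$ through $\phi_e^{-1}$ (Lemma \ref{lemmatilde}); getting the field-dependent remainders $H^{\alpha,\mu}$ and $X^{\mu\mu'}$ right — rather than their mere existence — is the delicate part, but for the \emph{statement} of the theorem only their existence as local functions is needed.
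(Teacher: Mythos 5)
Your plan matches the paper's proof in all essentials: you work on $\mathsf{S}$, take $\delta e$ and the $\mathcal{C}_{(1,2)}$-component of $\delta\tom$ as the free variations with the $\mathcal{W}_{(1,2)}$-component determined by the linearized structural constraint (the paper's maps $\mathrm{A}$ and $\mathrm{B}$ — note $\delta\tom$ is not literally confined to $\mathcal{C}_{(1,2)}$, but you correct this yourself by invoking the linearization), compute the Hamiltonian vector fields, and close the three brackets on the constraint ideal via the Bianchi identity and Lemma \ref{Lem:twistpairing}, with the projector-induced remainders absorbed into $H^{\alpha,\mu}$ and $X^{\mu\mu'}$. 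One small correction: in $\{\widetilde{J}_\mu,\widetilde{J}_{\mu'}\}$ the cosmological terms do not cancel outright but contribute the piece $6\Lambda\,\mu'\wedge\mu$ to $X^{\mu'\mu}$ after integrating $d_{\tom}(\mu'\wedge\mu)\wedge e^2$ by parts, consistent with your observation that the residue is proportional to $e\wedge d_{\tom}e$.
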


\begin{proof}
In order to compute the Hamiltonian vector fields of the functions defining the constraints, we have to consider that the variation of $\tom$ is constrained. As a matter of fact, from the defining equation $p d_{\tom}e=0$, splitting the variation in $\delta\tom = p\delta\tom + p'\delta\tom$ and recalling that the projection to the kernel $p$ depends on $e$, we get
$$
[p\delta\tom, e] = -(\delta_ep)d_{\tom}e + pd_{\tom}e\delta e - p[p'\delta\tom, e], 
$$
so that
\begin{equation}
p\delta\tom =: \mathrm{A}(\delta e) + \mathrm{B}(p'\delta\tom),
\end{equation}
where we defined $\mathrm{A}\colon \Omega^{1}(\partial M,\mathcal{W}) \longrightarrow \mathcal{W}_{(1,2)}$ and $\mathrm{B}\colon \mathcal{C}_{(1,2)} \longrightarrow \mathcal{W}_{(1,2)}$, using the notation of Remark \ref{dualsplitting}. In particular, we have the explicit expression for $c\in\mathcal{C}_{(1,2)}$
$$B(c)=-\phi_e^{-1}(p[c,e]).$$
Observe that $\delta e$ and $p'\delta\tom$ are free variations, and that the map $A$ has odd parity. In a similar fashion, a generic Hamiltonian vector field $X$ will have to satisfy $pX_{\tom} = A(X_e) + B(p' X_{\tom})$.

Because of the structure of $\widetilde{L}_\alpha$, only the free variations $\delta e$ and $p'\delta\tom$ appear in $\delta\widetilde{L}_\alpha$: 
$$
\delta\widetilde{L}_\alpha=- \intl_{\partial M} \hat{T}_\gamma\left[\delta e \wedge e \wedge d_{\tom}\alpha  + p'\delta\tom\wedge e\wedge [\alpha,e]\right].
$$
Hence, its Hamiltonian vector field $\mathbb{L}_\alpha$ such that $\iota_{\mathbb{L}_\alpha}\varpi^\partial=\delta \widetilde{L}_\alpha$, simply needs to satisfy
\begin{align*}
e\wedge(\mathbb{L}^\alpha)_{e}&= e\wedge [\alpha, e];\\
e\wedge (\mathbb{L}^\alpha)_{\tom}&= -e\wedge d_{\tom}\alpha.
\end{align*}
Turning to $\widetilde{J}_\mu$ and recalling that $d_{\tom}e=p'd_{\tom}e$ we have 
\begin{multline}
\delta\widetilde{J}_\mu = \intl_{\partial M}\hat{T}_\gamma\left[ \delta e\wedge \mu \wedge\left(F_{\tom} + 3\Lambda e^2\right) + d_{\tom}(\mu \wedge e)\wedge \delta \tom\right]\\
 = \intl_{\partial M}\hat{T}_\gamma\left[ \delta e\wedge \mu\wedge \left(F_{\tom} + 3\Lambda e^2\right) + d_{\tom}\mu\wedge e\wedge p' \delta \tom - \mu\wedge d_{\tom}e\wedge \delta\tom \right]\\
=\intl_{\partial M}\hat{T}_\gamma\left[ \delta e \mu \left(F_{\tom} + 3\Lambda e^2\right) + d_{\tom}\mu e p' \delta \tom - \mu p'd_{\tom}e p'\delta\tom - \mu p' d_{\tom}e \left(\mathrm{A}(\delta e) +\mathrm{B}(p'\delta\tom)\right)\right]\\
=\intl_{\partial M}\hat{T}_\gamma\big[ \delta e\wedge\left[ \mu\wedge \left(F_{\tom} + 3\Lambda e^2\right) + \mathrm{A}^\dag\left(\mu\wedge p' d_{\tom}e\right)\right] \\
+ \left[d_{\tom}\mu\wedge e - p^\dag(\mu\wedge p'd_{\tom}e) - \mathrm{B}^\dag\left(\mu\wedge p' d_{\tom}e\right)\right]\wedge p'\delta\tom\big],
\end{multline}
where we introduced the adjoint maps $\mathrm{A}^\dag,\mathrm{B}^\dag$. Thus, we have
\begin{subequations}\begin{align}
\label{hamJ1}
e\wedge (\mathbb{J}^\mu)_{e} &= -d_{\tom}\mu\wedge e + (p^\dag + \mathrm{B}^\dag)\left(\mu\wedge p' d_{\tom}e\right) \\ \label{hamJ2}
e\wedge p'(\mathbb{J}^\mu)_{\tom} &= \mu\wedge \left(F_{\tom} + 3\Lambda e  \wedge e\right) + \mathrm{A}^\dag\left(\mu \wedge p' d_{\tom}e\right).
\end{align}\end{subequations}
Observe that imposing $d_{\tom}e=0$ would be sufficient to show that the ideal is coisotropic, but we are interested in explicitly writing the algebraic relations between constraints.

Let us compute:
\begin{align*}
\{\widetilde{L}_\alpha, \widetilde{L}_{\alpha'}\} &= \mathbb{L}^\alpha \widetilde{L}_{\alpha'} \\
& =\intl_{\partial M} \hat{T}_\gamma\left[\alpha' \wedge d_{\tom} \left(e\wedge (\mathbb{L}^\alpha)_{e}\right) + \frac12\alpha'\wedge [(\mathbb{L}^\alpha)_{\tom},e\wedge e] \right]\\
& =-\frac12\intl_{\partial M} \hat{T}_\gamma\left[d_{\tom}\alpha'\wedge[\alpha,e\wedge e] + \alpha'\wedge[d_{\tom}\alpha,e\wedge e]\right]\\
& = \intl_{\partial M}\hat{T}_\gamma\left[[\alpha',\alpha]\wedge e\wedge d_{\tom}e \right]= \widetilde{L}_{[\alpha',\alpha]},
\end{align*}
and
\begin{align*}
\{\widetilde{J}_\mu,\widetilde{L}_\alpha\} & =  \mathbb{J}^\mu \widetilde{L}_\alpha \\
& = -\intl_{\partial M} \hat{T}_\gamma\left[e \wedge (\mathbb{J}^\mu)_{e}\wedge d_{\tom} \alpha +  (\mathbb{J}^\mu)_{\tom}\wedge e \wedge [\alpha,e] \right]\\
& = \intl_{\partial M}\hat{T}_\gamma\Big[ d_{\tom}\mu\wedge e\wedge d_{\tom}\alpha - (p^\dag + \mathrm{B}^\dag)(\mu d_{\tom}e) \wedge d_{\tom}\alpha \\
&\phantom{\intl_{\partial M}\hat{T}_\gamma\Big[} - \mu\wedge \left(F_{\tom} + 3\Lambda e\wedge e\right)\wedge [\alpha,e]  - \mathrm{A}^\dag(\mu d_{\tom} e)\wedge [\alpha,e]\Big]\\
& =  \intl_{\partial M} \hat{T}_\gamma\Big[\mu\wedge d_{\tom}e\wedge d_{\tom}\alpha + [F_{\tom},\alpha] \wedge\mu\wedge e - \mu\wedge d_{\tom}e\wedge (\mathrm{id} + \mathrm{B})p'd_{\tom}\alpha\\ 
&\phantom{\intl_{\partial M}\hat{T}_\gamma\Big[}- \mu\wedge \left(F_{\tom} + 3\Lambda e\wedge e\right)\wedge [\alpha,e] - \mu\wedge d_{\tom}e \wedge \mathrm{A}([\alpha,e])\Big]\\
& = \intl_{\partial M}\hat{T}_\gamma\Big[p' d_{\tom} ep^\dag\left[\left((\mathrm{id} - p' - \mathrm{B}\circ p')d_{\tom}\alpha - \mathrm{A}([\alpha,e])\right)\wedge \mu\right] \\
&\phantom{\intl_{\partial M}\hat{T}_\gamma\Big[} + [\alpha,\mu]\wedge e \wedge \left(F_{\tom} + \Lambda e^2\right)\Big]\\
& = \intl_{\partial M} \hat{T}_\gamma\left[[\alpha, \mu]\wedge e\wedge F_{\tom} \right] + \intl_{\partial M}\hat{T}_\gamma\left[e \wedge d_{\tom} e \wedge H^{\alpha,\mu}\right]= \widetilde{J}_{[\alpha,\mu]} + \widetilde{L}_{H^{\alpha,\mu}},
\end{align*}
where we used that $\hat{T}_\gamma[3\mu[\alpha,e]ee] = \mathrm{Tr}[\mu[\alpha,e^3]] = - \mathrm{Tr}[[\alpha,\mu],e^3]$ and that through $\mathsf{W}_e^{-1}$, defined on the image of $p^\dag$ and $B^\dag$, we can define $H^{\alpha,\mu}\in\Omega^0(\partial M,\Wedge{2}\mathcal{V})$ such that 
\begin{equation}\label{Hmualpha}
H^{\alpha,\mu}\coloneqq \mathsf{W}_e^{-1}\left[\left((p - \mathrm{B}\circ p')d_{\tom}\alpha - \mathrm{A}([\alpha,e])\right)\wedge \mu\right].
\end{equation}

There is also another way of expressing $H^{\alpha,\mu}$. The Hamiltonian vector field $\mathbb{L}^{\alpha}$ must preserve the constraint $p (d_{\tom} e)=0$, so that, imposing $\mathsf{L}_{\mathbb{L}^\alpha}(p(d_{\tom}e))=0$ and recalling that the projection $p$ depends on $e$, one obtains
\begin{multline}\label{liederivativeofconstraint}
\mathsf{L}_{\mathbb{L}^\alpha}(pd_{\tom}e)=0\ \iff\ (\mathsf{L}_{\mathbb{L}^\alpha}p)(d_{\tom} e) + p[(\mathbb{L}^\alpha)_{\tom},e] +  pd_{\tom}[\alpha,e]=0 \\
\iff  p[p'(\mathbb{L}^\alpha)_{\tom},e] + p[p(\mathbb{L}^\alpha)_{\tom},e] + [p d_{\tom}\alpha,e] + p[p'd_{\tom}\alpha,e] + p[\alpha,d_{\tom}e] + (\mathsf{L}_{\mathbb{L}^\alpha}p)(d_{\tom} e) =0\\
\iff\ \left[p(\mathbb{L}^\alpha)_{\tom} + pd_{\tom}\alpha, e\right] = - (\mathsf{L}_{\mathbb{L}^\alpha}p)(d_{\tom} e) - p[\alpha,d_{\tom}e].
\end{multline}
Define now $\psi_{\alpha}\in\mathcal{W}_{(2,1)}$ to be $\psi_{\alpha}\coloneqq p(\mathbb{L}^\alpha)_{\tom} + pd_{\tom}\alpha$; as a consequence of \eqref{liederivativeofconstraint} we have
\begin{equation}
\psi_{\alpha}= - \phi_e^{-1}\left[(\mathsf{L}_{\mathbb{L}^\alpha}p)(d_{\tom} e) + p[\alpha,d_{\tom}e]\right],
\end{equation}
and notice that it vanishes on-shell (i.e. when $d_{\tom}e=0$). If we compute $\mathbb{L}^\alpha(\widetilde{J}_\mu)$, which on the one hand coincides with $\{\widetilde{J}_\mu,\widetilde{L}_\alpha\}$, we get
\begin{multline}
\mathbb{L}^\alpha(\widetilde{J}_\mu)= \intl_{\partial M} \hat{T}_\gamma\Big[ (\mathbb{L}^\alpha)_{e}\wedge  \mu\wedge (F_{\tom} + 3\Lambda e^2) + p'(\mathbb{L}^\alpha)_{\tom} \wedge d_{\tom}(\mu\wedge e) - p(\mathbb{L}^\alpha)_{\tom}\wedge \mu \wedge d_{\tom}e\Big]\\
=\intl_{\partial M} \hat{T}_\gamma\Big[ [\alpha,e] \wedge \mu \wedge (F_{\tom} + 3\Lambda e^2) - d_{\tom}\alpha\wedge d_{\tom}(\mu \wedge e) - \left(p(d_{\tom} \alpha) + p (\mathbb{L}^\alpha)_{\tom}\right) \wedge \mu\wedge d_{\tom}e\Big]\\
=\intl_{\partial M} \hat{T}_\gamma\Big[ [\alpha,\mu] \wedge e \wedge (F_{\tom} + \Lambda e^2) - p^\dag(\psi_\alpha \wedge \mu) p' (d_{\tom}e),
\end{multline}
from which we conclude
\begin{equation}
e\wedge H^{\alpha,\mu} = - p^\dag(\psi_{\alpha}\wedge \mu)
\end{equation}
so that $\mathsf{W}_e^{-1}(p^{\dag}(\psi_{\alpha}\wedge \mu))=H^{\alpha,\mu}$ and $e\wedge H^{\alpha,\mu}$ vanishes on shell.

Finally, we compute (we omit wedge symbols from the third line on)
\begin{multline}
\{\widetilde{J}_{\mu'},\widetilde{J}_{\mu}\}= \mathbb{J}^{\mu'}\widetilde{J}_\mu = \intl_{\partial M} \hat{T}_\gamma\Big[ (\mathbb{J}^{\mu'})_{e} \wedge\left[ \mu\wedge \left(F_{\tom} + 3\Lambda e^2\right) + \mathrm{A}^\dag(\mu\wedge p'(d_{\tom} e))\right] \\
 + p'(\mathbb{J}^{\mu'})_{\tom}\wedge\left[ d_{\tom}(e\wedge \mu) - (p^\dag + \mathrm{B}^\dag)(\mu\wedge p'(d_{\tom}e))\right] \Big]\\
=\intl_{\partial M}\hat{T}_\gamma\Big[-d_{\tom}\mu'\left[\mu(F_{\tom} + 3\Lambda e^2) + \mathrm{A}^\dag(\mu d_{\tom} e)\right] +\mathrm{A}^\dag(\mu' d_{\tom}e) \mathsf{W}_e^{-1}\left[(p^\dag + \mathrm{B}^\dag)(\mu d_{\tom} e)\right]\\
+\mu'(F_{\tom} + 3\Lambda e^2)d_{\tom}\mu + \mathrm{A}^\dag(\mu' d_{\tom}e)d_{\tom}\mu - \mu'(F_{\tom} + 3\Lambda e^2)\mathsf{W}_e^{-1}\left[(p^\dag + \mathrm{B}^\dag)(\mu d_{\tom} e)\right] \\
+ \mathsf{W}_e^{-1}\left[(p^\dag + \mathrm{B}^\dag)(\mu' d_{\tom}e)\right]\left[\mu(F_{\tom} + 3\Lambda e^2) + \mathrm{A}^\dag(\mu d_{\tom}e)\right] \Big]\\
=\intl_{\partial M}\hat{T}_\gamma\Big[ -\left[d_{\tom}\mu' \mu - \mu'd_{\tom}\mu\right] \left[F_{\tom} + 3\Lambda e^2\right] - \left[\mathrm{A}(d_{\tom}\mu') \mu - \mathrm{A}(d_{\tom}\mu)\mu'\right]p'd_{\tom} e \\
 + \left[\mathsf{W}_e^{-1}\left[(p^\dag + \mathrm{B}^\dag)(\mu' p'd_{\tom}e)\right]\mu - \mathsf{W}_e^{-1}\left[(p^\dag + \mathrm{B}^\dag)(\mu p'd_{\tom}e)\right]\mu'\right][F_{\tom} + 3\Lambda e^2]\\ 
 +\left[\mathrm{A}\left[\mathsf{W}_e^{-1}\left[(p^\dag + \mathrm{B}^\dag)(\mu'p'd_{\tom} e)\right]\right] \mu - \mathrm{A}\left[\mathsf{W}_e^{-1}\left[(p^\dag + \mathrm{B}^\dag)(\mu p'd_{\tom} e)\right]\right] \mu'\right]p' d_{\tom} e\Big]
 \end{multline}

We proceed by defining\footnote{We can do this because $\mathsf{W}_e^{(1,2)}$ is surjective.} $Z^\mu\in\Omega^1(\partial M,\Wedge{2}\mathcal{V})$ such that $p(Z^\mu)=0$ and $e\wedge Z^\mu=\mu F_{\tom}$, so that 
$$\mathsf{W}_e^{-1}\left((p^\dag + \mathrm{B}^\dag)(\mu' p'd_{\tom}e)\right)\mu F_{\tom} = (p^\dag + \mathrm{B}^\dag)(\mu' p'd_{\tom}e)Z^\mu=\mu' p'd_{\tom}e(p' + \mathrm{B})Z^\mu.$$
Thus, we can write:
 \begin{multline}
\{\widetilde{J}_{\mu'},\widetilde{J}_{\mu}\}=-\intl_{\partial M}\hat{T}_\gamma\Big[ d_{\tom}(\mu'\mu)(F_{\tom} + 3\Lambda e^2)\Big]\\
- \intl_{\partial M}\hat{T}_\gamma\Big[p^\dag\left[\mathrm{A}(d_{\tom}\mu')\wedge\mu + \mu'\wedge (p' + \mathrm{B})(Z^\mu + \Lambda \mu\wedge e)\right] p'(d_{\tom} e) \\
- p^\dag\left[\mathrm{A}\left[\mathsf{W}_e^{-1}\left((p^\dag + \mathrm{B}^\dag)(\mu'\wedge p'(d_{\tom} e))\right)\right]\wedge \mu\right]p'(d_{\tom} e) - \{\mu\leftrightarrow\mu'\}\Big]\\
= 6\Lambda\intl_{\partial M}\hat{T}_\gamma\left[ \mu' \wedge \mu \wedge e\wedge d_{\tom}e  \right]  + \intl_{\partial M}\hat{T}_\gamma\left[Y^{\mu'\mu} \wedge e\wedge d_{\tom}e  \right] \\
= \intl_{\partial M} \hat{T}_\gamma\left[X^{\mu'\mu} \wedge e \wedge d_{\tom} e\right] = \widetilde{L}_{X^{\mu'\mu}},
\end{multline}
where we used the Bianchi identity and defined 
\begin{multline}\label{Ymumu}
Y^{\mu'\mu}\coloneqq - \mathsf{W}_e^{-1}p^\dag\Big[ \mathrm{A}(d_{\tom}\mu')\wedge\mu + \mathrm{A}\left[\mathsf{W}_e^{-1}\left((p^\dag + \mathrm{B}^\dag)(\mu'p'd_{\tom} e)\right)\right]\wedge \mu \\
+ \mu'\wedge (p' + \mathrm{B})(Z^\mu + \Lambda \mu\wedge e) - \{\mu\leftrightarrow \mu'\}\Big],\end{multline}
so that $X^{\mu'\mu}=6\Lambda \mu'\wedge\mu + Y^{\mu'\mu}$, because of Lemma \ref{lemmatilde} together with the observation in Remark \ref{constraintequivalence}.
\end{proof}

\begin{corollary}\label{constraintscorollary}
In the symplectic manifold 
$$\mathcal{F}_{PCH}^{\partial}\longrightarrow \Omega^1_{nd}(\partial M,\mathcal{V}^\partial)$$ 
with symplectic form $\varpi^\partial_{PCH}$ as in Eq. \eqref{classicalboundaryform}, the vanishing locus $\mathcal C_{PCH}$ of the functions:
\begin{equation}
\mathbf{L}_\alpha=\intl_{\partial M} \hat{T}_\gamma[\alpha\wedge \mathbf{e}\wedge d_{\bom}\mathbf{e}];\ \ \mathbf{J}_\mu = \intl_{\partial M}\hat{T}_\gamma\left[ \mu \wedge\bem\wedge F_{\bom}\right] + \mathrm{Tr}\left[\Lambda\mu\wedge\bem^3\right],
\end{equation}
with $\mu\in\Omega^0(\partial M,\mathcal{V}^\partial)$ and $\alpha\in\Omega^0(M,\bigwedge^2\mathcal{V}^\partial)$ is coisotropic. We have the algebraic structure:
\begin{subequations}\begin{align}
\{\mathbf{L}_\alpha, \mathbf{L}_{\alpha'}\} & =  \mathbf{L}_{[\alpha',\alpha]}\\
\{\mathbf{J}_\mu,\mathbf{J}_{\mu'}\} & = \mathbf{L}_{X^{\mu\mu'}}\\
\{\mathbf{L}_\alpha,\mathbf{J}_\mu\} & = \mathbf{J}_{[\alpha,\mu]} + \mathbf{L}_{H^{\alpha,\mu}}
\end{align}\end{subequations}
where $X^{\mu\mu}$ and $H^{\alpha,\mu}$ are defined as in Theorem \ref{constraintstheorem}.
\end{corollary}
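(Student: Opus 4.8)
The statement is Theorem \ref{constraintstheorem} read through the symplectomorphism of Proposition \ref{Prop:structuralcharacterisation}, so the plan is essentially a transport argument. First I would recall that $\Phi\colon\mathcal{F}^\partial_{PCH}\longrightarrow\mathsf{S}$ sends the pair $([\omega]_e,e)$ to $(\tom,e)$, with $\tom=\tom(\omega,e)$ the canonical representative of Lemma \ref{lemmatilde}, and that by Proposition \ref{Prop:structuralcharacterisation} it pulls the symplectic form $-\intl_{\partial M}\hat{T}_\gamma[e\,\delta e\,\delta\tom]$ of Theorem \ref{constraintstheorem} back to $\varpi^\partial_{PCH}$ of Eq. \eqref{classicalboundaryform}. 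The next step is to match the constraint functionals: interpreting $F_{\bom}$ as $F_{\tom}$ (consistent with $\bom=\tom$ under the choice of a complement of $\mathrm{ker}(\mathsf{W}^{(1,2)}_e)$), one has $\Phi^\ast\widetilde{J}_\mu=\mathbf{J}_\mu$ tautologically, while $\Phi^\ast\widetilde{L}_\alpha=\mathbf{L}_\alpha$ follows from Lemma \ref{lemmatilde} together with Corollary \ref{corollary}, which shows that $\bem\wedge d_\omega\bem$ — and hence $\mathbf{L}_\alpha$ — does not depend on the representative chosen in the class $[\omega]_e$, so that $\mathbf{L}_\alpha$ is a well-defined function on $\mathcal{F}^\partial_{PCH}$ equal to $\widetilde{L}_\alpha\circ\Phi$.

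Once the identification is in place, the conclusions are formal. A symplectomorphism intertwines the Poisson brackets of the two manifolds, so the relations \eqref{algstructure} of Theorem \ref{constraintstheorem} transport verbatim, with the same local functions $H^{\alpha,\mu}$ and $X^{\mu\mu'}$, yielding $\{\mathbf{L}_\alpha,\mathbf{L}_{\alpha'}\}=\mathbf{L}_{[\alpha',\alpha]}$, $\{\mathbf{L}_\alpha,\mathbf{J}_\mu\}=\mathbf{J}_{[\alpha,\mu]}+\mathbf{L}_{H^{\alpha,\mu}}$ and $\{\mathbf{J}_\mu,\mathbf{J}_{\mu'}\}=\mathbf{L}_{X^{\mu\mu'}}$. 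Likewise a symplectomorphism carries coisotropic submanifolds to coisotropic submanifolds, and the $\Phi$-preimage of the coisotropic submanifold $\mathcal{C}_{PCH}\subset\mathsf{S}$ produced by Theorem \ref{constraintstheorem} is precisely the common vanishing locus of the $\mathbf{L}_\alpha$ and $\mathbf{J}_\mu$; hence this vanishing locus is coisotropic in $\mathcal{F}^\partial_{PCH}$.

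There is no serious obstacle here, since all the computational content — the Hamiltonian vector fields, the constrained variation of $\tom$, and the closure of the bracket algebra — was already carried out in the proof of Theorem \ref{constraintstheorem}. The only point needing a little attention is the bookkeeping of the two equivalent presentations of the boundary phase space: verifying that the symplectic forms correspond under $\Phi$ (done in Proposition \ref{Prop:structuralcharacterisation}) and that the functionals $\mathbf{L}_\alpha$, $\mathbf{J}_\mu$, written in terms of the equivalence class $\bom$, are genuinely the push-forwards of $\widetilde{L}_\alpha$, $\widetilde{J}_\mu$ along $\Phi^{-1}$; with this in place the corollary is immediate.
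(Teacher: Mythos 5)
Your proposal is correct and follows essentially the same route as the paper: the paper's own proof is precisely the observation that $\widetilde{L}_\alpha$ and $\widetilde{J}_\mu$ are basic (i.e.\ $\widetilde{L}_\alpha=\pi^*\mathbf{L}_\alpha$, $\widetilde{J}_\mu=\pi^*\mathbf{J}_\mu$), so Theorem \ref{constraintstheorem} transports verbatim to $\mathcal{F}^\partial_{PCH}$; your phrasing via the symplectomorphism $\Phi$ of Proposition \ref{Prop:structuralcharacterisation} is just the equivalent dual formulation. Your citations of Lemma \ref{lemmatilde} and Corollary \ref{corollary} for the representative-independence of $\bem\wedge d_{\bom}\bem$ are apt, and the rest is, as you say, formal.
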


\begin{proof}
This follows automatically from Theorem \ref{constraintstheorem} and the projectability of the functions $\{\widetilde{J}_\alpha,\widetilde{J}_\lambda\}$ along $\pi\colon \widetilde{F}_{PCH}\longrightarrow\mathcal{F}^\partial_{PCH}$, after writing $\widetilde{L}_\alpha = \pi^*\mathbf{L}_\alpha$ and $\widetilde{J}_\mu = \pi^*\mathbf{J}_\mu$.
\end{proof}

\begin{remark}
The constraint algebra found in Theorem \ref{constraintstheorem} can be linked to the action of 4 dimensional diffeomorphism with the following observation. Since $\mu\in\Omega^0(\partial M, \mathcal{V}^\partial)$ it can be written (at least locally) as $\mu= \xi^n e_n + \iota_\xi \bem$, where we choose $e_n$ to be in the complement of $\mathrm{im}(\bem)$ in V.

We write $\mathbf{J}_\xi\equiv \mathbf{J}_{\iota_\xi\bem}$ and we can compute the Hamiltonian vector field in this case. Observing that $\Lambda\mu e^3\not=0$ only if $\mu=\xi^ne_n$ we get
$$\bem\wedge(\mathbb{J}_\xi)_{\bem}=\frac12 d_{\bom}\iota_\xi(\bem \wedge \bem);\ \  (\mathbb{J}_\xi)_{\bom} = -\iota_\xi F_{\bom}.$$
On the constraint surface, i.e. imposing $\bem d_{\bom}\bem$, we can conclude that $(\mathbb{J}_\xi)_{\bem} = -L_\xi^{\bom}\bem$ and $(\mathbb{J}_\xi)_{\bom} = -\iota_\xi F_{\bom}$, which reproduces the (covariant) action of vector fields on a connection and a co-frame as described in \cite{CS2,THESIS,CSS}.

If the principal bundle over $\partial M$ were trivial, or alternatively if we worked with the simply-connected cover of $SO(3,1)$, we could regard $\bom$ as a global one-form on the boundary, and make sense of $\iota_\xi\bom$. This would enable us to look at $\mathbf{D}_\xi\coloneqq\mathbf{J}_\xi + \mathbf{L}_{\iota_\xi\bom}$, which has the Hamiltonian vector field
\begin{equation}
\mathbb{D}_\xi= L_\xi\bem \pard{}{\bem} + L_\xi\bom \pard{}{\bom}.
\end{equation}
However, the bracket of two constraints $\mathbf{J}_\xi$ can be found with a straightforward computation to be (we drop $T_\gamma$ for simplicity)

\begin{align*}
\{\mathbf{J}_\xi, \mathbf{J}_{\xi}\} =\mathbb{J}_{\xi}\mathbf{J}_{\xi}&=\frac12\intl_{\partial M} \mathrm{Tr}\left[\iota_{\xi}\left( (\mathbb{J}_\xi)_{\bem}\wedge \bem\right) F_{\bom} + \iota_{\xi}(\bem\wedge \bem) d_{\bom} ( \mathbb{J}_\xi)_{\bom} \right]\\ 
&= \frac12 \intl_{\partial M} \mathrm{Tr}\left[ \iota_{\xi} d_{\bom} \iota_\xi (\bem\wedge \bem) F_{\bom} + \iota_\xi d_{\bom} \iota_{\xi} (\bem\wedge\bem) F_{\bom}\right]\\
&= \intl_{\partial M}\mathrm{Tr}\left[ \iota_{[\xi,\xi]}\bem \bem F_{\bom} - \frac12 d_{\bom} \iota_\xi \iota_{\xi}(\bem\wedge\bem) F_{\bom} + \frac12\iota_\xi\iota_\xi d_{\bom}(\bem\wedge\bem) F_{\bom}\right]\\
&= \mathbf{J}_{[\xi,\xi]} + \mathbf{L}_{\iota_\xi\iota_\xi F_{\bom}}.
\end{align*}
When $\mu=\xi^n e_n$, instead, the computation is given in Theorem \ref{constraintstheorem}. \end{remark}

Assuming that $g^\partial$ is nondegenerate, we can establish a relationship between the reduced phase space of PCH theory and of EH theory with the following.

\begin{theorem}\label{PCHtoEH}
The reduced phase space for Palatini--Cartan--Holst theory maps to that of Einstein--Hilbert, namely, denoting by $\mathcal{C}_{EH}\subset\mathcal{F}^\partial_{EH}$ the submanifold of canonical constraints for EH theory, and by $\underline{ \mathcal{G}}$ the coisotropic reduction of the zero locus of the function $\mathbf{L}$, we have the symplectomorphism 
\begin{equation}
\varphi\colon \underline{\mathcal{G}}\longrightarrow \mathcal{F}^\partial_{EH},
\end{equation}
and $\mathcal{C}_{EH}=\varphi\left( \mathcal{C}_{PCH}\right)$.
\end{theorem}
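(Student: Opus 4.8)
\emph{Overview.} The plan is to recognise $\mathbf{L}$ as an equivariant momentum map for the internal gauge group, so that $\underline{\mathcal{G}}$ is the associated Marsden--Weinstein quotient; to show that this quotient is symplectomorphic to $\mathcal{F}^\partial_{EH}$; and finally to identify the functions on $\underline{\mathcal{G}}$ induced by $\mathbf{J}_\mu$ with the ADM energy and momentum constraints, whence $\varphi(\mathcal{C}_{PCH})=\mathcal{C}_{EH}$.

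\emph{Step 1: momentum map and reduction.} Let $\mathfrak{g}^\partial=\Omega^0(\partial M,\Wedge{2}\mathcal{V}^\partial)$ be the Lie algebra of the internal gauge group $\mathcal{G}^\partial=\Gamma(\partial M,SO(\mathcal{V}^\partial))$, which acts on $\mathcal{F}^\partial_{PCH}$ by $\delta_\alpha\bem=[\alpha,\bem]$, $\delta_\alpha\bom=-d_{\bom}\alpha$. Reading off the Hamiltonian vector field of $\widetilde{L}_\alpha$ from the proof of Theorem~\ref{constraintstheorem} (namely $\bem\wedge(\mathbb{L}^\alpha)_{\bem}=\bem\wedge[\alpha,\bem]$ and $\bem\wedge(\mathbb{L}^\alpha)_{\tom}=-\bem\wedge d_{\tom}\alpha$) shows that it is precisely the generator of $\delta_\alpha$, so $\mathbf{L}\colon\mathcal{F}^\partial_{PCH}\to(\mathfrak{g}^\partial)^*$, $\alpha\mapsto\mathbf{L}_\alpha$, is a momentum map, and the relation $\{\mathbf{L}_\alpha,\mathbf{L}_{\alpha'}\}=\mathbf{L}_{[\alpha',\alpha]}$ of Corollary~\ref{constraintscorollary} is its (anti)equivariance. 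Since $g^\partial$ is nondegenerate, the pointwise stabiliser in $SO(\mathcal{V}^\partial)$ of a nondegenerate $\bem$ is trivial (it must fix the rank-three image of $\bem$ and, being of determinant one, also its nondegenerate orthogonal complement); hence $\mathcal{G}^\partial$ acts freely on $\mathbf{L}^{-1}(0)$, and, assuming the usual properness/slice conditions, $\underline{\mathcal{G}}=\mathbf{L}^{-1}(0)/\mathcal{G}^\partial$ is a smooth symplectic manifold, the coisotropic reduction of the first-class locus $\mathbf{L}^{-1}(0)$.

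\emph{Step 2: the locus $\mathbf{L}^{-1}(0)$ and the map $\varphi$.} On the model $\mathsf{S}$ of $\mathcal{F}^\partial_{PCH}$ (Proposition~\ref{Prop:structuralcharacterisation}) the equations $\mathbf{L}_\alpha=0$ for all $\alpha$ read $\bem\wedge d_{\tom}\bem=0$; combined with the defining relation $p\,d_{\tom}\bem=0$ of $\mathsf{S}$ and the injectivity of $\mathsf{W}_e^{(2,1)}$ on the complement $\mathcal{C}_{(2,1)}$, this forces $d_{\tom}\bem=0$, so $\mathbf{L}^{-1}(0)=\{(\tom,\bem):d_{\tom}\bem=0\}$. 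By Lemma~\ref{lemmakerintersect}, in the nondegenerate case $[\cdot,\bem]$ is surjective onto $\Omega^2(\partial M,\mathcal{V}^\partial)$ with six-dimensional kernel contained in $\mathcal{C}_{(1,2)}$; hence over each $\bem$ the connections $\tom$ with $d_{\tom}\bem=0$ form a six-dimensional affine space, transverse to the equivalence $\sim$, carrying the extrinsic-curvature data. Quotienting the $\bem$-direction by $\mathcal{G}^\partial$ produces the boundary metric $g^\partial=\bem^*\eta$ on $\partial M$, and the residual six directions its conjugate momentum $\pi$; this defines $\varphi\colon\underline{\mathcal{G}}\to\mathcal{F}^\partial_{EH}$, $[(\tom,\bem)]\mapsto(g^\partial,\pi)$, a bijection by the preceding dimension count. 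To see that $\varphi$ is a symplectomorphism, one restricts $\varpi^\partial_{PCH}=-\intl_{\partial M}\hat{T}_\gamma[\bem\,\delta\bem\,\delta\tom]$ to $\mathbf{L}^{-1}(0)$ and uses the variation $d_{\tom}\delta\bem+[\delta\tom,\bem]=0$ of the torsion-free condition (as in the proof of Proposition~\ref{Prop:structuralcharacterisation}, where the $\bem\wedge\delta\bem\wedge(\cdot)$-terms drop) to trade $\delta\tom$ for $\delta g^\partial$ and $\delta\pi$, and one further checks that the remaining Barbero--Immirzi contribution to the reduced potential is the differential of a function, i.e.\ amounts to the $\gamma$-dependent canonical transformation to Ashtekar--Barbero-type variables, so that the reduced symplectic manifold is the Einstein--Hilbert boundary space $(\mathcal{F}^\partial_{EH},\omega_{EH})$ of \cite{CS1}.

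\emph{Step 3: descent of the constraints, and main obstacle.} Since $\mathcal{C}_{PCH}\subset\mathbf{L}^{-1}(0)$ is $\mathcal{G}^\partial$-invariant, $\varphi(\mathcal{C}_{PCH})$ is the submanifold of $\mathcal{F}^\partial_{EH}$ cut out by the functions induced by $\mathbf{J}_\mu$; on $\mathbf{L}^{-1}(0)$, where $d_{\tom}\bem=0$, the Holst piece of $\mathbf{J}_\mu$ is a combination of the $\mathbf{L}_\alpha$ (the ``topological'' term, inert on the torsion-free locus), so $\mathbf{J}_\mu$ reduces to $\intl_{\partial M}\mathrm{Tr}[\mu\wedge\bem\wedge F_{\tom}+\Lambda\,\mu\wedge\bem^3]$. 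Writing $\mu=\xi^n e_n+\iota_\xi\bem$ as in the remark following Corollary~\ref{constraintscorollary}: for $\mu=\iota_\xi\bem$ the Hamiltonian vector field computed there generates (covariantised) tangential diffeomorphisms of $(g^\partial,\pi)$, so $\mathbf{J}_{\iota_\xi\bem}$ descends to the ADM momentum constraint; for $\mu=\xi^n e_n$ one expands $F_{\tom}$ along $V=\mathrm{im}(\bem)\oplus\langle e_n\rangle$ and invokes the Gauss--Codazzi identities, which turn $\mathrm{Tr}[e_n\wedge\bem\wedge F_{\tom}]$ into the intrinsic scalar curvature of $g^\partial$ plus the quadratic extrinsic-curvature terms, so that $\mathbf{J}_{\xi^n e_n}$ descends to the ADM energy (Hamiltonian) constraint with cosmological constant $\Lambda$. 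Hence $\varphi(\mathcal{C}_{PCH})\subseteq\mathcal{C}_{EH}$, and equality holds because every boundary metric together with a symmetric momentum lifts, by surjectivity of $[\cdot,\bem]$ and nondegeneracy, to a tetrad with a torsion-free boundary connection. The genuinely delicate points are in Steps~2 and~3: proving that the reduced symplectic structure is exactly the Einstein--Hilbert one and controlling the Barbero--Immirzi dependence, and carrying out the normal $3+1$ splitting plus the Gauss--Codazzi bookkeeping that matches $\mathbf{J}_{\xi^n e_n}$ with the Hamiltonian constraint; both rely essentially on the nondegeneracy of $g^\partial$ and on comparison with the boundary structure of \cite{CS1}.
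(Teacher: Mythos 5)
Your proposal is correct and follows essentially the same route as the paper: Marsden--Weinstein reduction of the zero locus of $\mathbf{L}$ for the internal gauge action, identification of the quotient with the Einstein--Hilbert boundary phase space via $(g^\partial,\Pi)$ built from the induced triad and the residual components $\bA$ of the connection, and descent of $\mathbf{J}_\mu$ to the ADM momentum and Hamiltonian constraints. The points you flag as delicate are exactly the ones the paper resolves by explicit computation rather than by appeal to Gauss--Codazzi: it gauge-fixes $\bem^0_a=0$ using half of the $\mathbf{L}$-constraints, shows the $1/\gamma$-contribution to the reduced symplectic potential is a corner term plus a $\delta$-exact term, and verifies directly (via the Bianchi identity for $\bg_{\tem}$ and the symmetry condition on $K$) that the Holst part of $\mathbf{J}_\mu$ vanishes on $\mathcal{G}$ before matching $\mathbf{J}^\infty_{\lambda^0}$ and $\mathbf{J}^\infty_{\lambda^{\underline{k}}}$ with the energy and momentum constraints.
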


\begin{proof}
The strategy is as follows: we will first give a convenient description of the critical locus $\mathcal{G}=\mathbf{L}^{-1}(0)$ and then reduce it using the Marsden--Weinstein construction \cite{MW}, understanding $\mathbf{L}$ as the moment map for an $\mathfrak{so}(3,1)$ action on $\mathcal{F}^\partial_{PCH}$. We show that the reduced locus $\underline{\mathcal{G}}$ is symplectomorphic to the space of boundary fields of Einstein--Hilbert theory, $\mathcal{F}^\partial_{EH}$, and finally that the critical locus of the residual constraint inside $\underline{\mathcal{G}}$ maps to the reduced critical locus $\mathcal{C}_{EH}\subset\mathcal{F}^\partial_{EH}$.

Consider the induced tetrad on the boundary $\bem\colon T\partial M\longrightarrow \mathcal{V}$ and observe that the subspace $W=\mathrm{Span}\{\bem_a\}_{a=1\dots 3}$ is a space-like (resp. time-like) subspace of $V$, with respect to $\eta$, since $g^\partial=e^*\eta\vert_{\partial M}$ is nondegenerate. We can then use an algorithm to find an $\eta$-orthonormal basis of W - say $\{w_{\underline{i}}\}$, and the change of basis matrix $\tem_a^{\underline{i}}$ is invertible and defines a triad $\tem\colon T_x\partial M \longrightarrow W$. Moreover, we can complete $\{w_{\underline{i}}\}$ to a basis of V by a vector $\{w_0\}$ with norm $\eta_{00}=\pm 1$, depending on the signature of $\eta\vert_{W}$. 

Consider the function\footnote{Notice that this constraint coincides with the one previously considered upon picking $c=T_\gamma[\alpha]$.} 
$$\mathbf{L}_c\equiv\intl_{\partial M} \mathrm{Tr}[c\wedge  \bem \wedge d_{\bom} \bem].$$
This enforces the constraint $\bem\wedge d_{\bom}\bem = 0$. Let us write $\bom = \bg_{\tem} + \bA$, where $\bg_{\tem}$ is the connection compatible with the triad $\tem_{a}^{\underline{i}}$ such that $d_{\bg_{\tem}}\tem_a^{\underline{i}}=0$.

In the new basis $\{w_0,w_{\underline{i}}\}$ we have the splitting of the constraint
\begin{equation}\label{splitconstraints}
\begin{cases}
[\bem \wedge d_{\bom} \bem]^{\underline{i}\underline{j}} = 0\\
[\bem \wedge d_{\bom} \bem]^{0\underline{j}} = 0
\end{cases}
\end{equation}
and we can use $c\in\Omega^0(\partial M,\Wedge{2} \mathcal{V}^\partial)$ to \emph{gauge-fix} $\bem_a^0$. Consider $\mathbf{L}_{c^{0\underline{i}}}$. Its characteristic distribution is given by the Hamiltonian vector field $\mathbb{L}_{c^{0\underline{i}}}$, whose flow is readily computed by means of the equations: 
$$\dot{\bem}_a^0 = [c,\bem_a]^0 = c^{0\underline{i}}\bem_a^{\underline{j}} \eta_{\underline{i}\underline{j}};\ \ \ \ [\dot\bom]^{0\underline{i}} = [d_{\bom}c]^{0\underline{i}}.$$
We can then follow $\mathbb{L}_{c^{0\underline{i}}}$ up to time $t=1$, fix $\bem_a^{0}(1)=0$ and plug $c^{0\underline{i}}$ in the remaining equation to solve for $[\bom]^{0\underline{i}}$. We will denote the transformed connection at time $t=1$ by $[\underline{\bom}]^{0\underline{i}}$.

Since we have gauge-fixed $\bem_a^0=0$, Equation \eqref{splitconstraints} becomes
\begin{equation}\label{splitconstraints2}
\begin{cases}
\tem_{\langle a}^{\langle\underline{i}} \partial_b \tem_{c\rangle}^{\underline{j}\rangle} + \tem_{\langle a}^{\langle\underline{i}}\bom_b^{\underline{j}\rangle\underline{k}}\eta_{\underline{k}\underline{\ell}}\tem^{\underline{\ell}}_{c\rangle} =0\\
\tem^{\underline{j}}_{\langle a}\underline{\bom}_b^{0\underline{k}}\eta_{\underline{k}\underline{\ell}}\tem_{c\rangle}^{\underline{\ell}} = 0
\end{cases}
\end{equation}
and results in fixing $[\bom]_a^{\underline{k}\underline{l}}$ as a function of the triad $\tem_a^{\underline{i}}$ - we called that solution $\bg_{\tem}$ - plus a condition on the $0$-part of the connection $\underline{\bom}_a^{0\underline{i}}= : \bA_a^{\underline{i}}$, so the we have $\bem\wedge  d_{\bom}\bem = 0 \iff \bom= \bg_{\bem}^{\underline{i}\underline{j}} w_{\underline{i}}\wedge w_{\underline{j}} + \bA^{\underline{i}}w_0\wedge w_{\underline{i}}$ where
\begin{align}\label{gammaconstraint}
&[\bg_{\tem}]_a^{\underline{i}\underline{j}}\eta_{\underline{i}\underline{k}}\eta_{\underline{j}\underline{\ell}} =  \tem_a^{\underline{q}}\left(\binv_{\underline{q}}^c \binv_{\underline{k}}^d \eta_{\underline{p}\underline{\ell}} + \binv_{\underline{\ell}}^c \binv_{\underline{q}}^d \eta_{\underline{p}\underline{k}} - \binv_{\underline{k}}^c \binv_{\underline{\ell}}^d \eta_{\underline{p}\underline{q}} \right)\partial_c \tem_d^{\underline{p}}\\\label{Aconstraint}
&\bA_{\langle b}^{\underline{k}} \tem_{c\rangle}^{\underline{\ell}}\eta_{\underline{\ell}\underline{k}}=0.
\end{align}

If $\iota_{\mathbf{L}}\colon\mathcal{G}\longrightarrow \mathcal{F}^\partial_{PCH}$ denotes the inclusion of the zero-locus of $\mathbf{L}_{c}$, we have that $\alpha^\partial_{\mathcal{G}}\coloneqq\iota_{\mathbf{L}}^*\alpha^\partial_{PCH}$ is a one-form on $\mathcal{G}$, basic w.r.t.\ the reduction along the characteristic foliation $c\colon \mathcal{G}\longrightarrow \underline{\mathcal{G}}$, i.e. $\alpha^\partial_{\mathcal{G}}=c^*\alpha^\partial_{\underline{\mathcal{G}}}$, and $\varpi_{\underline{\mathcal{G}}}^\partial = \delta \alpha^\partial_{\underline{\mathcal{G}}}$ is the symplectic structure on $\underline{\mathcal{G}}$.

The fields on $\mathcal{G}$ are the $\tem_a^i$'s - the components of the triad $\tem$ - and whatever is left of $\bom$, namely $[\underline{\bom}]^{0\underline{i}}=\bA^{\underline{i}}$ satisfying \eqref{Aconstraint}, which can be interpreted as a (global) $\mathfrak{so}(3)$ connection\footnote{Possibly $\mathfrak{so}(2,1)$ if $\eta_{\underline{i}\underline{j}}$ has residual Lorentzian signature.}. Reduction with respect to the characteristic foliation, i.e. the remaining $SO(3)$ action, will then yield the map
\begin{equation}
\pi\colon \mathcal{F}^\partial_{PCH} \longrightarrow \underline{\mathcal{G}}.
\end{equation}

The PCH boundary one-form on $\mathcal{G}$ reads
\begin{align}\label{projform}
\alpha^\partial_{{\mathcal{G}}} = &\iota_{\mathbf{L}}^*\intl_{\partial M}\frac12\mathrm{Tr}\left[ - \bem\wedge\bem\wedge\delta T_\gamma[\bom] + \delta\left(\bem\wedge\bem \wedge T_\gamma[\bom]\right) \right] \\
= & \intl_{\partial M}\tem_a^{\underline{i}}\delta\tem_b^{\underline{j}}[\mathbf{A}]_c^{\underline{k}}\epsilon_{\underline{i}\underline{j }\underline{k}}\epsilon^{abc} + \frac{1}{\gamma} \tem^{\underline{i}}_a \delta\tem ^{\underline{j}}_b  [\mathbf{\Gamma}_{\tem}]_c^{\underline{k}\underline{l}}\eta_{\underline{i}\underline{k}}\eta_{\underline{j}\underline{l}}\epsilon^{abc}.\end{align}

With a long but straightforward computation one can show that, from the explicit expression of $\bg_{\tem}$ in Equation \eqref{gammaconstraint}, we have (round brackets mean symmetrisation of enclosed indices)
\begin{multline}
\epsilon^{afg} [\bg_{\tem}]_a^{\underline{i}\underline{j}}\eta_{\underline{i}\underline{k}}\eta_{\underline{j}\underline{\ell}} \tem_f^{\underline{k}}\delta\tem_g^{\underline{\ell}} \\
=\left( \partial_a\tem_d^{\underline{p}}\eta_{\underline{p}\underline{\ell}}\binv_{\underline{k}}^d + \partial_d\tem_a^{\underline{p}}\eta_{\underline{p}\underline{k}}\binv_{\underline{\ell}}^d - \partial_c\tem_d^{\underline{p}}\eta_{\underline{p}\underline{q}}\tem_a^{\underline{q}}\binv_{\underline{k}}^c\binv_{\underline{\ell}}^d\right)\tem_f^{\langle\underline{k}}\delta \tem_g^{\underline{\ell}\rangle}\epsilon^{afg}\\
=\partial_{a}\tem_{f}^{\underline{p}}\eta_{\underline{p}\underline{\ell}}\delta\tem_g^{\underline{\ell}}\epsilon^{afg} - \partial_{(a} g_{f)d}\binv_{\underline{k}}^d\delta \tem_g^{\underline{k}}\epsilon^{afg} + \partial_{(a}\tem_{f)}^{\underline{\ell}}\eta_{\underline{\ell}\underline{k}}\delta\tem^{\underline{k}}_g\epsilon^{afg} + \\
+ \partial_d g_{af} \binv^d_{\underline{\ell}}\delta\tem^{\underline{\ell}}_g\epsilon^{afg} - \tem_a^{\underline{p}}\eta_{\underline{p}\underline{k}}\partial_d\tem^{\underline{k}}_f\binv^d_{\underline{\ell}}\delta\tem^{\underline{\ell}}_g\epsilon^{afg}\\
= \partial_{a}\tem_{f}^{\underline{p}}\eta_{\underline{p}\underline{\ell}}\delta\tem_g^{\underline{\ell}}\epsilon^{afg} = \frac12 \partial_a\left(\delta\tem_f^{\underline{i}}\eta_{\underline{i}\underline{j}}\tem_g^{\underline{j}}\epsilon^{afg}\right),
\end{multline}
where we used $g_{ab}=\tem_a^{\underline{i}}\eta_{\underline{i}\underline{j}}\tem_b^{\underline{j}}$, so that, up to a $\delta$-exact term,
\begin{equation}
\alpha^\partial_{\mathcal{G}}=\intl_{\partial M}\tem_a^{\underline{i}}\delta\tem_b^{\underline{j}}[\mathbf{A}]_c^{\underline{k}}\epsilon_{\underline{i}\underline{j }\underline{k}}\epsilon^{abc} + \frac{1}{2\gamma} \partial_a\left(\delta\tem_f^{\underline{i}}\eta_{\underline{i}\underline{j}}\tem_g^{\underline{j}}\epsilon^{afg}\right) 
\end{equation}
does not depend on $\gamma$ if the boundary has no boundary of its own (a corner). 
Consider now the definition of the symmetric tensor:
\begin{equation}\label{defK}
[K]_{ab} \coloneqq \tem_{(a}^{\underline{i}} [\mathbf{A}]_{b)}^{\underline{j}}\eta_{\underline{i}\underline{j}} \iff [\mathbf{A}]_{b}^{\underline{j}} = \eta^{\underline{j}\underline{k}}\binv^a_{\underline{k}} [K]_{ab}.
\end{equation}
Notice that Equation \eqref{Aconstraint} is equivalent to $K_{\langle a b\rangle} =0$, and that the definition of $K$ descends to $\underline{\mathcal{G}}$, for the residual action on fields reads $\bem\mapsto [c,\bem]$ and $\underline{\bom}^{0\underline{i}} \mapsto [\bom,c]^{0\underline{i}}$, which then leaves the contraction $\bem^{\underline{i}}\mathbf{A}^{\underline{j}} \eta_{\underline{i}\underline{j}}$ invariant. Furthermore, set $g^\partial = \bem^*\eta \equiv \varphi (\bem)$, which also descends to $\underline{\mathcal{G}}$, and
\begin{equation}\label{Pidef}
\Pi = \frac{\sqrt{\mathsf{g}^\partial}^\partial}{2}\left(K - g^\partial\mathrm{Tr}_{g^\partial}[K]\right) \end{equation}
with $\mathsf{g}^\partial=|\mathrm{det}(g^{\partial})|$. We claim that
\begin{equation} 
\alpha^\partial_{\underline{\mathcal{G}}}= \varphi^* \intl_{\partial M} \Pi_{ab} \delta [g^\partial]^{ab}   = \pi^*\alpha^\partial_{EH},
\end{equation}
and $\varphi\colon \underline{\mathcal{G}} \longrightarrow \mathcal{F}^\partial_{EH}$, defining $(g^\partial,\Pi)$ in terms of $(\bem,\mathbf{A})$ is a symplectomorphism. As a matter of fact, imposing 
\begin{equation}
2\intl_{\partial M} \delta \tem_a^{\underline{i}}\tem_b^{\underline{j}}\bA_c^{\underline{k}}\epsilon_{\underline{i}\underline{j}\underline{k}}\epsilon^{abc} = \varphi^*\alpha^\partial_{EH}= \varphi^*\intl_{\partial M} -\Pi^{ab}\delta g^\partial_{ab} = -2\intl_{\partial M} \Pi^{ab}\delta \tem_a^{\underline{i}}\eta_{\underline{i}\underline{j}}\tem_b^{\underline{j}}
\end{equation}
(recall that $\delta [g^\partial]^{ab}= - [g^\partial]^{ac}\delta g^\partial_{cd}[g^\partial]^{db}$), we have that 
$$\Pi^{ab}\tem_b^{\underline{\ell}} = -\eta^{\underline{\ell}\underline{i}}\epsilon_{\underline{i}\underline{j}\underline{k}}\tem_b^{\underline{j}}\bA_c^{\underline{k}}\epsilon^{abc}.$$
Now, with a simple calculation:
\begin{multline}
\Pi_{cd} = -g^\partial_{a(c} \Pi^{ab}\tem_b^{\underline{\ell}}\eta_{\underline{\ell}\underline{m}}\tem^{\underline{m}}_{d)} = -g_{a(c}\epsilon^{agf}\tem_f^{\underline{j}}\epsilon_{\underline{j}\underline{i}\underline{k}}A^{\underline{k}}\tem^{\underline{i}}_{d)}=-g_{a(c}|\tem| \binv^a_{\langle\underline{i}}\binv^g_{\underline{k\rangle}}\bA^k_g \tem_{d)}^{\underline{i}}\\
=-\frac{|\tem|}{2}\left( g_{cd} \binv^g_{\underline{k}}\bA^{\underline{k}}_g - \tem_{(c}^{\underline{r}}\eta_{\underline{r}\underline{k}}\bA^{\underline{k}}_{d)}\right)=\frac{\sqrt{\mathsf{g}^\partial}}{2}\left(K_{cd} - g^\partial_{cd}\mathrm{Tr}_{g^\partial}K\right),
\end{multline}
since we have that $\mathrm{Tr}_{g^\partial}[K] = g^{cd}K_{cd}=\binv^g_{\underline{k}}\bA^{\underline{k}}_g$, and we used 
\begin{equation}\label{detformula}
|\bem| = \tem_c^{\underline{l}}\tem_a^{\underline{i}}\tem_b^{\underline{j}}\epsilon_{\underline{i}\underline{j}\underline{l}}\epsilon^{abc} \iff |\tem|\binv^b_{\underline{j}}  = \tem_c^{\underline{l}}\tem_a^{\underline{i}}\epsilon_{\underline{i}\underline{j}\underline{l}}\epsilon^{abc}.
\end{equation}

To show that the vanishing locus of $\mathbf{J}_\lambda=\intl_{\partial M} \mathrm{Tr} \left[\lambda \bem T_\gamma[F_{\bom}]\right]$ in $\underline{\mathcal{G}}$ is isomorphic to that of the Einstein--Hilbert constraints, we start again from the decomposition $\bom_\gamma = [\bg_{\bem}]^{\underline{i}\underline{j}} w_i\wedge w_j + [\bA]^{\underline{i}}w_0\wedge w_i$ (valid on $\mathcal{G}$) where $\bg_{\bem}$ represents the spin connection compatible with the residual triad $\bem$ (after fixing $\bem_a^0=0$) and $\bA$ satisfies Equation \eqref{Aconstraint}. 

We can rewrite the constraint $\mathbf{J}_\lambda$ as
\begin{equation}
\mathbf{J}_\lambda = \intl_{\partial M} \lambda^i \bem^j F_{\bom}^{kl}\epsilon_{ijkl} + \frac{1}{\gamma}\lambda^i\bem^j F_{\bom}^{kl}\eta_{ik}\eta_{jl} = \mathbf{J}_\lambda^{\infty} + \frac{1}{\gamma}\intl_{\partial M}\eta(\lambda, [F_{\bom},\bem]).
\end{equation}
We claim that $\mathbf{J}_\lambda\vert_{\mathcal{G}}=\mathbf{J}^\infty_\lambda\vert_{\mathcal{G}}$. On $\mathcal{G}$, in fact, we can decompose $F_{\bom}=F_{\bg_{\tem}} + d_{\bg_{\tem}}\bA + \frac12 [\bA,\bA]$, and we have
\begin{multline}
\intl_{\partial M}\lambda^i\bem^j F_{\bom}^{kl}\eta_{ik}\eta_{jl} \Big|_{\mathcal{G}} = \intl_{\partial M}\lambda^{\underline{i}}\tem^{\underline{j}}[F_{\bg_{\tem}}]^{\underline{k}\underline{\ell}}\eta_{\underline{i}\underline{k}}\eta_{\underline{j}\underline{\ell}} + \lambda^i \bem^j[d_{\bg_{\tem}}\bA]^{kl}\eta_{ik}\eta_{jl} \\
+ \lambda^i\bem^j\eta_{00}\bA^{\underline{k}}\bA^{\underline{\ell}}\eta_{i\underline{k}}\eta_{j\underline{\ell}}\Big|_{\mathcal{G}}\\
=\intl_{\partial M} \lambda^i [F_{\bg_{\tem}},\tem]^{\underline{k}}\eta_{\underline{i}\underline{k}} + [d_{\bg_{\tem}}\lambda]^i \tem^{\underline{j}}\bA^{kl}\eta_{ik}\eta_{jl} + \eta_{00}\lambda^{\underline{i}}\bA^{\underline{k}}\eta_{\underline{i}\underline{k}} \tem^{\underline{j}}\bA^{\underline{\ell}}\eta_{\underline{j}\underline{\ell}} = 0,
\end{multline}
where we must use the Bianchi identity for $\bg_{\tem}$ on the first term, and Equation \eqref{Aconstraint} to show that every occurrence of $\eta(\tem_{\langle a},\bA_{b\rangle})$ vanishes. For clarity, we unwrap the second term 
$$[d_{\bg_{\tem}}\lambda]_a^i \tem_b^{\underline{j}}\bA_c^{kl}\eta_{ik}\eta_{jl}\epsilon^{abc} = \eta_{00}[d_{\bg_{\tem}}\lambda]_a^0 \tem_b^{\underline{j}}\bA_c^{\underline{\ell}}\epsilon^{abc}\eta_{\underline{j}\underline{\ell}}=0.$$

Now, the splitting induced by $w_0$ on $\bom$ applies to the curvature form as well, i.e. $F_{\bom}=[F_{\bom}]^{\underline{i}\underline{j}} w_i\wedge w_j + [F_{\bom}]^{0\underline{i}} w_0\wedge w_i$. It is easy to gather that we can decompose 
$$[F_{\bom}] = \left[[F_{\bg_{\tem}}]^{\underline{i}\underline{j}} + \frac12 \bA^{0\underline{i}}\bA^{0\underline{j}}\eta_{00}\right] u_i\wedge u_j + d_{\bg_{\tem}}\bA^{0\underline{i}}u_0\wedge u_i.$$

Consequently, we can rewrite the constraint $\mathbf{J}^\infty_\lambda$ on $\mathcal{G}$ as
\begin{equation}
\mathbf{J}^\infty_\lambda\Big|_{\mathcal{G}} = \intl_{\partial M} \lambda^0 \tem^{\underline{k}}\left[[F_{\bg_{\tem}}]^{\underline{i}\underline{j}} + \frac12\bA^{0\underline{i}}\bA^{0\underline{j}}\eta_{00}\right] \epsilon_{0\underline{k}\underline{i}\underline{j}} + \lambda^{\underline{k}}\tem^{\underline{j}}d_{\bg_{\tem}}\bA^{0\underline{i}} \epsilon_{\underline{k}\underline{j}0\underline{i}},
\end{equation}
which also splits in $\mathbf{J}^\infty_{\lambda^0}$ and $\mathbf{J}^\infty_{\lambda^{\underline{i}}}$. Observe that $\eta_{00}$ is the signature of the chosen internal direction: if $u_0$ is a time-like vector we will have $\eta_{00}=-1$. Furthermore, it is well known that the combination $\bem^{\underline{k}}[F_{\bg_{\bem}}]^{\underline{i}\underline{j}}\epsilon_{\underline{k}\underline{i}\underline{j}} = \sqrt{\mathsf{g}^\partial}R[\bem^*\bg_{\bem}]=: \sqrt{\mathsf{g}^\partial}R^\partial$, the Ricci scalar density associated with the Levi--Civita connection of the metric $g^\partial=\bem^*\eta$. Using this, in combination with \eqref{detformula} and from the definition of the symmetric tensor $K$ we have
\begin{multline*}
\mathbf{J}^\infty_{\lambda^0}\Big|_{\mathcal{G}} = \intl_{\partial M} \lambda^0\left[ \sqrt{\mathsf{g}^\partial}R^\partial + 2\eta_{00}|\tem| \binv^b_{\langle\underline{i}}\binv^c_{\underline{i}\rangle} \bA^{\underline{i}}_b\bA^{\underline{j}}_c \right] = \\
= \intl_{\partial M} \lambda^0\left[ \sqrt{\mathsf{g}^\partial}R^\partial + \eta_{00}|\tem| \left( \binv^b_{\underline{i}} \bA^{\underline{i}}_b \binv^c_{\underline{i}}\bA^{\underline{j}}_c + \binv^c_{\underline{i}} \bA^{\underline{i}}_b \binv^b_{\underline{j}} \bA^{\underline{j}}_c \right) \right] = \\
= \intl_{\partial M} \lambda^0\left[ \sqrt{\mathsf{g}^\partial} R^\partial + \eta_{00} |\tem|\left( [g^\partial]^{ab}K_{ab} [g^\partial]^{cd}K_{cd} - [g^\partial]^{ab}K_{bd}[g^\partial]^{dc}K_{fc}\right) \right]\\
=\intl_{\partial M} \lambda^0 \left[\sqrt{\mathsf{g}^\partial} R^\partial + \eta_{00} |\tem|\left(\mathrm{Tr}_{g^\partial}[K]^2 - \mathrm{Tr}_{g^\partial}[K^2]\right)\right],
\end{multline*}
and after using the definition\footnote{We define the expression $\mathrm{Tr}_{g^\partial}[\Pi^k]$ to be the trace of $\Pi^k(x)$ seen as an endomorphism of the tangent space at $x$, e.g. $\mathrm{Tr}_{g^\partial}[\Pi^2]=[g^\partial]^{ab}\Pi_{bc}[g^\partial]^{cd}\Pi_{da} = \Pi^a_b\Pi^b_a$.} of $\Pi$ given in \eqref{Pidef}, we can easily gather that $\mathbf{J}^\infty_{\lambda^0}$ reduces to the \emph{Hamiltonian constraint} in Einstein--Hilbert theory \cite{CS1}, for a space--time boundary whose normal vector has signature $\eta_{00}$:
\begin{equation}
\mathbf{J}^\infty_{\lambda^0}\Big|_{\mathcal{G}} =\varphi^* \intl_{\partial M} \lambda^0 \left[ \sqrt{\mathsf{g}^\partial}R^\partial - \eta_{00} \frac{1}{\sqrt{g^\partial}}\left(\mathrm{Tr}_{g^\partial}[\Pi^2] - \frac{1}{d-1}\mathrm{Tr}_{g^\partial}[\Pi]^2 \right)\right].
\end{equation}

With a similar computation we can show that
\begin{multline}
\mathbf{J}^\infty_{\lambda^{\underline{k}}}\Big|_{\mathcal{G}} = \intl_{\partial M} \lambda^{\underline{k}} [d_{\bg_{\tem}}]_b(\tem^{\underline{j}}_c \bA_a^{\underline{i}}\epsilon^{bca}) \epsilon_{\underline{k}\underline{j}0\underline{i}} = \intl_{\partial M} \lambda^{\underline{k}} [d_{\bg_{\tem}}]_b \left(|\tem| \binv^a_{\langle\underline{i}}\binv^b_{\underline{j}\rangle}\bA_a^{\underline{i}}\right) =\\
= \frac12\intl_{\partial M} \xi^{{f}} \tem_f^{\underline{k}} [d_{\bg_{\tem}}]_b \left(|\tem|  \binv^c_{\underline{j}}\bA^{\underline{j}}_c \binv^b_{\underline{k}} - |\tem|\binv^c_{\underline{k}} \bA_c^{\underline{j}}\binv_{\underline{j}}^b\right) = \\
= \frac12\intl_{\partial M} \xi^{{f}} [d_{\bg_{\tem}}]_b \left(\sqrt{\mathsf{g}^\partial}\left(\delta_f^b \mathrm{Tr}_{g^\partial}K - [g^\partial]^{bc}K_{cf}\right) \right) = \intl_{\partial M} \xi^f [d_{\bg_{\bem}}]_b \left([g^\partial]^{bc}\Pi_{cf}\right),
\end{multline}
which coincides with the \emph{momentum constraint} of General Relativity in the Einstein--Hilbert formalism.
\end{proof}

\begin{remark}
An analysis similar to Theorem \ref{PCHtoEH} is given in \cite[Chapter 4]{Thiemann}, where the equivalence between the triadic formulation of the Hamiltonian theory of GR in the Ashtekar variables and the standard Einstein--Hilbert version is established. We have shown here how the canonical boundary theory, as induced by 4-dimensional Palatini--Cartan--Holst theory, is related to that of Einstein--Hilbert theory by means of Marsden--Weinstein reduction for the action of the Lie algebra $\mathfrak{so}(3,1)$ on the space of boundary fields. Splitting the $\mathfrak{so}(3,1)$ (boundary) connection into two $\mathfrak{so}(3)$ pieces, which in \cite{Thiemann} are interpreted as two separate entities, represents here the natural byproduct of considering a 3-dimensional subspace of the inner space $V$ as target of tetrads restricted to the boundary. Observe that the 4-dimensional interpretation presented in \cite[Page 134]{Thiemann} agrees with our construction, and the \emph{Sen--Ashtekar--Immirzi--Barbero} connection (denoted \emph{ibidem} by ${}^{(\beta)}\!A$) coincides with $T_\gamma[\bom]$ under $\beta=\gamma^{-1}$ and rescaling $K$ by $\gamma$.  In our analysis, the Barbero-Immirzi parameter governs the morphism $T_\gamma\colon\Wedge{2} V\longrightarrow \Wedge{2}V$ (cf. Lemma \ref{Lem:twistpairing}), and it disappears only on the critical locus $\mathcal{G}$. The Ashtekar formulation is recovered in our language by just reducing one-half of the constraints, namely $\mathbf{L}_{c^{0\underline{i}}}$, and at that stage the constraint $\mathbf{J}_\lambda$ will still depend on $\gamma$. For this reason, $\gamma$ might still play a significant role in the BFV analysis (which provides a cohomological resolution of the symplectic reduction), and on the induced structure on the corners (boundaries of $\partial M$); we will return on this in a forthcoming paper.
\end{remark}

\begin{remark}
Observe that the result holds so far only for $g^\partial$ nondegenerate. We showed how the reduced phase space of PCH is related to that of EH. In fact, on $\mathcal{F}^\partial_{PCH}$ we may consider the map $\bem \mapsto g^\partial \coloneqq \eta(\bem,\bem)$, which is $\mathfrak{so}(3,1)$ gauge invariant, the equations $\bem\wedge d_{\bom} \bem =0$ are used to fix six components of $\omega$, and the remaining 6 can be related to the extrinsic curvature in EH (the field $K$ in Theorem \ref{PCHtoEH}). The constraints $\bem \wedge T_\gamma[F_{\bom}]=0$ then reduce to the energy and momentum constraints in EH. Note that, since our phase space has 12 local degrees of freedom and we have 10 local constraints, the reduced phase space has 2 local degrees of freedom to be interpreted as the two polarizations of the graviton. As we mentioned in Remark \ref{remarkkerneldegrees}, when $g^\partial$ is degenerate we expect two more constraints to fix the extra degrees of freedom that do not allow us to project the constraints to $\mathcal{F}^{\partial}_{\partial M}$. Assuming that the resulting constraint submanifold $C_{\partial M}$ is still coisotropic (cf. section \ref{Sect:Constraintstheory}), we would have that in case of a light-like boundary the reduced phase space has no local degrees of freedom. This is an interesting exploration that will be considered elsewhere.
\end{remark}

\subsection{Half-Shell PCH}\label{clhalfshell}
In this section we will consider the explicit localisation to the \emph{half-shell} submanifold $d_\omega e=0$ \emph{via} a Lagrange multiplier $t\in \Omega^2(M, \iota^*\mathcal{V}^*)$, where we identify the fibres of $\mathcal{V}^*$ with $\bigwedge^3 V$. The resulting theory is that of Definition \ref{Def:HSPCH}, and we recall the action functional:
\begin{equation}\label{PCHconstraintaction}
S_{HS}=\int\limits_{ M} \frac12\hat{T}_\gamma\left[e\wedge e\wedge F_\omega\right] + \mathrm{Tr}\left[t\wedge d_\omega e\right] + \frac{\Lambda}{4}\mathrm{Tr}\left[e^4\right]
\end{equation}

\begin{theorem}\label{Theo:halfshellcl}
Half-Shell Palatini--Cartan--Holst theory is classically equivalent to Palatini--Cartan--Holst theory, and it has the symplectic space of boundary fields 
\begin{equation}
\mathcal{F}^{\partial}_{HS} = T^*\Omega_{\text{nd}}^1(\partial M,\mathcal{V}^\partial)
\end{equation}
The surjective submersion $\pi_{HS}\colon \mathcal{F}_{HS}\longrightarrow \mathcal{F}^{\partial}_{HS}$ has the explicit expression
\begin{equation}\label{projHS}
\pi_M\colon\begin{cases}
\bt=t + T_\gamma[\underline\omega - \omega]\wedge e\\
\bem=e
\end{cases}
\end{equation}
where $\underline\omega$ is the bulk Levi--Civita $e$-compatible connection restricted to the boundary (i.e. $\underline\omega=\omega(e)|_{\partial M}$), and the (exact) symplectic form in this local chart reads
\begin{equation}{
\varpi^\partial_{HS}=\int\limits_{\partial M} \mathrm{Tr} \left[\delta \bt \wedge \delta \bem\right]}
\end{equation}

Moreover, there exists a local symplectomorphism $\varphi\colon \mathcal{F}^{\partial}_{HS}\longrightarrow \mathcal{F}^{\partial}_{PCH}$ to the space of boundary fields of Palatini--Cartan--Holst theory of Theorem \ref{Theo:PCHClassical}.
\end{theorem}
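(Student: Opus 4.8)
The plan is to mimic the proof of Theorem~\ref{Theo:PCHClassical}: vary $S_{HS}$, read off the boundary one-form $\widetilde\alpha_{HS}$, pass to the pre-boundary two-form $\widetilde\varpi_{HS}=\delta\widetilde\alpha_{HS}$, determine its kernel, perform the presymplectic reduction, and finally compare with $\mathcal{F}^\partial_{PCH}$. For the classical equivalence I would first vary the three fields: the $t$-variation gives $d_\omega e=0$; the $\omega$-variation gives the Palatini--Cartan--Holst equation $d_\omega(e\wedge e)=0$ corrected by a term bilinear in $(t,e)$, namely the $\mathfrak{so}(\eta)$-equivariant contraction $\Wedge{3}\mathcal{V}\otimes\mathcal{V}\to\Wedge{2}\mathcal{V}$ applied to $(t,e)$; the $e$-variation gives $e\wedge T_\gamma[F_\omega]+\Lambda e^3\pm d_\omega t=0$. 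On the critical locus the first equation forces $\omega=\omega(e)$, hence $d_\omega(e\wedge e)=0$, so the second reduces to ``$[t,e]=0$'', which --- $e$ being nondegenerate, so that this contraction is injective in $t$ (compare the nondegeneracy arguments of Lemma~\ref{lemmakerintersect}) --- forces $t=0$; the third then becomes the PCH field equation $e\wedge T_\gamma[F_\omega]+\Lambda e^3=0$. Thus $EL_{HS}\cong EL_{PCH}$ via $t=0$, the symmetries act identically, and the two theories are classically equivalent.

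Next, integration by parts in $\delta S_{HS}$ yields $\widetilde\alpha_{HS}=-\frac12\intl_{\partial M}\hat{T}_\gamma[e\wedge e\wedge\delta\omega]+\intl_{\partial M}\mathrm{Tr}[t\wedge\delta e]=\widetilde\alpha_{PCH}+\intl_{\partial M}\mathrm{Tr}[t\wedge\delta e]$, the extra piece coming from $\mathrm{Tr}[t\wedge d_\omega\delta e]$, so that $\widetilde\varpi_{HS}=-\intl_{\partial M}\hat{T}_\gamma[e\wedge\delta e\wedge\delta\omega]+\intl_{\partial M}\mathrm{Tr}[\delta t\wedge\delta e]$. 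Reading off the kernel: pairing against the free variation $\delta t$ forces $X_e=0$; the $\delta e$-component then forces $X_t=\pm\,\mathsf{W}_e^{(1,2)}(T_\gamma[X_\omega])$; the $\delta\omega$-component is automatic once $X_e=0$. Hence, unlike in the PCH case where only $\mathrm{ker}(\mathsf{W}_e^{(1,2)})$ is null, \[ \mathrm{ker}(\widetilde\varpi_{HS})=\bigl\{\,(0,\ X_\omega,\ \pm\,\mathsf{W}_e^{(1,2)}(T_\gamma[X_\omega]))\ :\ X_\omega\in\Omega^1(\partial M,\Wedge{2}\mathcal{V}^\partial)\,\bigr\}, \] i.e.\ the entire connection direction is degenerate and it carries the Lagrange multiplier along with it.

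The kernel flow is $e\mapsto e$, $(\omega,t)\mapsto(\omega+X_\omega,\ t\pm\mathsf{W}_e^{(1,2)}(T_\gamma[X_\omega]))$, so a complete set of invariants is $\bem\coloneqq e$ together with $\bt\coloneqq t+T_\gamma[\underline\omega-\omega]\wedge e$, where $\underline\omega\coloneqq\omega(e)|_{\partial M}$ is the restriction of the bulk Levi--Civita connection and (being a function of the bulk co-frame alone) furnishes a canonical base point for the gauge-fixing; this produces the surjective submersion $\pi_{HS}$ of~\eqref{projHS}. Since $\mathsf{W}_e^{(1,2)}$ is onto $\Omega^2(\partial M,\Wedge{3}\mathcal{V}^\partial)$ by Lemma~\ref{Lemma:kerWe}, the reduced datum $\bt$ ranges freely there, and the $\mathrm{Tr}$-pairing $\Omega^2(\partial M,\Wedge{3}\mathcal{V}^\partial)\times\Omega^1(\partial M,\mathcal{V}^\partial)\to\mathbb{R}$ identifies $\mathcal{F}^\partial_{HS}$ with $T^*\Omega^1_{nd}(\partial M,\mathcal{V}^\partial)$. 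For the symplectic form I would substitute $\delta t=\delta\bt-\delta(T_\gamma[\underline\omega-\omega]\wedge e)$ into $\widetilde\varpi_{HS}$ and, after adding the $\delta$-exact correction $\delta(-\frac12\intl_{\partial M}\hat{T}_\gamma[e\wedge e\wedge(\omega-\underline\omega)])$ and using $d_{\underline\omega}e=0$ together with the symmetry of $T_\gamma$ (Lemma~\ref{Lem:twistpairing}), check that the remaining terms collapse to $\varpi^\partial_{HS}=\intl_{\partial M}\mathrm{Tr}[\delta\bt\wedge\delta\bem]$; equivalently, that $\widetilde\alpha_{HS}$ and the pullback of $\intl_{\partial M}\mathrm{Tr}[\bt\wedge\delta\bem]$ along $\pi_{HS}$ differ by a $\delta$-exact one-form.

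Finally, by the last part of Theorem~\ref{Theo:PCHClassical} one has $\mathcal{F}^\partial_{PCH}\cong T^*\Omega^1_{nd}(\partial M,\mathcal{V}^\partial)$ with momentum $\mathbf{t}_\gamma=-\bem\wedge T_\gamma[\bom]$ and $\varpi^\partial_{PCH}=\intl_{\partial M}\mathrm{Tr}[\delta\bem\wedge\delta\mathbf{t}_\gamma]$; since both boundary phase spaces are now presented as cotangent bundles of the same base carrying the canonical symplectic form (up to an overall fibre sign), the map $\varphi\colon(\bem,\bt)\mapsto(\bem,\ \mathbf{t}_\gamma=\mp\bt)$ --- equivalently $\bom=[\omega]_\bem$ with $\bem\wedge T_\gamma[\omega]=\mp\bt$, which is well defined because $\mathsf{W}_\bem^{(1,2)}$ is surjective --- is the asserted local symplectomorphism, ``local'' because the two cotangent identifications each rest on a choice of reference connection. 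The variational steps are routine; I expect the genuine obstacle to be the reduction: verifying that the flow-invariant momentum is exactly $\bt=t+T_\gamma[\underline\omega-\omega]\wedge e$ (so that the bulk Levi--Civita connection is the correct canonical section and $\bt$ really descends), and that the reduced two-form is cleanly $\intl_{\partial M}\mathrm{Tr}[\delta\bt\wedge\delta\bem]$ with no residual $\gamma$- or $\underline\omega$-dependent terms --- the analogue of, but more delicate than, the horizontality check in Theorem~\ref{Theo:PCHClassical}, precisely because here the null directions also act on $t$.
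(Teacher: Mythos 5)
Your proposal follows essentially the same route as the paper's proof: the same Euler--Lagrange analysis forcing $t=0$ on shell, the same kernel computation showing the entire connection direction is null and drags $t$ along via $X_t=\mathsf{W}_e^{(1,2)}(T_\gamma[X_\omega])$, the same flow-invariant combination $\bt=t+T_\gamma[\underline\omega-\omega]\wedge e$ anchored at the Levi--Civita connection, the same $\delta$-exact correction to make the one-form basic, and the same cotangent-bundle identifications yielding the local symplectomorphism with $\mathcal{F}^\partial_{PCH}$. The only differences are presentational (you exhibit the invariant directly where the paper integrates the kernel flow to time $1$, and your exact correction term is written slightly differently), so the proposal is correct and matches the paper's argument.
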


\begin{proof}
First of all let us analyse the Euler--Lagrange equations for the action functional $S_{HS}$. They read:
\begin{subequations}\label{ELHSPAL}\begin{align}
\delta\omega&: T_\gamma[d_\omega e \wedge e] - t\stackrel{\circ}{\wedge} e = 0\\
\delta e&: e\wedge T_\gamma\left[F_{\omega}\right] + d_\omega t + \Lambda e^3=0\\\label{HSC}
\delta t&: d_\omega e=0
\end{align}\end{subequations}
where $t\stackrel{\circ}{\wedge}e$ stands for $\frac{\delta}{\delta \omega} (t\wedge [\omega, e])=t_{\mu\nu}^{ijk}e_\rho^{m}\epsilon_{ijkl}\epsilon^{\mu\nu\rho\sigma}$ for all $\sigma$ space-time indices and $l,m$ internal indices. Enforcing the half shell constraint $d_\omega e=0$, which implies that $\omega$ is the Levi--Civita connection, represented in the tetrad formalism by the special connection $\omega_e$, we obtain $t=0$ from $t\stackrel{\circ}{\wedge}e=0$, and the Einstein equation in the tetrad formalism
\begin{equation}
e\wedge T_\gamma[F_{\omega_e}] +  \Lambda e^3=0
\end{equation}

From the computations in Theorem \ref{Theo:PCHClassical}, we get that the pre-boundary two-form $\widetilde{\varpi}_{HS}$ reads
\begin{equation}
\widetilde{\varpi}_{HS}=\int\limits_{\partial M} \hat{T}_\gamma\left[\delta e\wedge e \wedge \delta \omega\right] - \mathrm{Tr}\left[\delta t\wedge \delta e\right]
\end{equation}
and the kernel of this two-form is easily found to be:
\begin{subequations}\label{kerHS}\begin{align}
(X_t)&= (X_{T_\gamma\left[\omega\right]})\wedge e\\
(X_e)&=0
\end{align}\end{subequations}
This means that $\omega$ can be fixed using the vertical vector field
\begin{equation}
\mathbb{\Omega}=\X{T_\gamma\left[\omega\right]}\pard{}{T_\gamma\left[\omega\right]} + \X{T_\gamma\left[\omega\right]}\wedge e\pard{}{t}
\end{equation}
and $t$ is modified accordingly. Flowing along $\mathbb{\Omega}$ we can set $\omega$ to be a background connection $\underline\omega$, which we may eventually choose to be the restriction to the boundary of the solution $\omega_e$ of the half-shell constraint \eqref{HSC}, and this fixes $\X{T_\gamma\left[\omega\right]}=T_\gamma\left[\underline\omega - \omega_0\right]$. Then, by solving the straightforward differential equation $\dot{t}= T_\gamma[\underline\omega - \omega_0]\wedge e_0$:
\begin{equation}
t(s)=t_0 + T_\gamma[\underline\omega - \omega_0]\wedge e_0\ s
\end{equation}
we set $t(1)=t_0 + T_\gamma[\underline\omega - \omega_0]\wedge e_0$. This gives us the explicit map to the space of boundary fields $\mathcal{F}_{HS}^{\partial}$
\begin{equation}
\pi_M\colon\begin{cases}
\bt = t + T_\gamma [\underline\omega - \omega]\wedge e\\
\bem = e
\end{cases}
\end{equation}

Notice, however, that the pre-boundary one-form is not horizontal with respect to the kernel foliation defined by equations \eqref{kerHS}, as the generator $\mathbb{\Omega}=\pard{}{\omega}$ does not lie in the kernel of $\widetilde{\alpha}$. We can nevertheless modify $\widetilde{\alpha}$ by adding the exact term $\frac12\int_{M} d\,\hat{T}_\gamma\left[e\wedge e\wedge \omega\right] + \int_M\mathrm{Tr}\left[d(e\wedge t)\right]$ to the action \eqref{PCHconstraintaction}, yielding
\begin{equation}
\overline{\alpha}_{HS} = \widetilde{\alpha}_{HS} + \intl_{\partial M}\mathrm{Tr}\left[\frac12\delta(e\wedge e\wedge T_\gamma[\omega]) + \delta(e\wedge t)\right]
\end{equation}
and it is easy to gather that the following one form on the space of boundary fields
\begin{equation}
\alpha^\partial_{HS} = \int\limits_{\partial M}\mathrm{Tr} \left[\bt\wedge\delta \bem\right]
\end{equation}
will be such that
\begin{equation}
\overline{\alpha}_{HS}=\pi_M^*\alpha^\partial_{HS}
\end{equation}
and $\varpi_{HS}^\partial=\delta \alpha^\partial_{HS}$.

Finally, observe that the map   
$$\varphi\colon\Omega^2(\partial M, \Wedge{3}\mathcal{V}^\partial) \longmapsto \mathrm{Im}(\mathsf{W}_e^{(2)})\simeq \qsp{\mathcal{A}_{\iota^*P}}{\mathrm{Ker}(\mathsf{W}^{(2)}_e)}\equiv \mathcal{A}_{\iota^*P}^{red}$$ 
(which requires the choice of a reference connection) such that $\bt \mapsto T_\gamma[\bom]\wedge \bem$, extends to a local symplectomorphism between $\mathcal{F}_{HS}^{\partial}$ and $\mathcal{F}_{PCH}^{\partial}$.
\end{proof}

\begin{proposition}\label{HSnonlagrangian}
The projection to the space of classical boundary fields of the Euler--Lagrange equation for the action \eqref{PCHconstraintaction} is isotropic but not Lagrangian.
\end{proposition}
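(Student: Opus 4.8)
The plan is to compute $L_{HS}:=\pi_{HS}(EL_{HS})$ explicitly, to recognise it as a subset of the zero section of $\mathcal F^\partial_{HS}=T^*\Omega_{\text{nd}}^1(\partial M,\mathcal V^\partial)$ (which gives isotropy for free), and then to show that this inclusion is proper (which gives non-Lagrangianity). I would start from the Euler--Lagrange system \eqref{ELHSPAL}, treated as in the proof of Theorem \ref{Theo:halfshellcl}: $\delta t$ imposes $d_\omega e=0$, hence $\omega=\omega(e)$, the Levi--Civita connection of $e$; substituting this into $\delta\omega$ gives $t\stackrel{\circ}{\wedge}e=0$, hence $t=0$ by nondegeneracy of $e$; and $\delta e$ collapses to the tetrad Einstein equation. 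Therefore $EL_{HS}=\{(e,\omega(e),0)\ :\ e\text{ solves Einstein}\}$. Pushing this forward along \eqref{projHS}, and noting that on $EL_{HS}$ one has $t|_{\partial M}=0$ and $\omega|_{\partial M}=\omega(e)|_{\partial M}=\underline\omega$, so that the correction term $T_\gamma[\underline\omega-\omega|_{\partial M}]\wedge e|_{\partial M}$ vanishes, one obtains
\[
L_{HS}=\bigl\{(\bem,\bt)\in\mathcal F^\partial_{HS}\ :\ \bt=0,\ \bem=e|_{\partial M}\text{ for some bulk Einstein co-frame }e\bigr\}\subset\mathcal Z:=\{\bt=0\}.
\]

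Isotropy is then immediate: $\mathcal Z$ is the zero section of a cotangent bundle, hence Lagrangian, and since $\varpi^\partial_{HS}=\int_{\partial M}\mathrm{Tr}[\delta\bt\wedge\delta\bem]$ vanishes on every pair of vectors tangent to $\{\bt=0\}$ it vanishes on $TL_{HS}\subset T\mathcal Z$. For non-Lagrangianity I would prove $L_{HS}\subsetneq\mathcal Z$. The point is that the bulk equations do not pull back trivially to $\partial M$: restricting $e\wedge T_\gamma[F_{\omega(e)}]+\Lambda e^3=0$ together with $d_\omega e|_{\partial M}=0$ places genuine conditions on $\bem$, which — via the symplectomorphism $\varphi\colon\mathcal F^\partial_{HS}\to\mathcal F^\partial_{PCH}$ of Theorem \ref{Theo:halfshellcl} and then the reduction of Theorem \ref{PCHtoEH} — are precisely the Hamiltonian and momentum constraints of the boundary Einstein--Hilbert theory, which cut out a submanifold of positive codimension. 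Thus $L_{HS}$ is a proper isotropic submanifold of the Lagrangian $\mathcal Z$, and cannot itself be Lagrangian. The same observation gives the inequivalence with PCH advertised in the Introduction: $\varphi(L_{HS})$ is isotropic and contained in $\varphi(\mathcal Z)=\{\bom=0\}$, whereas $L_{PCH}=C_{PCH}$ is the coisotropic zero locus of the ten independent constraint families $\mathbf L_\alpha,\mathbf J_\mu$ (Corollary \ref{constraintscorollary}), which is not Lagrangian; a submanifold that is at once isotropic and coisotropic is Lagrangian, so $\varphi(L_{HS})\neq L_{PCH}$.

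The step I expect to be the real obstacle is the strict inclusion $L_{HS}\subsetneq\mathcal Z$: one must extract from the pulled-back bulk equations the conditions they impose on the boundary co-frame $\bem$ alone — the extrinsic-curvature data having been collapsed by the projection \eqref{projHS} — and check that these are nonvacuous, i.e.\ that $\mathcal Z\setminus L_{HS}\neq\varnothing$. Everything else (identifying $EL_{HS}$, seeing that $\bt=0$ on it, and deducing isotropy and non-Lagrangianity from a proper inclusion inside a Lagrangian) is routine.
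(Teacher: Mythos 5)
Your identification of $EL_{HS}$, the pushforward along \eqref{projHS}, and the isotropy argument (the projected locus sits inside the zero section $\{\bt=0\}$, on which $\varpi^\partial_{HS}$ manifestly vanishes) all coincide with the paper's proof. The gap is exactly where you flag it, and the repair you propose does not close it. You describe the residual condition as ``$\bem=e|_{\partial M}$ for some bulk Einstein co-frame $e$'', i.e.\ you existentially quantify over the bulk solution and hence over the extrinsic-curvature data. Taken at face value this makes the condition on $\bem$ essentially vacuous rather than of positive codimension: the Hamiltonian and momentum constraints of Theorem \ref{PCHtoEH} constrain the \emph{pair} $(g^\partial,\Pi)$, and for a generic boundary metric one can choose $\Pi$ solving them and then solve the Cauchy problem, so the set of boundary restrictions of Einstein co-frames is not cut down in $\Omega^1_{nd}(\partial M,\mathcal{V}^\partial)$ --- which would make the projected locus all of $\{\bt=0\}$ and hence Lagrangian, contradicting the statement you are trying to prove. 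What actually produces a nontrivial residual condition is that the chart \eqref{projHS} gauge-fixes $\omega$ to the background $\underline\omega$, so the projected locus is cut out by $\bt=0$ together with $\bem\wedge T_\gamma[F_{\underline\omega}]+\Lambda\bem^3=0$ \emph{at the fixed} $\underline\omega$: four local equations on $\bem$ alone, with no momentum left to absorb them. The paper then concludes not by a dimension count but by the more robust observation that the Poisson bracket of $\bt$ with this residual constraint is not proportional to the constraints; hence the vanishing ideal is not closed under the bracket, the locus is not coisotropic, and a fortiori not Lagrangian. If you insist on the proper-inclusion route, you still need essentially that same computation to certify that the residual equation is independent of $\bt=0$.

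A secondary point: your closing comparison with PCH identifies $L_{PCH}$ with the coisotropic constraint surface $\mathcal{C}_{PCH}$ of Corollary \ref{constraintscorollary}. These are different objects: the projection of the PCH Euler--Lagrange locus is the \emph{Lagrangian} submanifold $\{\bom=\underline\omega,\ \bem\wedge T_\gamma[F_{\underline\omega}]+\Lambda\bem^3=0\}$ (as in the remark following the proposition), sitting strictly inside the coisotropic $\mathcal{C}_{PCH}$. So the ``isotropic and coisotropic implies Lagrangian'' step compares the wrong objects; the inequivalence with PCH is instead seen directly by comparing $\varphi^*L^\partial_{HS}$ with $L^\partial_{PCH}$ as the paper does.
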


\begin{proof}
Consider the Euler--Lagrange equations for the Half-Shell-constrained Palatini--Cartan action as given in \eqref{ELHSPAL}. Their projection to the space of pre-boundary fields is given by:
\begin{equation}
\widetilde{\pi}(EL_{HS})\coloneqq \left\{(e, \omega, t)\in \widetilde{\mathcal{F}}_{HS}^{\partial}\ \big|\ 
\omega=\underline{\omega};\ t=0;\ e\wedge T_\gamma[F_{\underline\omega}] + \Lambda e^3=0\right\}
\end{equation}
Now, taking into account the projection to the space of boundary fields \eqref{projHS} we can gather that the projected critical locus reads
\begin{equation}\label{HSEL}
L_{HS}^\partial\coloneqq\pi_{HS}(EL_{HS})\coloneqq \left\{(\mathbf{t},\bem)\in\mathcal{F}_{HS}^{\partial}\ \big|\ \bt=0;\ \bem\wedge T_\gamma[F_{\underline\omega}] + \Lambda \bem^3=0 \right\}
\end{equation}
since $\bt=t + T_\gamma\left[\underline\omega - \omega\right]\wedge e$. It is easy to check that $L_{HS}^\partial$ is isotropic, as $\bt=0$ implies $\omega^\partial\big|_{EL^\partial}=0$.

Actually, $\bt=0$ defines a Lagrangian submanifold, which is then spoiled by equation $\bem\wedge T_\gamma[F_{\underline\omega}] + \Lambda \bem^3=0$. A way to see this is by explicitly checking that their Poisson bracket is not proportional to the constraints, and thus $L_{HS}^\partial$ fails to be a Lagrangian submanifold.
\end{proof}

\begin{remark}
From this we can see that the classical equivalence of theories for (closed) manifolds without boundary is a notion that requires refinement. Choosing the Palatini--Cartan--Holst action over its half-shell version might be justified by \emph{simplicity}, but it stops being so when boundaries are considered. 

Although the spaces of boundary fields in the PCH case and the Half-Shell version are locally symplectomorphic, the images of the critical loci on the boundary are not mapped into one another. As a matter of fact, the pullback of the locus \eqref{HSEL} yields
$$
\varphi^*L^\partial_{HS}\coloneqq\left\{(\bom,\bem)\in\mathcal{F}_{PCH}^{\partial}\ \big|\ \bom\wedge \bem = 0;\ 
\bem\wedge T_\gamma[F_{\underline\omega}] + \Lambda \bem^3=0\right\},
$$
whereas the (Lagrangian) submanifold for the PCH theory should be given by
$$
L_{PCH}^\partial = \{(\bom,\bem)\ \big|\ \bom=\underline\omega;\ \bem\wedge T_\gamma[F_{\underline\omega}] + \Lambda\bem^3=0\}.
$$

However, note that the two theories are indeed equivalent if we require $t$ to vanish on the boundary. In this case, the boundary analysis is clearly the same as in PCH. This vanishing boundary condition for Lagrange multipliers might be a general feature for constraints depending on derivatives of the fields.
\end{remark}

\end{document}